\documentclass[preprint,fleqn,authoryear,12pt]{elsarticle}
\date{October 29, 2026}
\usepackage{amsmath,amssymb,mathtools,amsthm,mathrsfs,bm,geometry,physics}
\usepackage{xcolor}
\usepackage{placeins}
\usepackage{float}
\usepackage[
  colorlinks=true,
  linkcolor=blue,
  anchorcolor=blue,
  citecolor=blue,
  urlcolor=blue
]{hyperref}
\usepackage{empheq}
\usepackage{graphicx}
\makeatletter
\g@addto@macro\ps@pprintTitle{%
  \let\XCM@titlefoot\@oddfoot
  \def\@oddfoot{\XCM@titlefoot\unskip}%
  \let\@evenfoot\@oddfoot}
\makeatother

\newtheorem{proposition}{Proposition}

\newtheorem{algorithm}{Algorithm}
\newcommand{\mathbm}[1]{\bm{\mathrm{#1}}}
\newcommand{\trans}{{\scriptscriptstyle\mathsf{T}}}
\newcommand{\bulkquantity}[1]{#1_{\mathrm b}}
\newcommand{\cohesivequantity}[1]{#1_{\mathrm{coh}}}
\newcommand{\crackstrain}[1]{#1}
\newcommand{\normalmode}{\perp}
\newcommand{\tangentialmode}{\parallel}
\newcommand{\measurerestrictionheight}{1.6ex}
\newcommand{\measurerestrictionwidth}{1.0ex}
\newcommand{\measurerestrictionthickness}{0.13ex}
\newcommand{\measurerestrictionglyph}[1]{%
  \hbox{\scalebox{#1}{\hbox{%
    \vrule height \measurerestrictionheight depth 0pt width \measurerestrictionthickness
    \vrule height \measurerestrictionthickness depth 0pt width \measurerestrictionwidth}}}}
\newcommand{\measurerestriction}{\mathbin{\mathchoice
  {\measurerestrictionglyph{1}}{\measurerestrictionglyph{1}}
  {\measurerestrictionglyph{0.7}}{\measurerestrictionglyph{0.5}}}}
\newcommand{\bmx}{\mathbm{x}}
\newcommand{\bmchi}{\mathbm{\chi}}
\newcommand{\bmX}{\mathbm{X}}
\newcommand{\bmu}{\mathbm{u}}
\newcommand{\bmw}{\crackstrain{\mathbm{w}}}
\newcommand{\w}{\crackstrain{w}}

\newcommand{\bmF}{\mathbm{F}} 
\newcommand{\bmFb}{\bulkquantity{\mathbm{F}}} 
\newcommand{\Fb}{\bulkquantity{F}} 
\newcommand{\Jb}{\bulkquantity{J}} 
\newcommand{\bmFreg}{\mathbm{F}_{\mathrm{reg}}} 
\newcommand{\bmFh}{\bmF^{h}} 
\newcommand{\bmFRh}{\bmF_{\mathrm R}^{h}} 
\newcommand{\bmFLh}{\bmF_{\mathrm L}^{h}} 

\newcommand{\bmC}{\mathbm{C}} 
\newcommand{\bmE}{\mathbm{E}} 
\newcommand{\bmR}{\mathbm{R}}
\newcommand{\bmP}{\mathbm{P}} 
\newcommand{\bmS}{\mathbm{S}} 
\newcommand{\bmT}{\mathbm{T}} 
\newcommand{\prescribedquantity}[1]{\bar{#1}}
\newcommand{\bodyforce}{\mathbm{b}}
\newcommand{\boundarytraction}{\prescribedquantity{\bmT}}
\newcommand{\bmPb}{\bulkquantity{\mathbm{P}}}
\newcommand{\bulkInvGradient}[1]{\mathbm{G}_{#1}} 
\newcommand{\bmPgeo}{\mathbm{P}_{\mathrm{geo}}}
\newcommand{\bmKomega}{\mathbm{k}_{\omega}} 
\newcommand{\bmn}{\mathbm{n}}
\newcommand{\bmnh}{\bmn^{h}}
\newcommand{\bmnreg}{\bmn_{\mathrm{reg}}}
\newcommand{\bmm}{\mathbm{m}}

\newcommand{\bmN}{\mathbm{N}}
\newcommand{\bmnu}{\mathbm{\nu}} 

\newcommand{\bmI}{\mathbm{I}} 
\newcommand{\bmIref}{\mathbm{I}_0} 
\newcommand{\tangentprojector}[1]{\bmI_{#1}^{\tangentialmode}}
\newcommand{\bmItan}{\tangentprojector{\bmn}} 
\newcommand{\jump}[1]{\left[\!\left[#1\right]\!\right]}
\newcommand{\jumpu}{\jump{\bmu}}
\newcommand{\positivepart}[1]{{#1}_{+}}
\newcommand{\positivepartexpr}[1]{\positivepart{\left(#1\right)}}
\newcommand{\Gc}{G_{\mathrm c}}
\newcommand{\kperp}{k^{\normalmode}}
\newcommand{\kparallel}{k^{\tangentialmode}}
\newcommand{\Tperpc}{T_{\mathrm c}^{\normalmode}}
\newcommand{\Tparallelc}{T_{\mathrm c}^{\tangentialmode}}
\newcommand{\Deltaperpch}{\Delta_{\mathrm{ch}}^{\normalmode}}
\newcommand{\Deltaparallelch}{\Delta_{\mathrm{ch}}^{\tangentialmode}}
\newcommand{\Deltaperp}{\Delta^{\normalmode}}
\newcommand{\Deltaparallel}{\Delta^{\tangentialmode}}
\newcommand{\wperp}{\w^{\normalmode}}
\newcommand{\wparallel}{\w^{\tangentialmode}}
\newcommand{\bmwparallel}{\bmw^{\tangentialmode}}
\newcommand{\Tperp}{T^{\normalmode}}
\newcommand{\Tparallel}{T^{\tangentialmode}}
\newcommand{\bmTparallel}{\bmT^{\tangentialmode}}
\newcommand{\bmq}{\mathbm{q}}
\newcommand{\bmQ}{\mathbm{Q}} 
\newcommand{\bmqparallel}{\bmq^{\tangentialmode}}
\newcommand{\bmz}{\mathbm{z}} 
\newcommand{\bmzparallel}{\bmz^{\tangentialmode}}
\newcommand{\lambdaperp}{\lambda^{\normalmode}}
\newcommand{\psib}{\bulkquantity{\psi}}
\newcommand{\psibhat}{\bulkquantity{\widehat{\psi}}} 
\newcommand{\psic}{\cohesivequantity{\psi}}
\newcommand{\psired}{\psi_{\mathrm{red}}} 
\newcommand{\psidir}{\psi_{\mathrm{dir}}} 
\newcommand{\bulkmodulus}{\bulkquantity{\mathbb A}} 
\newcommand{\algmodulus}{\mathbb A_{\mathrm{alg}}} 
\newcommand{\localcracktangent}{\mathbm h_{\mathrm{loc}}} 
\newcommand{\localKKTmatrix}{\mathsf M_{\mathrm{loc}}} 
\newcommand{\omegapotential}{\mathcal V_{\omega}} 
\newcommand{\omegatangent}{h_{\omega}} 
\newcommand{\quadratureweight}{W} 
\newcommand{\elementJacobian}{\mathsf J} 
\newcommand{\meshsize}{h_{\mathrm{mesh}}} 
\newcommand{\energyFE}{\energyred^{\mathrm{FE}}} 
\newcommand{\reducedstrength}{f} 
\newcommand{\bmNinit}{\bmN_\star} 
\newcommand{\varphif}{\varphi} 
\newcommand{\omegameasure}{\mu_{\omega}}
\newcommand{\omegareg}{\omega_{\mathrm{reg}}}
\newcommand{\omegasurf}{\omega_{\Gamma}}
\newcommand{\damageband}{\mathcal B_{\omega}^{h}} 
\newcommand{\energy}{\mathcal{E}} 
\newcommand{\firstvariation}[1]{\mathrm D_{#1}} 
\newcommand{\variationalderivative}[2]{\delta_{#2}#1} 
\newcommand{\energyf}{\energy_{\mathrm{frac}}} 
\newcommand{\energyred}{\energy_{\mathrm{red}}} 
\newcommand{\energyczm}{\energy_{\mathrm{CZM}}} 
\newcommand{\energyenvelope}{\energy_{\mathrm{env}}} 
\newcommand{\fractureprofile}{\mathcal A} 
\newcommand{\jumpSet}{J_u} 
\newcommand{\normalizedtraction}{\widehat T} 
\newcommand{\normalizedopening}{\widehat\Delta} 
\newcommand{\normalizedeffectivetraction}{\normalizedtraction_{\mathrm{eff}}}
\newcommand{\normalizedeffectiveopening}{\normalizedopening_{\mathrm{eff}}}
\newcommand{\damagecoupling}{\rho} 
\newcommand{\bandenergy}{\energy^{(3)}} 
\newcommand{\lagrangian}{\mathcal{L}} 
\newcommand{\psilambda}{\psi_{\lambdaperp}}
\renewcommand{\appendixname}{Appendix}
\begin{document}
\begin{frontmatter}
\title{A variational phase-field theory of cohesive fracture at finite deformation}
\author[address1,address2]{Ye~Feng\corref{cor1}}
\address[address1]{School of Aeronautics, Northwestern Polytechnical University, Xi'an, 710072, China}
\address[address2]{National Key Laboratory of Strength and Structural Integrity, Xi'an, 710072, China}
\cortext[cor1]{\textit{E-mail addresses:} fengye@nwpu.edu.cn}
\abstracttitle{Abstract}
\begin{abstract}
This paper formulates a kinematically and variationally consistent phase-field theory for cohesive fracture at finite deformation. The theory introduces an independent compliance-damage field, which accommodates both diffuse distributions and singular concentrations. The phase-field is thereby freed from the dual burden of describing crack geometry and material degradation. A rank-one additive split of the deformation gradient makes separation explicit through independent crack-strain and reference-normal fields. All fields are governed by a single principle of minimum potential energy, which consistently couples nonlinear bulk deformation, crack separation, and crack-plane rotation.  One-dimensional energy comparisons show that minimization favors Dirac-type concentration of compliance damage and crack strain. Strong displacement discontinuities thus emerge naturally at a finite phase-field length without being prescribed a priori, while the phase-field remains continuous. The concentrated energy recovers the exact bulk--interface structure of a sharp-crack cohesive zone model. General objective hyperelastic bulk potentials and admissible mixed-mode cohesive laws can therefore be prescribed independently, with the nominal traction--separation law unaffected by the bulk law or phase-field length. Diffuse damage can coexist with concentrated cracks, retaining deformation modes for bulk-energy relaxation during nucleation. Two- and three-dimensional examples validate the theory across large stretches, finite rotations, and complex crack paths. Notably, the computed near-tip Cauchy stress closely matches the {analytical $1/r$ crack-tip asymptotic solution} of \citet{long2011finite} outside the cohesive zone.
\end{abstract}

\keywordtitle{Keywords}
\begin{keyword}
Finite deformation \sep Cohesive fracture \sep Phase-field model \sep Strong discontinuity \sep Variational consistency
\end{keyword}
\end{frontmatter}

\section{Introduction}\label{sec:introduction}
Understanding how solids break has long been a central pursuit of mechanics. At its core lies the challenge of reconciling continuous bulk deformation with discontinuous separation across newly formed interfaces. \citet{griffith1921vi} laid the energetic foundation for brittle fracture by balancing elastic energy release against the cost of creating new surfaces. Although fracture mechanics has since achieved remarkable success in assessing structural safety, the onset of material failure, the selection of crack paths, and the interaction between local separation and overall deformation continue to raise fundamental questions. In soft materials in particular, large stretches, finite rotations, and constitutive nonlinearity near a crack place these questions beyond the scope of linear elasticity~\citep{bouchbinder2014dynamics}.

The role of nonlinearity in fracture extends beyond a correction to the near-tip fields. \citet{buehler2003hyperelasticity} showed that even a hyperelastic zone confined to the crack tip can govern local energy transport and crack speed. High-speed crack oscillations~\citep{chen2017instability} and supershear tensile propagation~\citep{wang2023tensile} further illustrate the richness of finite-deformation fracture dynamics. At the onset of fracture, recent studies of soft elastic solids link crack nucleation to instabilities involving large rotations and elastic softening~\citep{riccobelli2024elastic}. Together, these findings place the coupling among geometry, bulk response, damage, and separation at the center of finite-deformation fracture. Capturing this coupling calls for a theory that respects finite-deformation kinematics and gives local separation a clear physical meaning. This paper addresses quasistatic cohesive fracture as a starting point for this broader study.

Cohesive fracture theory provides a direct physical link between bulk deformation and crack separation. \citet{barenblatt1962mathematical} introduced cohesive forces acting near the crack edge to account for the local resistance to separation and its energetic cost. \citet{simo1993analysis}, \citet{armero1996analysis}, and \citet{oliver2003strong} established how strain concentration in a continuum relates to displacement jumps and how strong discontinuities enter finite-deformation kinematics. In this tradition, crack opening and sliding have their own kinematics and constitutive laws, while bulk deformation and interfacial separation remain coupled through traction balance and energy transfer. Preserving this kinematic and constitutive structure while allowing a field description to determine where cracks nucleate and how they evolve remains an enduring challenge in fracture mechanics.

Variational fracture and phase-field methods opened a new route to this challenge. \citet{FRANCFORT19981319} recast Griffith's energetic view as a minimization problem over deformation and an unknown crack set, bringing nucleation and path selection into a single variational description. By approximating the sharp crack set with a continuous field, \citet{bourdin2000numerical} obtained a formulation amenable to finite element computation. Cracks can thus nucleate, deflect, branch, and coalesce without prescribed paths. The significance of phase-field methods extends beyond a smooth representation of crack geometry: the coupling and energetic competition between fields can determine both bulk deformation and the loss of material integrity.

Cohesive phase-field models have made substantial progress at small strain. \citet{lorentz2011convergence} established a cohesive limit with distinct material and regularization scales; prescribed separation laws can be recovered through constitutive constructions~\citep{wu2017unified,feng2021endowing}. \citet{alessi2026gamma} proved one-dimensional $\Gamma$-convergence~\citep{dalmaso1993introduction,braides2002gamma} for a class including the model of \citet{feng2021endowing}. \citet{CONTI20161033,conti2024vectorial} established rigorous connections to cohesive energies on displacement jumps in multiple dimensions, including geometrically nonlinear elasticity.

Finite deformation raises a further mechanical difficulty. As a crack opens, the surrounding material unloads according to its nonlinear bulk response, while the separating faces resist opening and sliding according to an interface law. \citet{ortiz1999finite} used the mean deformation of the two crack faces to define the current interface and its spatial normal, giving opening and sliding a precise meaning on a rotating surface. In a field description, such as a phase-field model, neither the crack faces nor their orientation is available in advance. Separation and interface geometry must emerge together, and changes in orientation must enter the energy variation and stress. Replacing a small-strain bulk potential with a hyperelastic one does not by itself secure these relations. The challenge is to make nonlinear bulk deformation, separation, and crack rotation mechanically consistent within a common variational description.

To retain field-based crack tracking while making separation explicit, variational and nonlocal descriptions have introduced structured strains, eigendeformations, and auxiliary opening fields~\citep{FREDDI20101154,schmidt2009eigenfracture,verhoosel2013phase,ren2024physically}. In particular, the infinitesimal crack strain of \citet{feng2023unified} retains both crack direction and separation information. In diffuse formulations, however, distributed crack strain ties interface kinematics to the phase-field profile. At finite deformation, the nonlinear bulk response also couples to separation throughout the band, obstructing the independent specification of bulk and interfacial laws.

Can a field theory retain the crack-tracking and computational advantages of a diffuse phase-field while describing separation with the kinematic and energetic clarity of a sharp interface? Keeping the phase-field length finite would facilitate the numerical resolution of crack geometry, while strain concentration would confine the separation coupling to a lower-dimensional interface---a point in one dimension---rather than spreading it across the regularization band.

\citet{feng2025JMPS} offered a route by proposing a small-strain phase-field model, termed the $\Omega^2$ model. An independent compliance-damage variable relieves the phase-field of the dual burden of representing both crack geometry and material degradation. One-dimensional solutions exhibit Dirac-type concentrations of compliance damage and crack strain, producing genuine displacement jumps while the phase-field remains continuous. Interestingly, independent efforts to prescribe convex strength surfaces led to strong-discontinuity formulations almost concurrently~\citep{vicentini2025variational,bourdin2025variational}.\footnote{The small-strain $\Omega^2$ model \citep{feng2025JMPS} was submitted on June 1, 2025; the manuscript of \citet{vicentini2025variational} was submitted on June 13, 2025; and the preprint of \citet{bourdin2025variational} was made public on June 27, 2025.} 

These developments suggest that the time is ripe to seek a field theory of bulk deformation and interfacial degradation, grounded in general finite-deformation kinematics and a common variational principle.
Accordingly, this paper formulates a variational phase-field theory for cohesive fracture at finite deformation. The model regularizes strong-discontinuity kinematics~\citep{armero1996analysis,oliver2003strong} with independent crack-strain and reference-normal fields, together with independent compliance damage and phase-field variable. Mechanical equilibrium, crack evolution, and crack-plane rotation follow from a single energy minimization under irreversibility and non-interpenetration constraints. Notably, normal transport via Nanson's relation generates a geometric first Piola--Kirchhoff stress upon energy variation, preserving Cauchy-stress symmetry and angular momentum balance. Direct energy comparisons within a one-dimensional single-crack family show that a narrower damage band has lower energy. Energy minimization thus favors Dirac-type concentration of crack strain and compliance damage, producing a sharp displacement jump. The concentrated energy takes exactly the bulk--interface form of a sharp-crack cohesive zone model (CZM), allowing independent specification of general objective hyperelastic bulk potentials and cohesive softening laws within the adopted mixed-mode formulation. Two- and three-dimensional examples examine large stretches, rotations, and complex crack evolution. Notch-induced nucleation reveals diffuse damage coexisting with concentrated fracture, consistent with bulk-energy relaxation in cohesive fracture models~\citep{Braides1995,rodella2026sharp,bourdin2026strength}. In particular, the computed Cauchy stress closely matches the $1/r$ asymptotic field of \citet{long2011finite} outside the cohesive zone. The connection to nonlinear fracture mechanics thus reaches the local stress field as well as the energy structure.

Sections~\ref{sec:classical-kinematics}--\ref{sec:one-dimensional-analysis} develop the kinematics, model, and analysis; Sections~\ref{sec:numerical-implementation}--\ref{sec:conclusions} present algorithm, implementation, examples, and conclusions. Scalars are italic, vectors and second-order tensors bold upright, and fourth-order tensors double-struck. The symbols $\cdot$, $:$, and $\otimes$ denote the inner product, double contraction, and tensor product, respectively; the superscript $\mathsf T$ denotes transposition. The gradient $\nabla\equiv\nabla_{\bmX}$, divergence $\mathrm{Div}$, and Laplacian $\nabla^2$ act in the reference configuration.

\section{Classical strong-discontinuity kinematics}\label{sec:classical-kinematics}
We briefly review the classical strong-discontinuity kinematics developed by \citet{armero1996analysis} and \citet{oliver2003strong}. The present model adopts a distinct but closely related kinematic description, as discussed in Section~\ref{sec:fields}.

Consider the deformation and fracture of a solid $\mathcal B_0\subset\mathbb R^d$ ($d=1,2,3$) over the time interval $[0,T]$. The $(d-1)$-dimensional crack set in the reference configuration is denoted by $\Gamma_0(t)\subset\mathcal B_0$ and satisfies the irreversibility condition
\begin{equation}
	\Gamma_0(t+\Delta t)\supseteq\Gamma_0(t),
\end{equation} for all $t$ and $\Delta t$ such that $0\leq t<t+\Delta t\leq T$.
The oriented unit normal to the crack surface is $\bmN_0(\bmX)$, $\bmX\in\Gamma_0$. We use a total Lagrangian description. Unless otherwise stated, we omit the reference-configuration subscript and write $\bmN$, $\mathcal B$, and $\Gamma$ for $\bmN_0$, $\mathcal B_0$, and $\Gamma_0$, respectively.

Away from the crack surface, material points move according to
\begin{equation}
  \bmx  =\bmchi(\bmX,t) =  \bmX + \bmu(\bmX,t),\quad \bmX\in\mathcal{B}\setminus\Gamma,
\end{equation}
where $ \bmx $ is the spatial position of the material point $ \bmX $ at time $ t $, and $ \bmu $ is the displacement field.
For a piecewise-smooth deformation and a smooth crack surface, the limiting displacements on the two sides define the jump\footnote{Distributional and measure-theoretic definitions are given in \ref{app:distributional-notation}.}
\begin{equation}
\jumpu\!(\bmX,t) \coloneqq  \lim\limits_{\epsilon\to0^+}\left[\bmu(\bmX+\epsilon\bmN,t)-\bmu(\bmX-\epsilon\bmN,t)\right],\quad\bmX\in\Gamma.
\end{equation}

For this piecewise-smooth deformation, the distributional gradient of $\bmchi$ admits the additive decomposition
\begin{equation}
	\bmF\coloneqq\nabla\bmchi
	=
	\bmFreg+\jumpu\otimes\bmN\,\delta_{\Gamma}.
	\label{eq:strong-discontinuity-additive}
\end{equation}
Here $\bmFreg$ is the regular part of the distributional gradient. The surface Dirac distribution $\delta_\Gamma$ is supported on the crack surface. Equation~\eqref{eq:strong-discontinuity-additive} is understood in the sense of distributions; its definition and precise measure representation are given in \ref{app:distributional-notation}.

\citet{oliver2003strong} described strain localization in a band $\mathcal B_h$ of thickness $h$ containing the crack surface $\Gamma$. The surface Dirac distribution is regularized as $\delta_\Gamma^h$:
\begin{equation}\label{eq:classical-finite-width}
 \delta_\Gamma^h\coloneqq\frac1h\mathbf1_{\mathcal B_h},
 \qquad
 \bmFh\coloneqq\bmFreg+\jumpu\otimes\bmN\,\delta_\Gamma^h,
\end{equation}
where $\mathbf1_{\mathcal B_h}$ is the indicator function of the band, and the displacement jump and normal are extended into the band along its normals. As $h\to0^+$, $\delta_\Gamma^h$ converges to $\delta_\Gamma$ in the sense of distributions, linking finite-width localization to sharp-crack kinematics.

For orientation-preserving states with $\det\bmFh>0$ and $\det\bmFreg>0$, the rank-one inverse identity gives identical transported crack normals:
\begin{equation}\label{eq:classical-normal-transport}
 \bmnh
 =\frac{(\bmFh)^{-\trans}\bmN}{\lVert(\bmFh)^{-\trans}\bmN\rVert}
 =\frac{\bmFreg^{-\trans}\bmN}{\lVert\bmFreg^{-\trans}\bmN\rVert}
 =\bmnreg.
\end{equation}
The multiplicative representations and the derivation of normal transport are given in \ref{app:classical-multiplicative-kinematics}. Equation~\eqref{eq:classical-normal-transport} underlies the definition of the current crack normal in Section~\ref{sec:fields}.

\section{The finite-deformation
\texorpdfstring{$\Omega^2$}{Omega2} model}\label{sec:model}
This section develops the kinematics and variational formulation of the finite-deformation $\Omega^2$ model and derives the mechanical equilibrium and damage equations. The energy is defined for ordinary function fields, and its variation determines the crack strain $\bmw$, reference crack normal $\bmN$, and damage fields $\omega,\phi$. Localization and concentration are examined subsequently in Section~\ref{sec:one-dimensional-analysis}.

\subsection{Kinematic and damage fields}\label{sec:fields}
Consider a solid with reference configuration $\mathcal{B}$ and motion $\bmchi(\bmX,t)$. Its deformation gradient is decomposed additively as
\begin{equation}\label{eq:F}
\bmF\coloneqq\nabla\bmchi
	= \bmFb + \bmw\otimes\bmN,
\end{equation}
where $\bmF$ and $\bmFb$ are two-point tensors, and $\bmFb$ is the bulk deformation gradient. The spatial vector field $\bmw$ represents crack opening and sliding per unit reference length and thus has the interpretation of a strain. The unit material vector field $\bmN$ denotes the normal to a candidate crack plane in the reference configuration.\footnote{The fields $\bmw$ and $\bmN$ are defined throughout the body and determined variationally. Their rank-one contribution recovers the singular kinematics of Eq.~\eqref{eq:strong-discontinuity-additive} in the concentration limit; see \ref{app:model-concentration} and Section~\ref{sec:one-dimensional-analysis}.}
Following the normal-transport relation~\eqref{eq:classical-normal-transport}, we define the corresponding unit normal in the current configuration by
\begin{equation}\label{eq:current-crack-normal}
	\bmn=
	\frac{\bmF^{-\trans}\bmN}
	{\lVert\bmF^{-\trans}\bmN\rVert}=
	\frac{\bmFb^{-\trans}\bmN}
	{\lVert\bmFb^{-\trans}\bmN\rVert},
\end{equation}
where $\bmn$ is a spatial vector.

The crack strain $\bmw$ and reference crack normal $\bmN$ are independent local variables. Together with the total deformation gradient, they determine the bulk gradient $\bmFb=\bmF-\bmw\otimes\bmN$.
For a given $ \bmF $, the admissible fields $\bmw $ and $ \bmN $ satisfy two physical constraints:
\begin{equation}\label{eq:unp}
	\Jb\coloneqq\det\bmFb>0,\qquad \bmw\cdot\bmn\geq0,
\end{equation}
which enforce local orientation preservation of the bulk deformation and non-interpenetration across the crack interface, respectively.

Under the constraints~\eqref{eq:unp}, the matrix determinant lemma applied to Eq.~\eqref{eq:F} gives
\begin{equation}
	J\coloneqq\det\bmF
	=
	\Jb
	\left(1+\bmN\cdot\bmFb^{-1}\bmw\right)
	=
	\Jb
	\left[
		1+
		\lVert\bmFb^{-\trans}\bmN\rVert
		(\bmw\cdot\bmn)
	\right]
	\geq \Jb >0.
\end{equation}
Thus the total deformation also preserves orientation locally.

The model uses two irreversible damage fields:
\begin{equation}\label{eq:omega_split}
	\omega(\bmX,t)\in[0,+\infty),
	\qquad
	\phi(\bmX,t)\in[0,1],
	\qquad
	\dot{\omega}\geq0,
	\qquad
	\dot{\phi}\geq0,
\end{equation}
where a superposed dot denotes a time derivative. As in classical phase-field fracture, $\phi$ is continuous and regularizes the crack surface through a gradient term. Compliance damage $\omega$ is a nonnegative internal variable describing the growth of local compliance; it is treated as an ordinary function field in the main formulation. Its measure representation accommodates both a diffuse volume part and a concentrated interfacial part. The function spaces and detailed representation are given in \ref{app:model-concentration}.

\subsection{Energy densities and variational formulation}\label{sec:energy}

The total recoverable strain-energy density is
\begin{equation}\label{eq:split}
	\psi(\bmF,\bmw,\bmN,\omega)
	\coloneqq
	\psib(\bmFb)
	+
	\psic\bigl(\bmw,\bmn(\bmF,\bmN),\omega\bigr),
\end{equation}
where $\psib$ and $\psic$ are the recoverable bulk and cohesive energy densities, respectively. The current normal $\bmn(\bmF,\bmN)$ is defined by Eq.~\eqref{eq:current-crack-normal}.

The energy split~\eqref{eq:split} permits independent bulk and interfacial constitutive descriptions; its small-strain form is discussed by \citet{feng2023unified}. The independence concerns constitutive inputs; the effective bulk response can still depend on cohesive strength through minimization over crack strain and compliance damage. The resulting relaxation mechanism is illustrated in the small-strain setting by \citet[Appendix~C]{feng_Li2026}.

The bulk energy $\psib$ in \eqref{eq:split} may be any objective hyperelastic potential, satisfying
\begin{equation}\label{eq:bulk-objectivity}
 \psib(\bmR\bmFb)=\psib(\bmFb)
 \qquad\text{for all}\ \bmR\in\mathrm{SO}(d),
\end{equation}
where $d$ is the spatial dimension and $\mathrm{SO}(d)$ is the corresponding group of proper rotations.  \ref{app:bulk-constitutive} lists energies and bulk stresses for Neo--Hookean, generalized Neo--Hookean, Mooney--Rivlin, and Saint Venant--Kirchhoff materials.

To describe mixed-mode opening and sliding, we introduce the orthogonal projector onto the current crack plane:\footnote{For any spatial vector $\bmz$, its tangential component is
	$
		\bmzparallel\coloneqq\bmItan\bmz
		=\bmz-(\bmz\cdot\bmn)\bmn
 $, the orthogonal projection onto the current crack plane. It satisfies $\bmzparallel\cdot\bmn=0$.}
\begin{equation}
\bmItan\coloneqq\bmI-\bmn\otimes\bmn,
\end{equation}
where $\bmI$ is the spatial identity tensor. The normal and tangential components of $\bmw$ are
\begin{equation}\label{eq:wn}
 \wperp\coloneqq\bmw\cdot\bmn\geq0,
 \qquad
 \bmwparallel\coloneqq\bmItan\bmw.
\end{equation}
The normal opening strain $\wperp$ is nonnegative by the non-interpenetration constraint~\eqref{eq:unp}. The vector $\bmwparallel$ describes relative tangential sliding across the crack; its magnitude is denoted by
$
	\wparallel\coloneqq\lVert\bmwparallel\rVert.
$
Superscripts $\normalmode$ and $\tangentialmode$ denote components normal and tangential to the current crack plane, respectively.

Using the opening--sliding decomposition~\eqref{eq:wn}, we adopt the quadratic cohesive potential
\begin{equation}\label{eq:psic}
	\psic(\bmw,\bmn,\omega)=
	\frac{1}{2}\bmw\cdot\bmKomega(\bmn)\bmw	=
	\frac{\ell\kperp}{2\omega}
	\left[(\wperp)^2+\frac{(\wparallel)^2}{\beta^2}\right]
	,\quad\omega>0,
\end{equation}
where $\ell>0$ is the phase-field length in the reference configuration, $\kperp>0$ is the normal cohesive stiffness parameter, and $\beta>0$ is the pure-mode tensile-to-shear strength ratio. The second-order cohesive stiffness tensor $\bmKomega$ is defined in the current configuration by
\begin{equation}
	\bmKomega(\bmn)
	=
	\frac{\ell\kperp}{\omega}
	\left[
	\bmn\otimes\bmn
	+
	\frac{1}{\beta^2}
	\bmItan
	\right].
\end{equation}
The cohesive stiffness decreases as $\omega$ increases.

At $\omega=0$, Eq.~\eqref{eq:psic} is extended lower semicontinuously as
\begin{equation}\label{eq:psic-omega-zero}
	\psic(\bmw,\bmn,0)
	\coloneqq
	\begin{cases}
		0, & \bmw=\mathbm{0},\\
		+\infty, & \bmw\neq\mathbm{0}
	\end{cases}.
\end{equation}
Crack opening and tangential sliding are therefore absent before cohesive damage is activated, consistent with the initially rigid extrinsic cohesive model of \citet{camacho1996computational}.

The normal cohesive stiffness parameter $\kperp$ in Eq.~\eqref{eq:psic} is specified in terms of the critical pure-normal nominal traction and characteristic separation:
\begin{equation}
	\kperp
	\coloneqq
	\frac{\Tperpc}{\Deltaperpch},
	\qquad
	\Deltaperpch
	\coloneqq
	\frac{2\Gc}{\Tperpc},
\end{equation}
where $\Tperpc>0$ is the critical pure-normal nominal traction, $\Gc>0$ is the fracture-energy parameter, and $\Deltaperpch$ is the characteristic normal separation.
The corresponding tangential parameters follow from the pure-mode tensile-to-shear strength ratio $\beta$:
\begin{equation}
	\beta
	\coloneqq
	\frac{\Tperpc}{\Tparallelc},\qquad
	\Tparallelc
	=
	\frac{\Tperpc}{\beta},
	\qquad
	\Deltaparallelch
	\coloneqq
\frac{2\Gc}{\Tparallelc}
	=
	\beta\Deltaperpch,
	\qquad
	\kparallel
	\coloneqq
	\frac{\Tparallelc}{\Deltaparallelch}
		=
		\frac{\kperp}{\beta^2},
\end{equation}
where $\Tparallelc$ is the critical pure-tangential nominal traction, while $\Deltaparallelch$ and $\kparallel$ are the characteristic tangential separation and tangential cohesive stiffness.

The irreversible Griffith fracture-energy density in the finite-deformation $ \Omega^2 $ model is defined as
\begin{equation}\label{eq:fracture-energy-density}
	\varphif(\phi,\nabla\phi,\omega)
	=
	\frac{\Gc}{2\ell}
	\left[
		\Omega(\phi)^2+\ell^2\lvert\nabla\phi\rvert^2
	\right]
	+
	\frac{\Gc}{\ell}(1-\phi)^2\omega.
\end{equation}
The function $\Omega$ satisfies the following monotonicity, convexity, and normalization conditions on $0\leq\phi<1$:\footnote{Derivatives are taken on smooth branches, with one-sided limits at endpoints. For piecewise-smooth functions, the third condition means global convexity of $\Omega^2$.}
\begin{equation}\label{eq:Omega-conditions}
 \Omega(0)=0,\qquad
  \Omega'(\phi)>0,\qquad
  \left[\Omega(\phi)^2\right]''\geq0,
\qquad
 2\int_0^1\Omega(\phi)\,\dd \phi=1.
\end{equation}
The first two conditions ensure that the local Griffith term vanishes in the intact state and increases monotonically with $\phi$. The third ensures convexity in $\phi$. Normalization sets the work of complete separation of a concentrated cohesive interface to $\Gc$ per unit reference crack area. Following \citet{feng2021endowing}, we determine $\Omega$ from the target traction--separation law, grounding the characteristic function in a measurable constitutive relation; see Section~\ref{sec:cohesive-law}.

Integrating the recoverable strain energy and irreversible fracture energy over the reference configuration and adding the potential of the external loads gives
\begin{equation}\label{eq:E}
	\energy[\bmchi,\bmw,\bmN,\omega,\phi]
	=
	\int_{\mathcal{B}}
		\varphif(\phi,\nabla\phi,\omega)\, \dd V
		+
	\int_{\mathcal{B}}
		\psi(\bmF,\bmw,\bmN,\omega)\,\dd V
	+
	\mathcal{P}_{\mathrm{ext}}[\bmchi].
\end{equation}
For prescribed, deformation-independent body forces and nominal boundary tractions, the external potential $\mathcal{P}_{\mathrm{ext}}[\bmchi]$ is
\begin{equation}
	\mathcal{P}_{\mathrm{ext}}[\bmchi]
	=
	-\int_{\mathcal{B}}
\bodyforce\cdot\bmchi\,\dd V
	-\int_{\partial \mathcal{B}_\mathrm{N}}
	\boundarytraction\cdot\bmchi\,\dd A,
\end{equation}
where $\bodyforce$ is the prescribed spatial body force per unit reference volume, and $\boundarytraction$ is the prescribed spatial nominal traction per unit reference area on the Neumann boundary $\partial\mathcal{B}_\mathrm{N}$.

Quasistatic deformation and fracture are governed by the principle of minimum potential energy subject to irreversibility:
\begin{equation}\label{eq:min}
\boxed{\{\bmchi,\bmw,\bmN,\omega,\phi\}_t
	\in
	\operatorname*{arg\,min}_{\mathcal{A}_t}\,
	\energy[\bmchi,\bmw,\bmN,\omega,\phi].}
\end{equation}
The admissible set $\mathcal A_t$ incorporates the prescribed essential boundary conditions, the kinematic constraints~\eqref{eq:unp}, and $\lVert\bmN\rVert=1$. After time discretization, damage irreversibility requires $0\leq\omega_{\mathrm{prev}}\leq\omega$ and $0\leq\phi_{\mathrm{prev}}\leq\phi\leq1$, where the subscript $\mathrm{prev}$ denotes the previous accepted loading step.

\subsection{Mechanical equilibrium and crack direction under kinematic constraints}\label{sec:mechanics}
With the damage fields fixed, constrained variations in $\bmchi$, $\bmw$, and $\bmN$ give global equilibrium, local traction balance, and the crack-direction condition. The non-interpenetration constraint~\eqref{eq:unp} couples deformation and crack direction through the current normal $\bmn(\bmF,\bmN)$; its contribution must be included consistently in all three equations.

We now derive the first-order necessary conditions for the constrained minimization~\eqref{eq:min} at fixed $\omega$ and $\phi$. We introduce the Lagrange multiplier $\lambdaperp\leq0$ for non-interpenetration and define\footnote{Variations are taken within the orientation-preserving domain $\Jb>0$.}
\begin{equation}\label{eq:contact-lagrangian}
	\lagrangian
	[\bmchi,\bmw,\bmN,\omega,\phi;\lambdaperp]
	\coloneqq \energy +
	\int_{\mathcal{B}}
		\lambdaperp\wperp(\bmF,\bmw,\bmN)\,\dd V,
\end{equation}
where $\energy$ is the total potential~\eqref{eq:E} and $\wperp=\bmw\cdot\bmn$ is the normal opening strain defined in Eq.~\eqref{eq:wn}. The multiplier $\lambdaperp$ represents the nominal normal contact stress, negative in compression, as discussed in Section~\ref{sec:local-traction}.

The Karush--Kuhn--Tucker (KKT) necessary conditions are derived below on a smooth branch with $\omega>0$ \citep{bertsekas1999nonlinear}. Relations at $\omega=0$ are discussed in Section~\ref{sec:strength-surface}.

\subsubsection{Piola stress and global equilibrium}\label{sec:Piola-stress}
The KKT stationarity condition for $\bmchi$ is $\delta_{\bmchi}\lagrangian=0$, where $\delta\bmchi$ satisfies homogeneous essential boundary conditions. Under the smoothness assumptions above, integration by parts gives the equilibrium equations in the reference configuration:
\begin{subequations}\label{eq:mechanical-equilibrium}
	\begin{alignat}{2}
		&\mathrm{Div}\,\bmP + \bodyforce = \mathbm{0}
		&\qquad& \text{in }\,\mathcal{B},
		\label{eq:momentum-balance}\\[0.6em]
		&\bmP\bmnu = \boundarytraction
		&\qquad& \text{on }\partial\mathcal{B}_{\mathrm N},
		\label{eq:traction-boundary}
	\end{alignat}
\end{subequations}
where $\bmnu$ is the outward unit normal to the reference boundary and $\bmP$ is the total first Piola--Kirchhoff stress:
\begin{equation}\label{eq:first-Piola-stress}
	\boxed{\bmP
		\coloneqq \frac{\partial}{\partial\bmF}\left(\psi + \lambdaperp\wperp\right) =
		\bmPb + \bmPgeo.}
\end{equation}
The derivative with respect to $\bmF$ in Eq.~\eqref{eq:first-Piola-stress} holds $\bmw$, $\bmN$, $\omega$, and $\lambdaperp$ fixed. It retains the dependence of $\bmn(\bmF,\bmN)$ on $\bmF$. The term $\bmPb$ is the bulk first Piola--Kirchhoff stress. The geometric Piola stress $\bmPgeo$ arises from changes in the direction field $\bmn(\bmF,\bmN)$. Their expressions are derived in \ref{app:geometric-Piola-stress}:
\begin{subequations}\label{eq:stresses-breakdown}
	\begin{align}
		&\bmPb
		\coloneqq
		\frac{\partial\psib}{\partial\bmFb}, \\[0.6em]
		&\bmPgeo
		\label{eq:geometric-Piola-stress}
		\coloneqq \frac{\partial}{\partial\bmF}\left(\psic + \lambdaperp\wperp\right) 
		= \bmq \cdot \frac{\partial\bmn}{\partial\bmF} 
		= -\bmn \otimes \left(\bmF^{-1}\bmqparallel\right).
	\end{align}
\end{subequations}
The generalized spatial force $\bmq$ and its tangential projection $\bmqparallel$ are defined by
\begin{equation}\label{eq:q-vectors}
	\bmq
	\coloneqq \frac{\partial}{\partial\bmn}\left(\psic + \lambdaperp\wperp\right)
	= \left[ \frac{\ell(\kperp-\kparallel)}{\omega}\wperp + \lambdaperp \right]\bmw,
	\qquad		
	\bmqparallel
	\coloneqq \bmItan\bmq.
\end{equation}
In Eq.~\eqref{eq:q-vectors}, we first take $\partial_{\bmn}$ by treating $\bmn$ as an independent vector in the cohesive and contact terms. We then substitute the transport relation. Since $\lVert\bmn\rVert=1$, only the tangential projection $\bmqparallel$ contributes to the work associated with changes in the normal. Fixing $\bmN$ does not fix the current normal $\bmn$; the geometric stress can therefore remain nonzero during motion.

\subsubsection{Local traction balance and crack opening/closure}\label{sec:local-traction}
At fixed $\bmF$ and $\bmN$, the normal $\bmn=\bmn(\bmF,\bmN)$ is independent of $\bmw$, so that $\partial_{\bmw}\bmn=\mathbm0$. We hold the other independent fields and the contact multiplier fixed. KKT stationarity in $\bmw$, namely $\delta_{\bmw}\lagrangian=0$, then gives the local traction balance
\begin{equation}\label{eq:w-equilibrium}
\boxed{\bmKomega\bmw +\lambdaperp \bmn = \bmPb\bmN.}
\end{equation}
Equation~\eqref{eq:w-equilibrium} balances nominal tractions across the crack plane. The left-hand side combines recoverable cohesive traction and nominal normal contact stress; the right-hand side is the nominal traction exerted by the bulk material on that plane.

The KKT feasibility and complementarity conditions for non-interpenetration are
\begin{equation}\label{eq:KKT_lambda}
	\boxed{\wperp\geq0,\qquad\lambdaperp\leq0,\qquad \lambdaperp\wperp = 0.}
\end{equation}
This complementarity relation allows a nonzero contact multiplier only when the normal opening vanishes.

Since $\bmqparallel\cdot\bmn=0$ and $\bmF^{-\trans}\bmN\parallel\bmn$ imply $\bmPgeo\bmN=\mathbm0$, the total nominal traction on a plane with reference normal $\bmN$ satisfies
\begin{equation}\label{eq:traction-equilibrium}
 \bmT\coloneqq\bmP\bmN
 =\bmPb\bmN
 =\bmKomega\bmw+\lambdaperp\bmn.
\end{equation}
Equation~\eqref{eq:traction-equilibrium} ensures agreement between bulk and cohesive tractions. The geometric stress does not alter the nominal traction on this plane, but it contributes to global equilibrium and angular momentum balance.

To distinguish tensile opening from compressive contact and tangential sliding, we resolve the nominal traction relative to the current crack plane:
\begin{equation}\label{eq:traction-components}
 \Tperp\coloneqq\bmn\cdot\bmT,
 \qquad\bmTparallel\coloneqq\bmItan\bmT,
 \qquad\Tparallel\coloneqq\lVert\bmTparallel\rVert.
\end{equation}
Here $\Tperp$ is the normal nominal traction, $\bmTparallel$ is the tangential nominal traction vector, and $\Tparallel$ is its magnitude. For $\omega>0$, traction balance~\eqref{eq:traction-equilibrium} and contact complementarity~\eqref{eq:KKT_lambda} give the opening and closure conditions:
	\begin{itemize}
\item Open crack: $\wperp>0\iff\Tperp>0\implies\lambdaperp=0$;
\item Closed crack: $\wperp=0\iff\Tperp\leq0\implies\lambdaperp=\Tperp\leq0$.
	\end{itemize}
Thus $\lambdaperp$ is precisely the nominal normal contact stress on the closed branch.\footnote{Nanson's formula gives the current-to-reference area ratio $J\lVert\bmF^{-\trans}\bmN\rVert$. The normal contact stress per unit current area on the closed branch is therefore $\lambdaperp/[J\lVert\bmF^{-\trans}\bmN\rVert]$, again negative in compression.}

To establish angular momentum balance, we examine the symmetry of the total Cauchy stress. Objectivity of the constrained potential density $\psilambda\coloneqq\psi+\lambdaperp\wperp$, together with local equilibrium in $\bmw$, gives
\begin{equation}\label{eq:cauchy-symmetry}
 \mathbm{\sigma}\coloneqq J^{-1}\bmP\bmF^{\trans}
 =\mathbm{\sigma}^{\trans}.
\end{equation}
Local angular momentum balance therefore holds. The geometric term $\bmPgeo$ is essential to this symmetry; omitting it generally destroys the symmetry of the total Cauchy stress. A proof is given in \ref{app:angular-momentum}.

\subsubsection{Variational selection of crack direction}\label{sec:crack-direction}
With the other fields fixed, the reference crack normal $\bmN$ is selected by constrained energy minimization:\footnote{The material vector $\bmQ=\partial_{\bmN}(\psi+\lambdaperp\wperp)$ is conjugate to the reference direction and differs from the spatial vector $\bmq$ in Eq.~\eqref{eq:q-vectors}. Its explicit expression is given in Eq.~\eqref{eq:direction-conjugate-force}. At a smooth local minimum, $\bmN$ must satisfy
	$
	(\bmIref-\bmN\otimes\bmN)
	\bmQ
	=
	\mathbm{0}.
	$
Here $\bmIref$ is the reference identity tensor. At a stationary direction, the configurational force conjugate to $\bmN$ is normal to the unit sphere.}
\begin{equation}\label{eq:N-direct-minimization}
	\bmN
	\in
	\operatorname*{arg\,min}_{\bmN\in
	\mathcal{A}_{\bmN}(\bmF,\bmw)}
	\psi(\bmF,\bmw,\bmN,\omega),
\end{equation}
where the admissible set of candidate directions is
\begin{equation}
	\mathcal{A}_{\bmN}(\bmF,\bmw)
	\coloneqq
	\left\{
		\bmN\in\mathbb{S}^{d-1}
	\,\middle|\,
	\wperp(\bmF,\bmw,\bmN)\geq0,\ 
	\det(\bmF-\bmw\otimes\bmN)>0
	\right\},
\end{equation}
and $\mathbb{S}^{d-1}\coloneqq
\{\mathbm{a}\in\mathbb{R}^{d}\mid\lVert\mathbm{a}\rVert=1\}$ is the unit sphere. The use of minimum potential energy to determine crack direction in phase-field modeling can be traced to \citet{feng2023unified}.

From a multiscale viewpoint, crack strain can first be equilibrated locally at each candidate direction, leaving an energy in $\bmF$, $\bmN$, and $\omega$. Minimizing this energy over $\bmN$ preserves the same minimum-energy principle while simplifying both theoretical analysis and subsequent numerical implementation. For fixed $\bmF$ satisfying $\det\bmF>0$ and fixed $\omega$, we therefore define
\begin{equation}\label{eq:w-reduced-energy}
\psired(\bmF,\bmN,\omega)
\coloneqq
\min_{
\bmw\in\mathcal{A}_{\bmw}(\bmF,\bmN)}
\left\{
\psib(\bmF-\bmw\otimes\bmN)
+
\psic(\bmw,\bmn,\omega)
\right\},
\end{equation}
where the admissible set for a candidate $ \bmw $ is
\[
\mathcal{A}_{\bmw}(\bmF,\bmN)
\coloneqq
\left\{
\bmw\in\mathbb{R}^{d}
\,\middle|\,
\wperp(\bmF,\bmw,\bmN)\geq0,\ 
\det(\bmF-\bmw\otimes\bmN)>0
\right\}.
\]

The joint minimization then becomes\footnote{If the inner minimum is attained and $\psired$ is finite and lower semicontinuous in $\bmN$, compactness of the unit sphere guarantees an optimal direction.}
\begin{equation}\label{eq:N-reduced-minimization}
 \bmN\in\operatorname*{arg\,min}_{\bmN\in\mathbb S^{d-1}}
 \psired(\bmF,\bmN,\omega).
\end{equation}
Two aspects of direction selection require clarification. First, $\bmN$ represents the reference normal of the currently active fracture microplane; it need not coincide with the geometrical normal of an already formed macroscopic crack. The active direction can therefore be updated by energy minimization after damage initiation, while irreversibility is imposed on $\omega$ and $\phi$. Second, in the intact state $\omega=\phi=0$, $\bmw=\mathbm0$ makes the energy independent of $\bmN$. The initiation direction is then determined from the first-order one-sided limit of Eq.~\eqref{eq:N-reduced-minimization} as $\omega\to0^+$, as discussed in Section~\ref{sec:strength-surface}.

Unlike the small-strain linear-elastic case \citep{feng2023unified,feng20253d}, an optimal normal generally has no closed-form expression at finite deformation and must be found numerically. Section~\ref{sec:direction-search} gives the search procedure. Constitutive structure can nevertheless simplify the search substantially: the following proposition reduces a class of three-dimensional problems to a single angle in a principal plane.
\begin{proposition}[Principal-plane reduction of three-dimensional crack directions]\label{prop:crack-direction-plane-reduction}
Let $d=3$, and fix a deformation gradient $\bmF$ satisfying $\det\bmF>0$ and positive compliance damage $\omega>0$. Define $\psired$ by Eq.~\eqref{eq:w-reduced-energy}. If the bulk energy has the form
\begin{equation}\label{eq:bulk-plane-reduction-assumption}
 \psib(\bmFb)=\psibhat(I_1,\Jb),
 \qquad I_1\coloneqq\bmFb:\bmFb,
 \qquad \frac{\partial\psibhat}{\partial I_1}\geq0,
\end{equation}
then at least one globally optimal direction lies in the plane spanned by the directions of the largest and smallest right principal stretches. Specifically, let $\lambda_1\geq\lambda_2\geq\lambda_3>0$ be the principal stretches of $\bmF$ and let $\bmN_i$ be the corresponding right principal directions, namely the unit eigenvectors of $\bmF^\trans\bmF$. The search in Eq.~\eqref{eq:N-reduced-minimization} can then be restricted to
\begin{equation}\label{eq:N-principal-plane}
 \bmN(\theta)=\cos\theta\,\bmN_1+\sin\theta\,\bmN_3,
 \qquad 0\leq\theta\leq\frac\pi2.
\end{equation}
\end{proposition}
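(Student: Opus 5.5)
The plan is to show that $\psired(\bmF,\bmN,\omega)$ depends on $\bmN$ only through two scalar invariants, $s\coloneqq\lVert\bmF^{-\trans}\bmN\rVert$ and $p\coloneqq\lVert\bmItan\bmF\bmN\rVert$, and is nonincreasing in $p$ at fixed $s$. A convexity argument in the principal basis then shows that, at fixed $s$, $p$ is maximized by a direction in the $(\bmN_1,\bmN_3)$ plane.

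\emph{Step 1 (reparametrize $\bmw$).} Fix $\bmN$ and write $\bmw=\alpha\bmn+\mathbm t$ with $\mathbm t\cdot\bmn=0$, so that $\alpha=\wperp\geq0$ and $\lVert\mathbm t\rVert=\wparallel$. Since $\bmF^{-\trans}\bmN=s\bmn$, we have $\bmN\cdot\bmF^{-1}\bmw=\alpha s$. The determinant identity used for $J$ then gives
\[
\Jb=\det(\bmF-\bmw\otimes\bmN)=J(1-\alpha s).
\]
Also $\bmn\cdot\bmF\bmN=(\bmF^{-\trans}\bmN\cdot\bmF\bmN)/s=1/s$, so
\[
I_1=\lVert\bmF\rVert^2-\frac{2\alpha}{s}-2\,\mathbm t\cdot\bmItan\bmF\bmN+\alpha^2+\lVert\mathbm t\rVert^2 .
\]
The cohesive energy~\eqref{eq:psic} depends only on $\alpha$ and $\lVert\mathbm t\rVert$. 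The admissibility conditions $\alpha\geq0$ and $1-\alpha s>0$ involve only $\alpha$ and $s$.

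\emph{Step 2 (reduce to $(s,p)$).} At fixed $\tau\coloneqq\lVert\mathbm t\rVert$, the hypothesis $\partial\psibhat/\partial I_1\geq0$ means we should take $\mathbm t$ parallel to $\bmItan\bmF\bmN$, which makes $I_1$ smallest. This choice is allowed because $\mathbm t$ ranges over the whole plane $\bmn^\perp$, a rotation of $\mathbm t$ within that plane leaves $\psic$ and $\Jb$ unchanged, and $\bmItan\bmF\bmN$ lies in $\bmn^\perp$. Hence
\[
\psired=g(s,p)\coloneqq\min_{\alpha\geq0,\ \alpha s<1,\ \tau\geq0}\Bigl\{\psibhat\bigl(\lVert\bmF\rVert^2-2\alpha/s-2\tau p+\alpha^2+\tau^2,\;J(1-\alpha s)\bigr)+\tfrac{\ell\kperp}{2\omega}\bigl[\alpha^2+\tau^2/\beta^2\bigr]\Bigr\}.
\]
Here $p^2=\lVert\bmF\bmN\rVert^2-1/s^2$. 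For each admissible $(\alpha,\tau)$ the objective is nonincreasing in $p$, since $\tau\geq0$ and $\psibhat$ is nondecreasing in $I_1$. Taking the minimum preserves this, so $g$ is nonincreasing in $p$. I expect this step to be the main obstacle: one has to check carefully that the reduction loses nothing, namely the admissible set, the possibility that the minimum is not attained, and the degenerate case $\bmItan\bmF\bmN=\mathbm0$. Working with the infimum throughout avoids any attainment issue.

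\emph{Step 3 (principal-basis convexity).} Write $\bmN=\sum_i c_i\bmN_i$ and set $x_i=c_i^2$, so $(x_1,x_2,x_3)$ lies in the unit simplex. With $u_i=\lambda_i^{-2}$ we get
\[
s^2=\sum_i x_iu_i,\qquad \lVert\bmF\bmN\rVert^2=\sum_i x_i/u_i .
\]
Thus $(s^2,\lVert\bmF\bmN\rVert^2)$ is a convex combination of the points $(u_i,1/u_i)$, which lie on the convex curve $y=1/u$. For fixed $s^2\in[u_1,u_3]$, the largest attainable $y$ over the convex hull of these points lies on its upper boundary, which is the chord joining the extreme points $i=1$ and $i=3$. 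This follows from convexity of $1/u$: the chord from $u_1$ to $u_3$ dominates the chords $\{1,2\}$ and $\{2,3\}$ and the curve itself. Equal stretches are a trivial degenerate case.

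\emph{Conclusion.} Given any admissible $\bmN$, choose $\theta$ so that $\cos^2\theta\,u_1+\sin^2\theta\,u_3=s^2$. Then $\bmN(\theta)$ in~\eqref{eq:N-principal-plane} has the same $s$ and a $p$ at least as large. The signs of the $c_i$ do not affect $x_i$, so $\theta\in[0,\pi/2]$ suffices. Therefore $\psired(\bmF,\bmN(\theta),\omega)\leq\psired(\bmF,\bmN,\omega)$. Applying this to a global minimizer, whose existence follows under the footnoted assumptions, gives an optimal direction in the $(\bmN_1,\bmN_3)$ plane.
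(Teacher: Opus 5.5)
Your proposal is correct and follows essentially the same route as the paper's proof in Appendix~D. It uses the same invariants $s=\lVert\bmF^{-\trans}\bmN\rVert$ and $r=\lVert\bmItan\bmF\bmN\rVert$, the same Cauchy--Schwarz alignment of $\bmwparallel$ with $\bmItan\bmF\bmN$ to obtain a reduced function that is nonincreasing in $r$ at fixed $s$, and the same convexity of $a\mapsto 1/a$ in right-principal coordinates. Your upper-chord (convex-hull) argument is a geometric restatement of the paper's explicit reweighting $x_1'=x_1+\alpha x_2$, $x_2'=0$, $x_3'=x_3+(1-\alpha)x_2$, which keeps $s$ fixed and does not decrease $r$.
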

The proof is given in \ref{app:crack-direction-reduction}. This result reduces local direction selection to a single-angle search in three-dimensional calculations. Neo--Hookean elasticity satisfies condition~\eqref{eq:bulk-plane-reduction-assumption}. 

For general isotropic bulk laws, symmetry still restricts the search to two angular parameters in three dimensions. The following proposition establishes this reduction and explains the angular interval in Eq.~\eqref{eq:N-principal-plane}.

\begin{proposition}[Symmetry reduction of the crack-direction search]\label{prop:crack-direction-symmetry-reduction}
Let $d\in\{2,3\}$, with $\det\bmF>0$ and $\omega>0$ fixed. Assume an objective isotropic bulk energy, the cohesive energy~\eqref{eq:psic}, and the reduced energy $\psired$ defined by Eq.~\eqref{eq:w-reduced-energy}. Then the direction search in Eq.~\eqref{eq:N-reduced-minimization} can be restricted without changing the minimum energy:
\begin{equation}
	 \min_{\bmN\in\mathbb S^{d-1}}\psired(\bmF,\bmN,\omega)
	=\min_{\bmN\in\mathcal S_+(\bmF)}\psired(\bmF,\bmN,\omega),
\end{equation}
where $\mathcal S_+(\bmF)$ is the set of unit directions with nonnegative components along the right principal directions,
\begin{equation}\label{eq:direction-positive-orthant}
 \mathcal S_+(\bmF)\coloneqq
 \left\{\sum_{i=1}^{d}a_i\bmN_i\,\middle|\,
 a_i\geq0,\ \sum_{i=1}^{d}a_i^2=1\right\},
\end{equation}
and $\bmN_i$ are the right principal directions of $\bmF$. In two dimensions, $\mathcal S_+(\bmF)$ corresponds to
$\bmN(\theta)=\cos\theta\,\bmN_1+\sin\theta\,\bmN_2$, $0\leq\theta\leq\pi/2$. In three dimensions, it is a spherical octant in right-principal coordinates, which can also be parameterized by the two-dimensional simplex $x_i=a_i^2\geq0$, $\sum_i x_i=1$.
\end{proposition}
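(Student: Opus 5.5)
The plan is to show that $\psired(\bmF,\cdot,\omega)$ is invariant under reflections $\bmN\mapsto\bmS_i\bmN$, where $\bmS_i\coloneqq\bmIref-2\bmN_i\otimes\bmN_i$ is the reflection of the reference space across the plane orthogonal to the $i$-th right principal direction. These reflections generate the group that maps any unit direction into the closed orthant $\mathcal S_+(\bmF)$. Write the polar decomposition $\bmF=\bmR\bmU$ with $\bmU=\sum_i\lambda_i\bmN_i\otimes\bmN_i$, and set the spatial counterpart $\bms_i\coloneqq\bmR\bmS_i\bmR^\trans$. This is an orthogonal reflection of the current space, with $\det\bms_i=-1$. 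Since $\bmS_i$ commutes with $\bmU$, we get $\bmF\bmS_i=\bmR\bmU\bmS_i=\bmR\bmS_i\bmU=\bms_i\bmF$.

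Given an admissible $\bmw\in\mathcal A_{\bmw}(\bmF,\bmN)$, I will use the candidate $\tilde\bmw\coloneqq\bms_i\bmw$ for the direction $\tilde\bmN\coloneqq\bmS_i\bmN$. The verification proceeds in three steps.

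\emph{Bulk gradient.} We have
\[
\bmF-\tilde\bmw\otimes\tilde\bmN=\bmF-\bms_i(\bmw\otimes\bmN)\bmS_i=\bms_i(\bmF-\bmw\otimes\bmN)\bmS_i ,
\]
using $\bms_i\bmF\bmS_i=\bmF\bmS_i^2=\bmF$. The determinant is multiplied by $(-1)^2=1$, so orientation preservation is kept. An objective isotropic $\psib$ satisfies $\psib(\bmQ\bmFb\bmQ_0)=\psib(\bmFb)$ for rotations $\bmQ$ and $\bmQ_0$. I will handle the reflections by writing $\bms_i\bmA\bmS_i=(-\bms_i)\bmA(-\bmS_i)$. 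In three dimensions $-\bms_i$ and $-\bmS_i$ are proper rotations, so objectivity and isotropy apply directly. In two dimensions, or in general, I will argue instead via invariance of the principal invariants/singular values: $\bmA^\trans\bmA\mapsto\bmS_i\bmA^\trans\bmA\bmS_i$ is similar to the original and $\det$ is unchanged, and an isotropic objective energy depends only on these quantities. The main obstacle is this step: whether ``isotropic'' in the hypothesis includes improper orthogonal transformations. I expect to resolve it by the singular-value/$\det$ representation in both dimensions.

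\emph{Current normal and cohesive energy.} We have $\bmF^{-\trans}\tilde\bmN=\bmF^{-\trans}\bmS_i\bmN=\bms_i\bmF^{-\trans}\bmN$, because $\bmF\bmS_i=\bms_i\bmF$ gives $\bmS_i\bmF^{\trans}=\bmF^{\trans}\bms_i$ and hence $\bmF^{-\trans}\bmS_i=\bms_i\bmF^{-\trans}$. Since $\bms_i$ is orthogonal, $\tilde\bmn=\bms_i\bmn$. Then $\tilde\bmw\cdot\tilde\bmn=\bmw\cdot\bmn\geq0$, and $\lVert\bmItan\bmw\rVert$ is preserved. So $\psic$ in Eq.~\eqref{eq:psic} is unchanged and the non-interpenetration constraint is preserved.

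\emph{Conclusion.} Combining these gives $\psired(\bmF,\bmS_i\bmN,\omega)\leq\psired(\bmF,\bmN,\omega)$, and by the involution property the two are equal. Composing the reflections flips the signs of the negative coordinates $a_i=\bmN\cdot\bmN_i$, which maps any $\bmN$ to an element of $\mathcal S_+(\bmF)$ with the same reduced energy. This yields the equality of the two minima, including the case of infima if minima are not attained. The stated parameterizations are then immediate. In 2D, $a_1=\cos\theta$ and $a_2=\sin\theta$ with $\theta\in[0,\pi/2]$. In 3D, the map $x_i=a_i^2$ is a bijection from the octant onto the simplex. Repeated principal stretches cause no difficulty, since any orthonormal eigenbasis of $\bmF^\trans\bmF$ commutes appropriately with $\bmU$.
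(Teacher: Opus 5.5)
Your proof is correct and follows the paper's argument. The paper diagonalizes $\bmF$ in the right and left principal bases and applies the sign matrix $\mathbm D=\operatorname{diag}(\varepsilon_1,\ldots,\varepsilon_d)$ simultaneously to $\bmN$ and $\bmw$; this is exactly your pair $\bmS_i$ and $\bmR\bmS_i\bmR^\trans$ written without coordinates, and it reaches the same facts: $\bmFb\mapsto\mathbm D\bmFb\mathbm D$ and $\bmn\mapsto\mathbm D\bmn$, with $\Jb$, normal opening and sliding magnitude preserved. Your handling of the improper-reflection issue fills in the step the paper only states as ``preserves the bulk principal stretches'': you use $\psib(\mathbm A)=\hat\psi(\mathbm A^\trans\mathbm A)$ for $\det\mathbm A>0$, together with the fact that $\mathrm{SO}(d)$-invariance of a function of a symmetric tensor already implies invariance under reflections.
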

\begin{proof}
Choose the right and corresponding left principal directions as material and spatial bases, respectively, so that $\bmF$ is diagonal with positive entries. Let $\mathbm D=\operatorname{diag}(\varepsilon_1,\ldots,\varepsilon_d)$, where $\varepsilon_i\in\{-1,1\}$. Under the simultaneous transformations $\bmN\mapsto\mathbm D\bmN$ and $\bmw\mapsto\mathbm D\bmw$, one has $\bmn\mapsto\mathbm D\bmn$ and $\bmFb\mapsto\mathbm D\bmFb\mathbm D$. The transformation preserves the bulk principal stretches, $\Jb$, normal opening, and tangential sliding magnitude. It therefore preserves both energies and admissibility, and establishes a bijection between admissible crack strains $\bmw$ for the two candidate normals $\bmN$. Minimization over $\bmw$ preserves the reduced energy, so each component along a right principal direction may independently be chosen nonnegative.
\end{proof}

\subsection{Damage evolution and initiation}\label{sec:damage}
\subsubsection{Damage evolution}\label{sec:damage-evolution}

Having established local traction balance and direction selection, we next determine how the damage fields evolve under irreversibility. Since crack strain is equilibrated locally, the driving forces for $\omega$ and $\phi$ can be obtained from the reduced total potential
\begin{equation}\label{eq:reduced-total-energy}
 \energyred[\bmchi,\bmN,\omega,\phi]
 \coloneqq
 \int_{\mathcal B}
 \left[\psired(\bmF,\bmN,\omega)
       +\varphif(\phi,\nabla\phi,\omega)\right]\,\dd V
 +\mathcal P_{\mathrm{ext}}[\bmchi].
\end{equation}
At fixed $\bmchi$ and $\bmN$, the damage fields minimize $\energyred$ subject to irreversibility. Assume that the required functional derivatives exist and continuous-time evolution is sufficiently smooth. For $\phi<1$, the first-order necessary conditions have the rate form
\begin{subequations}\label{eq:damage-KKT}
 \begin{align}
 &\variationalderivative{\energyred}{\omega}
 =\frac{\Gc}{\ell}(1-\phi)^2
  +\frac{\partial\psired}{\partial\omega}\geq0,
 \qquad\dot\omega\geq0,
 \qquad\variationalderivative{\energyred}{\omega}\dot\omega=0,
 \label{eq:omega-KKT}\\[0.7em]
 &\variationalderivative{\energyred}{\phi}
 =\frac{\Gc}{\ell}
  \left[\Omega(\phi)\Omega'(\phi)
        -\ell^2\nabla^2\phi-2(1-\phi)\omega\right]\geq0,
 \qquad\dot\phi\geq0,
 \qquad\variationalderivative{\energyred}{\phi}\dot\phi=0.
 \label{eq:phi-KKT}
 \end{align}
\end{subequations}
For the damage field $\alpha\in\{\omega,\phi\}$, the first variation along a test perturbation $v$, with the other arguments fixed, is defined by
\[
 \firstvariation{\alpha}\energyred[v]\coloneqq
 \lim_{s\to0^+}\frac{\energyred[\alpha+sv]-\energyred[\alpha]}{s}.
\]
Under the regularity assumptions above, a smooth admissible $v$ that vanishes near the boundary satisfies $\firstvariation{\alpha}\energyred[v]=\int_{\mathcal B}(\variationalderivative{\energyred}{\alpha})v\,\dd V$, where $\variationalderivative{\energyred}{\alpha}$ is the interior variational-derivative density used in Eq.~\eqref{eq:damage-KKT}. Boundary contributions and the weak form for general test perturbations are given in \ref{app:damage-inequality}.

Phase-field irreversibility also restricts boundary variations. On a boundary portion with no prescribed phase-field value and $\phi<1$, the same variational conditions give the boundary complementarity relation \citep{pham2011approximate}:
\begin{equation}\label{eq:phi-boundary-KKT}
 \nabla\phi\cdot\bmnu\geq0,
 \qquad\dot\phi\geq0,
 \qquad(\nabla\phi\cdot\bmnu)\dot\phi=0,
\end{equation}
where $\bmnu$ is the outward unit normal in the reference configuration. The normal derivative therefore vanishes where the boundary phase-field is increasing, but may be positive where it is stationary.

\subsubsection{Strength surface and damage initiation}\label{sec:strength-surface}\label{sec:damage-initiation}
The onset of damage is most readily characterized by the traction on a candidate plane, which also determines how the driving force depends on crack direction. Recovering the bulk deformation gradient from the minimizer $\bmw(\bmF,\bmN,\omega)$ in Eq.~\eqref{eq:w-reduced-energy} gives
\begin{equation}\label{eq:reduced-traction}
 \bmT(\bmF,\bmN,\omega)=\bmPb(\bmFb)\bmN,
 \qquad\bmFb=\bmF-\bmw(\bmF,\bmN,\omega)\otimes\bmN.
\end{equation}
This traction agrees with $\bmP\bmN$ in Eq.~\eqref{eq:traction-equilibrium}; its normal and tangential components are defined as in Eq.~\eqref{eq:traction-components}.

We define the strength function in terms of the normal and tangential tractions as
\begin{equation}\label{eq:strength-function}
 \reducedstrength(\bmF,\bmN,\omega,\phi)\coloneqq
 \positivepartexpr{\frac{\Tperp}{\Tperpc}}^2
 +\left(\frac{\Tparallel}{\Tparallelc}\right)^2
 -(1-\phi)^2,
\end{equation}
where $\positivepart{(s)}\coloneqq\max\{s,0\}$ denotes the positive part of a scalar. The derivative of the reduced total potential with respect to compliance damage then reads
\begin{equation}\label{eq:reduced-strength-relation}
 \variationalderivative{\energyred}{\omega}
 =-\frac{\Gc}{\ell}\reducedstrength(\bmF,\bmN,\omega,\phi).
\end{equation}
The compliance-damage KKT conditions~\eqref{eq:omega-KKT} can therefore be written as
\begin{equation}\label{eq:omega-strength-surface}
 \reducedstrength\leq0,
 \qquad\dot\omega\geq0,
 \qquad\reducedstrength\dot\omega=0.
\end{equation}
Normal tension and tangential sliding drive damage, whereas normal compression is carried by the contact reaction. Equation~\eqref{eq:reduced-strength-relation} applies for $\omega\geq0$; its derivation is given in \ref{app:damage-inequality}.

Equation~\eqref{eq:omega-strength-surface} identifies $\reducedstrength=0$ as the quadratic mixed-mode tensile--shear strength surface: compliance damage $ \omega $ cannot grow for $\reducedstrength<0$ and may grow only at $\reducedstrength=0$. This quadratic form is consistent with the mixed-mode cohesive model of \citet{ortiz1999finite} and the unified tensile fracture criterion of \citet{zhang2005unified}. As $\phi$ increases, the tensile--shear branch shrinks by the factor $1-\phi$; the critical tractions in pure-normal and pure-tangential loading are $(1-\phi)\Tperpc$ and $(1-\phi)\Tparallelc$, respectively.

In the intact state $\omega=\phi=0$, one has $\bmw=\mathbm0$, and the traction in Eq.~\eqref{eq:reduced-traction} reduces to $\bmT_0=\bmPb(\bmF)\bmN$. Although $\psired(\bmF,\bmN,0)=\psib(\bmF)$ is direction-independent, energy minimization for $\omega>0$ still determines the initiation direction as $\omega\to0^+$. The first-order expansion in \ref{app:damage-initiation} gives the limiting set of energy-minimizing directions:
\begin{equation}\label{eq:intact-direction-selection}
 \limsup_{\omega\to0^+}
 \left(
 \operatorname*{arg\,min}_{\bmN\in\mathbb S^{d-1}}
 \psired(\bmF,\bmN,\omega)
 \right)
 \subseteq
 \operatorname*{arg\,max}_{\bmN\in\mathbb S^{d-1}}
 \reducedstrength(\bmF,\bmN,0,0).
\end{equation}
The set upper limit collects all accumulation points of energy-minimizing directions at positive compliance damage $\omega>0$ as $\omega\to0^+$. Directions selected by this one-sided limit therefore maximize the driving force, giving the first-order initiation condition
\begin{equation}\label{eq:intact-critical-plane}
 \max_{\bmN\in\mathbb S^{d-1}}\reducedstrength(\bmF,\bmN,0,0)=0.
\end{equation}
Before initiation, every candidate plane satisfies $\reducedstrength\leq0$. The direction-search algorithm is described in Section~\ref{sec:numerical-implementation}.

\subsection{Relation to energy-degradation phase-field models}\label{sec:degradation-relation}
The local condensation also clarifies the relation to small-strain cohesive phase-field models, which commonly degrade elastic energy with a scalar function \citep{lorentz2011convergence,lorentz2012modelling,wu2017unified,feng2021endowing}. To recover this form and examine its range of validity, we specialize the bulk and crack-strain energies to quadratic forms. For one-dimensional small-strain tension, let $\varepsilon\geq0$ be the total strain and $E>0$ Young's modulus. At fixed $\omega>0$, the reduced energy~\eqref{eq:w-reduced-energy} becomes
\begin{equation}\label{eq:quadratic-crack-strain-reduction}
 \min_{\w\geq0}
 \left\{\frac E2(\varepsilon-\w)^2
       +\frac{\ell\kperp}{2\omega}\w^2\right\}
 =\frac E2g_1(\omega)\varepsilon^2,
 \qquad
 g_1(\omega)=\frac1{1+c_1\omega},
 \qquad c_1=\frac E{\ell\kperp}.
\end{equation}
The minimizer is $\w=[1-g_1(\omega)]\varepsilon$; at $\omega=0$, continuous extension gives $\w=0$ and $g_1=1$. Normal and tangential degradation functions in directional small-strain models follow from the same local condensation \citep{feng2023unified,feng20253d}. Thus the reciprocal degradation function in Eq.~\eqref{eq:quadratic-crack-strain-reduction} is a result of eliminating crack strain from quadratic bulk and cohesive energies. The multiscale interpretation of damage through internal strain variables builds on the work of \citet{FREDDI20101154} and \citet{feng2023unified}. If $\omega=\omega(\phi)$ is further prescribed, this mapping determines the dependence of the degradation function on $\phi$, which need not be rational. For a nonlinear bulk law, Eq.~\eqref{eq:w-reduced-energy} still defines the reduced energy. In general, however, it cannot be written as the original bulk energy multiplied by a scalar function of damage alone.

In cohesive phase-field models that prescribe $\omega=\omega(\phi)$ \citep{lorentz2011convergence,lorentz2012modelling,wu2017unified,feng2021endowing}, convexity of the composite degradation function $g_1(\omega(\phi))$ in $\phi$ imposes the upper bound $ \ell\leq C \ell_{\mathrm{ch}}  $ on $\ell$. Here $ C $ is a model-dependent constant and $ \ell_{\mathrm{ch}} \propto E\Gc/(\Tperpc)^2$ is the characteristic fracture-process-zone length for a propagating crack (the Irwin length). The restriction on $\ell$ imposed by convexity limits applications to high-strength materials and large structures.

Retaining independent compliance damage changes the relevant convexity condition: the curvature of $g_1$ with respect to $\omega$ is
\begin{equation}\label{eq:quadratic-degradation-convexity}
 \frac{\dd^2g_1}{\dd\omega^2}
 =\frac{2c_1^2}{(1+c_1\omega)^3}>0,
 \qquad\omega\geq0,
\end{equation}
The curvature is positive for every $\ell>0$, removing this upper bound on $\ell$ and enabling more efficient analysis of large structures.
\citet{huber2024length} first introduced a construction using an independent variable to remove this bound.

The present nonlinear model prescribes no fixed scalar degradation function. Cohesive behavior instead follows from the energy split between bulk deformation and crack separation, independent compliance damage, and its evolution. The upper bound on $ \ell $ is therefore absent, making the model suitable for large structures and high-strength materials. Section~\ref{sec:cohesive-law} further shows that the traction--separation relation contains no $\ell$ and is independent of the bulk energy. The cohesive law is therefore intrinsically insensitive to $ \ell $.

\section{One-dimensional solutions, strong discontinuities, and cohesive laws}\label{sec:one-dimensional-analysis}
\subsection{\texorpdfstring{{One-dimensional analytical solutions}}{One-dimensional analytical solutions}}\label{sec:one-dimensional-solutions}
{To determine whether the fields develop a displacement jump at finite $\ell$ and recover the prescribed cohesive law, we examine a one-dimensional tensile problem. Consider a homogeneous bar $\mathcal B=(0,L)$ of reference length $L$ and unit cross-sectional area, with axial material coordinate $X$.} There is no body force. Uniaxial tension is imposed by the end displacements
\begin{equation}\label{eq:1d-loading}
 u(0,t)=0,\qquad u(L,t)=\bar u(t),
\end{equation}
where $u$ is the axial displacement and $\bar u(t)$ is the monotonically increasing boundary displacement. The phase-field boundary conditions are those in Eq.~\eqref{eq:phi-boundary-KKT}.

Let $\psib(\Fb)$ be the one-dimensional bulk strain-energy density, $\Fb>0$ the bulk stretch, and $\w\geq0$ the axial crack strain. The reference state is stress-free, and the energy reference is chosen so that $\psib(1)=\psib'(1)=0$. Bulk stretches are restricted to a tensile interval on which $\psib''(\Fb)>0$, making $\psib$ strictly convex there. The equilibrium equation~\eqref{eq:momentum-balance}, kinematic split, and traction relation~\eqref{eq:traction-equilibrium} reduce to
\begin{subequations}\label{eq:1d-mechanics}
 \begin{align}
 &P_{,X}=0,
 \label{eq:1d-balance}\\[0.7em]
 &P=\psib'(\Fb),\qquad \Fb=F-\w,
\qquad \w=\frac{\omega P}{\ell\kperp}.
 \label{eq:1d-kinematics}
 \end{align}
\end{subequations}
Here $P$ is the axial nominal stress, uniform along the bar by Eq.~\eqref{eq:1d-balance}. Strict convexity makes $P=\psib'(\Fb)$ strictly increasing on the selected interval; its inverse is denoted by $\Fb(P)$.

On the opening branch $P\geq0$, the compliance-damage KKT condition~\eqref{eq:omega-strength-surface} for $\omega$ reduces to
\begin{equation}\label{eq:1d-omega-KKT}
 \reducedstrength=(P/\Tperpc)^2-(1-\phi)^2\leq0,
 \qquad\dot\omega\geq0,\qquad\reducedstrength\dot\omega=0.
\end{equation}

The phase-field KKT condition~\eqref{eq:phi-KKT} becomes
\begin{equation}\label{eq:1d-phi-KKT}
 \variationalderivative{\energyred}{\phi}
 =\frac{\Gc}{\ell}\left[\Omega(\phi)\Omega'(\phi)-\ell^2\phi_{,XX}
       -2(1-\phi)\omega\right]\geq0,\qquad
 \dot\phi\geq0,\qquad
 \variationalderivative{\energyred}{\phi}\dot\phi=0.
\end{equation}

On the intact branch, $\omega=\phi=\w=0$, and the bar carries load through a uniform bulk stretch. When the compliance-damage initiation condition $P=\Tperpc$ in Eq.~\eqref{eq:1d-omega-KKT} is reached, assume that a single region of increasing $\omega$ develops:
\begin{equation}\label{eq:1d-damage-band}
 \damageband\coloneqq\{X\in\mathcal B:\dot\omega(X,t)>0\}
 =(X_-,X_+),\qquad h\coloneqq X_+-X_-,
\end{equation}
where $0<X_-<X_+<L$. Also assume that $\omega=0$ holds outside the band throughout loading.

Within $\damageband$, the compliance-damage KKT condition~\eqref{eq:1d-omega-KKT} for $\omega$ gives $P=(1-\phi)\Tperpc$. Since $P$ is uniform, the phase-field is also constant there; denote it by $\phi^*$. Thus
\begin{equation}\label{eq:1d-band-strength}
 \phi(X,t)=\phi^*(t),\qquad P=(1-\phi^*)\Tperpc,
 \qquad X\in\damageband.
\end{equation}

We examine the damage-loading branch with $0<\phi^*<1$ and $\dot\phi^*>0$. Take $L\gg\ell$ and place the damage band $\damageband$ far from the ends, so that $\phi$ and $\phi_{,X}$ vanish there in the long-bar approximation. Denote the phase-field loading region by $\mathcal B_\phi\coloneqq\{X\in\mathcal B:\dot\phi(X,t)>0\}$. Since $ \omega $ acts as a driving force for the phase-field (Eq.~\eqref{eq:1d-phi-KKT}), $ \mathcal B_\phi\supseteq \damageband $. Moreover, $\omega=0$ outside $ \damageband $, giving
\begin{equation}\label{eq:1d-phase-tail-ode}
 -\ell^2\phi_{,XX}+\Omega(\phi)\Omega'(\phi)=0,
 \qquad X\in\mathcal B_\phi\setminus\overline{\damageband}.
\end{equation}
This autonomous equation for $ \phi $ is subject to the boundary condition $\phi=\phi^*$ on $ \partial\damageband $.

Following \citet{feng2025JMPS}, we obtain the tail profile from a first integral of Eq.~\eqref{eq:1d-phase-tail-ode} (\ref{app:1d-phase-profile}):
\begin{equation}\label{eq:1d-phase-tail}
 \mathrm{dist}(X,\damageband)
 =\ell\int_{\phi(X)}^{\phi^*}\frac{\dd s}{\Omega(s)},
 \qquad X\in\mathcal B_\phi\setminus\overline{\damageband},
\end{equation}
where $\operatorname{dist}(X,\damageband)\coloneqq\inf_{S\in\damageband}|X-S|$ is the reference distance from $X$ to the damaged region, equal to $X_--X$ and $X-X_+$ in the left and right tails, respectively. Intact material outside the phase-field loading region has $\phi=0$. For a fixed damage band, the width enclosed by any given phase-field level is nondecreasing with $\phi^*$; see \ref{app:1d-phase-profile}.

Inside the damaged region, $\phi_{,XX}=0$ and $\dot\phi>0$. The phase-field KKT condition~\eqref{eq:1d-phi-KKT} and constitutive relation~\eqref{eq:1d-kinematics} give the distributions of compliance damage $\omega$ and crack strain $\w$:
\begin{equation}\label{eq:1d-band-internal-fields}
 \omega(X)=\omega^*\mathbf1_{\damageband}(X),
 \qquad \w(X)=\w^*\mathbf1_{\damageband}(X),
 \qquad X\in\mathcal B,
\end{equation}
where $\mathbf1_{\damageband}$ is the indicator of the damaged region, while $\omega^*$ and $\w^*$ are the uniform compliance damage and crack strain within it. Using $P=(1-\phi^*)\Tperpc$ and $\kperp=(\Tperpc)^2/(2\Gc)$, both can be expressed in terms of $\phi^*$ as
\[
 \omega^*=\frac{\Omega(\phi^*)\Omega'(\phi^*)}{2(1-\phi^*)},
 \qquad
 \w^*=\frac{\Gc}{\ell\Tperpc}\Omega(\phi^*)\Omega'(\phi^*).
\]
Integrating the kinematic relation~\eqref{eq:1d-kinematics} recovers the displacement and enforces the end conditions:
\begin{equation}\label{eq:1d-displacement-response}
u(X)=[\Fb(P)-1]X+\w^*\int_0^X\mathbf1_{\damageband}(S)\,\dd S,
 \qquad X\in[0,L].
\end{equation}
The end displacement is $\bar u=L[\Fb(P)-1]+h\w^*$. 

{These relations determine $u$, $\phi$, $\omega$, and $\w$ for a prescribed damage band.}

\subsection{Energy comparison and emergence of strong discontinuities}\label{sec:energy-comparison}
{The remaining question is which band width is energetically preferred. \citet{feng2025JMPS} established the necessity of concentration from the incompatibility between the plateau and tail gradients at the band boundary; see also \ref{app:1d-band-concentration}. We reach the same conclusion by comparing the minimum energies of bands with different widths at the same boundary displacement.}

{Consider a band $\damageband=(X^*-h/2,X^*+h/2)$ with fixed center $X^*\in(0,L)$ and $h>0$, retaining the long-bar approximation and $\omega=0$ outside the band.}

{To compare different band widths at the same total separation, define the crack opening by}
\begin{equation}\label{eq:1d-total-opening}
 \Delta\coloneqq\int_0^L\w\,\dd X\geq0.
\end{equation}
{For fixed $h$, $\Delta$, and $0\leq\phi^*<1$, retain the phase-field plateau $\phi=\phi^*$ inside the band and the tails~\eqref{eq:1d-phase-tail} outside.}

For this phase-field profile, minimize over the remaining fields to define the conditional minimum energy
\begin{equation}\label{eq:1d-band-conditional-minimum}
 \bandenergy_h(\Delta,\phi^*;\bar u)
 \coloneqq\min_{(u,\w,\omega)\in\mathcal A_h(\bar u,\Delta)}
 \energy[\chi,\w,\omega,\phi].
\end{equation}
{Here $\chi=X+u$, and the superscript $(3)$ denotes the variables $\Delta,\phi^*,\bar u$, with $h$ treated as a parameter. The admissible set $\mathcal A_h(\bar u,\Delta)$ enforces the end-displacement and total-opening constraints, $\w,\omega\geq0$, and $\w=\omega=0$ outside the band.} 

The kinematic relation~\eqref{eq:1d-kinematics} shows that $\bar u$ and $\Delta$ fix the total bulk extension. Strict convexity of $\psib$ makes a uniform bulk stretch minimize the bulk energy, giving
\begin{equation}\label{eq:1d-comparison-bulk-stretch}
 \Fb=1+\frac{\bar u-\Delta}{L}>0.
\end{equation}
The corresponding bulk energy is
\begin{equation}
 \int_0^L\psib(\Fb)\,\dd X
 =L\psib\!\left(1+\frac{\bar u-\Delta}{L}\right).
\end{equation}

{Having minimized the bulk energy, we next eliminate compliance damage from the cohesive and fracture contributions. The two $\omega$-dependent terms in Eqs.~\eqref{eq:psic} and~\eqref{eq:fracture-energy-density} satisfy}
\begin{equation}\label{eq:1d-compliance-condensation}
 \int_{\damageband}\left[\frac{\ell\kperp}{2\omega}\w^2
 +\frac{\Gc}{\ell}(1-\phi^*)^2\omega\right]\,\dd X
 \geq\Tperpc(1-\phi^*)\Delta,
\end{equation}
where $\kperp=(\Tperpc)^2/(2\Gc)$; equality holds for $\omega=\ell\Tperpc\w/[2\Gc(1-\phi^*)]$. At $\w=0$, take $\omega=0$ with the energy extension~\eqref{eq:psic-omega-zero}. This lower bound depends only on the total opening, not on the distribution of $\w$ within the band.

{Substituting the tail solution~\eqref{eq:1d-phase-tail} into the phase-field energy gives}
\begin{equation}\label{eq:1d-phase-tail-energy}
 \frac{\Gc}{2\ell}\int_{\mathcal B\setminus\overline{\damageband}}
 \left[\Omega(\phi)^2+\ell^2\phi_{,X}^2\right]\,\dd X
 =2\Gc\int_0^{\phi^*}\Omega(s)\,\dd s.
\end{equation}
{Adding the plateau contribution $h\Gc\Omega(\phi^*)^2/(2\ell)$ to the bulk, cohesive, and tail energies then yields}
\begin{equation}\label{eq:1d-positive-width-energy}
 \bandenergy_h(\Delta,\phi^*;\bar u)
 =L\psib\!\left(1+\frac{\bar u-\Delta}{L}\right)
 +\Tperpc(1-\phi^*)\Delta
 +2\Gc\int_0^{\phi^*}\Omega(s)\,\dd s
 +\frac{h\Gc}{2\ell}\Omega(\phi^*)^2.
\end{equation}

Equation~\eqref{eq:1d-positive-width-energy} shows that, for any fixed admissible $\Delta,\bar u$ and $\phi^*\in(0,1)$, the energy increases strictly with damage-band width $h$:
\begin{equation}\label{eq:1d-width-energy-derivative}
 \frac{\partial\bandenergy_h}{\partial h}
 =\frac{\Gc}{2\ell}\Omega(\phi^*)^2>0, \qquad \phi^*>0,
 \qquad h>0.
\end{equation}
The conditional minimum energy therefore approaches its infimum as $h\to0^+$, namely Eq.~\eqref{eq:1d-positive-width-energy} without its final term.\footnote{Candidate states are compared from the intact state at the same end displacement.} For any fixed $\Delta>0$, the average crack strain within the band diverges as $h\to0^+$. The crack strain develops a Dirac-type concentration at $X^*$, producing a strong displacement discontinuity with jump $\jump{u}(X^*)=\Delta$. Its measure representation is given in \ref{app:model-concentration}.

{Fig.~\ref{fig:theory-concentration} illustrates concentration at fixed $\ell$. In Fig.~\ref{fig:theory-concentration}(a--c), the phase-field remains continuous while crack strain and compliance damage concentrate, and the separation displacement approaches a jump of height $\Delta$. Fig.~\ref{fig:theory-concentration}(d) compares the conditional minimum energies at the same $\Delta$, $\phi^*$, and $\bar u$.}

\begin{figure}[!htbp]
 \centering
 \includegraphics[width=\linewidth]{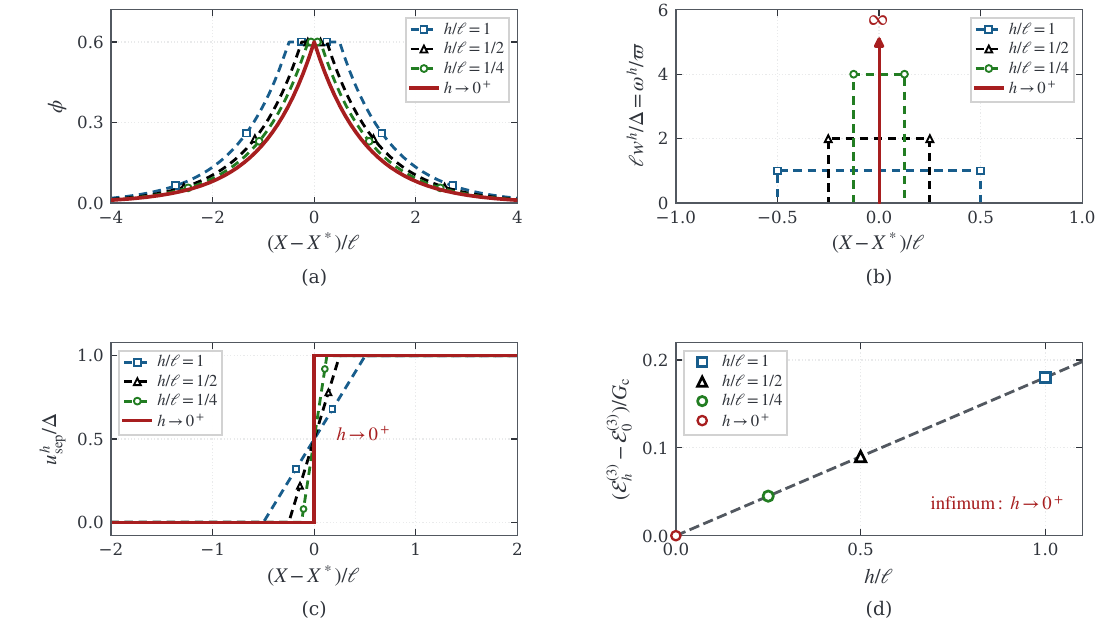}
 \caption{Concentration at fixed $\ell$, $\Delta$, $\phi^*=0.6$, and $\bar u$, with $\Omega(\phi)=\phi$: (a) phase-field; (b) normalized crack strain and compliance damage; (c) separation displacement; (d) conditional excess energy. Red curves and symbols denote the $h\to0^+$ limits.}
 \label{fig:theory-concentration}
\end{figure}

\FloatBarrier
\subsection{\texorpdfstring{{CZM energy in the concentrated limit}}{CZM energy in the concentrated limit}}\label{sec:concentrated-energy}
{The concentration selected by the energy comparison defines an interface $\Gamma=\{X^*\}$. In the strong-discontinuity limit, crack strain and compliance damage satisfy}
\begin{equation}\label{eq:1d-band-concentration-sequence}
 {
 \w^h\,\dd X\overset{*}{\rightharpoonup}\Delta\delta_{X^*},
 \qquad
 \omega^h\,\dd X\overset{*}{\rightharpoonup}\ell\varpi\delta_{X^*},}
\end{equation}
{where $\Delta=\jump{u}(X^*)$ and $\varpi\geq0$ is a dimensionless interfacial damage variable. To retain interfacial damage evolution, we take $u$, $\varpi$, and $\phi^*=\phi(X^*)$ as independent variables. The current displacement jump set satisfies $\jumpSet\subseteq\Gamma$.}\footnote{The displacement jump may vanish when a crack closes, but the damage already formed must remain. Thus $\Gamma$ does not disappear when the current jump set $\jumpSet$ does.}

{To obtain the energy in these variables, write the one-dimensional potential~\eqref{eq:E} as}
\begin{equation}\label{eq:1d-original-energy}
 \energy=\int_0^L\left[\psib(\Fb)
 +\frac{\ell\kperp}{2\omega}\w^2
 +\frac{\Gc}{2\ell}\bigl(\Omega(\phi)^2+\ell^2\phi_{,X}^2+2(1-\phi)^2\omega\bigr)\right]\,\dd X,
\end{equation}
{where $\Fb=1+u_{,X}-\w$. As $h\to0^+$, the bulk and phase-field plateau energies inside the band vanish, while the cohesive, coupling, and tail energies remain; see \ref{app:model-concentration} for evaluation through finite-width fields. The resulting energy is}
\begin{equation}\label{eq:1d-concentrated-energy}
 \boxed{\energyczm[u,\varpi,\phi^*]
 =\int_{\mathcal B\setminus\Gamma}\psib(1+u_{,X})\,\dd X
 +\int_\Gamma\!\left[\frac{\kperp}{2\varpi}\jump{u}^{2}
 +\Gc(1-\phi^*)^2\varpi+\Gc\fractureprofile(\phi^*)\right]\dd\mathcal H^0,}
\end{equation}
{where $u_{,X}$ is the ordinary derivative on either side of the crack and $\mathcal H^0$ is counting measure. The normalized phase-field tail energy is}
\begin{equation}\label{eq:1d-phase-tail-potential}
 \fractureprofile(\phi^*)\coloneqq2\int_0^{\phi^*}\Omega(s)\,\dd s.
\end{equation}
{Normalization~\eqref{eq:Omega-conditions} ensures $\fractureprofile(1)=1$. The interface integral retains both cohesive storage and fracture dissipation. With $\dot\varpi\geq0$ and $\dot\phi^*\geq0$, Eq.~\eqref{eq:1d-concentrated-energy} describes irreversible interfacial damage and has the energy structure of the model of \citet{feng_Li2026}.}\footnote{{At $\varpi=0$, the cohesive storage is zero for $\Delta=0$ and $+\infty$ otherwise, as in Eq.~\eqref{eq:psic-omega-zero}.}}

Further condensation of $\varpi$ and $\phi^*$ by minimizing the concentrated interfacial energy gives the potential of the monotonic loading envelope from the intact state:
\begin{equation}\label{eq:1d-cohesive-separation-potential}
 g(\Delta)=\inf_{\substack{\varpi\geq0\\0\leq\phi^*\leq1}}
 \left\{\frac{\kperp\Delta^2}{2\varpi}
 +\Gc(1-\phi^*)^2\varpi+\Gc\fractureprofile(\phi^*)\right\}
 =\min_{0\leq\phi^*\leq1}
 \left\{\Tperpc(1-\phi^*)\Delta+\Gc\fractureprofile(\phi^*)\right\}.
\end{equation}
Minimization over $\varpi$ first gives $\varpi=\Tperpc\Delta/[2\Gc(1-\phi^*)]$ for $0\leq\phi^*<1$; substitution yields the second equality.
The corresponding loading-envelope energy depends only on displacement:
\begin{equation}\label{eq:1d-CZM-functional}
 \boxed{\energyenvelope[u]
 =\int_{\mathcal B\setminus\jumpSet}\psib(1+u_{,X})\,\dd X
 +\int_{\jumpSet}g\!\left(\jump{u}\right)\,\dd\mathcal H^0.}
\end{equation}
This functional combines the bulk volume energy with a jump-dependent fracture energy, giving the standard cohesive fracture energy structure \citep{CONTI20161033,conti2024vectorial}.

The model also admits diffuse damage, providing deformation modes for the bulk-energy relaxation required by free cohesive-crack functionals \citep{Braides1995,rodella2026sharp}. Section~\ref{sec:example4} illustrates this diffuse response.

\subsection{\texorpdfstring{{Cohesive laws}}{Cohesive laws}}\label{sec:cohesive-law}
{Phase-field equilibrium at the concentrated interface now determines its nominal traction--separation law.}

The strength relation~\eqref{eq:1d-band-strength} and the concentrated-interface constitutive law give
\begin{equation}\label{eq:1d-concentrated-traction-opening}
 \Tperp=P=(1-\phi^*)\Tperpc,
 \qquad\Delta=\frac{\varpi P}{\kperp}.
\end{equation}

As $h\to0^+$, both tails remain continuous at the crack point $X^*$, where the phase-field equals $\phi^*$, but the gradient jumps. The tail gradient~\eqref{eq:1d-tail-signed-gradient} gives
\begin{equation}\label{eq:1d-crack-phase-gradient-jump}
 \jump{\phi_{,X}}=-\frac{2}{\ell}\Omega(\phi^*),
\end{equation}
where $\jump{\phi_{,X}}\coloneqq\phi_{,X}(X^{*+})-\phi_{,X}(X^{*-})$ is the right-minus-left jump in the phase-field gradient. On the loading branch $0<\phi^*<1$, $\dot\phi^*>0$, integrate the phase-field equation~\eqref{eq:1d-phi-KKT} over a neighborhood of the crack point and let its width vanish. The integral of the local term $\Omega\Omega'$ tends to zero. {The gradient jump and the concentrated compliance damage in Eq.~\eqref{eq:1d-band-concentration-sequence} give the crack-point balance}
\begin{equation}\label{eq:1d-singular-phase-balance}
 -\ell^2\jump{\phi_{,X}}-2\ell(1-\phi^*)\varpi=0,
 \qquad\text{or equivalently,}\qquad\Omega(\phi^*)=(1-\phi^*)\varpi.
\end{equation}
Substitution of Eq.~\eqref{eq:1d-singular-phase-balance} into the cohesive constitutive law~\eqref{eq:1d-concentrated-traction-opening} gives $\Delta=\Deltaperpch\Omega(\phi^*)$.

Scaling by the critical pure-normal nominal traction $\Tperpc$ and characteristic separation $\Deltaperpch=2\Gc/\Tperpc$, write $\normalizedtraction\coloneqq\Tperp/\Tperpc$ and $\normalizedopening\coloneqq\Delta/\Deltaperpch$. The dimensionless parametric cohesive law is
\begin{equation}\label{eq:1d-parametric-cohesive-law}
 \boxed{\normalizedtraction=1-\phi^*,
 \qquad\normalizedopening=\Omega(\phi^*),
 \qquad0\leq\phi^*<1.}
\end{equation}
Complete separation is understood as $\phi^*\to1^-$, after which the traction vanishes. The finite-deformation bulk law affects the bar's overall load--displacement response. It does not alter the nominal traction--separation relation per unit reference area in Eq.~\eqref{eq:1d-parametric-cohesive-law}.

{The same law follows from stationarity of the concentrated energy~\eqref{eq:1d-cohesive-separation-potential}: for $0<\phi^*<1$, $-\Tperpc\Delta+\Gc\fractureprofile'(\phi^*)=0$. Since $\fractureprofile'=2\Omega$, this gives the opening relation, while $g'(\Delta)=\Tperpc(1-\phi^*)$ gives the traction. Crack-point equilibrium and energy minimization thus recover the same cohesive law.}

Since $\Omega$ is strictly increasing, Eq.~\eqref{eq:1d-parametric-cohesive-law} uniquely determines the softening shape, $\normalizedtraction(\normalizedopening)=1-\Omega^{-1}(\normalizedopening)$. Conversely, inverting a strictly decreasing target cohesive law uniquely determines $\Omega(\phi^*)=\normalizedtraction^{-1}(1-\phi^*)$. The resulting $\Omega$ must also satisfy conditions~\eqref{eq:Omega-conditions}. Fiber-bridging traction--separation curves can be measured directly or inferred from experiments on soft collagenous tissues. These measurements link postpeak softening to tissue toughness \citep{tang2025tough}. This provides a physical basis for constructing $\Omega$ from measured softening curves.

The area under the cohesive law is the energy expended in complete separation per unit reference crack area. Let $\Delta_{\mathrm f}\coloneqq\Deltaperpch\Omega(1^-)$ be the opening at complete separation, possibly infinite.\footnote{For $\Omega(1^-)=\infty$, the energy integral is understood as a limit. Monotonicity and integrability of $\Omega$ ensure $(1-s)\Omega(s)\to0$, so the endpoint term in integration by parts still vanishes.} Using the parametric form~\eqref{eq:1d-parametric-cohesive-law} and integrating by parts gives
\begin{equation}\label{eq:1d-cohesive-work-normalization}
 \int_0^{\Delta_{\mathrm f}}\Tperp(\Delta)\,\dd\Delta
 =2\Gc\int_0^1(1-s)\Omega'(s)\,\dd s
 =2\Gc\int_0^1\Omega(s)\,\dd s
 =\Gc.
\end{equation}
Thus normalization~\eqref{eq:Omega-conditions} makes the complete-separation energy equal to $\Gc$.

Equation~\eqref{eq:1d-parametric-cohesive-law} describes the monotonic softening envelope from the intact state. During unloading and reloading before the envelope is reached again, the damage variables $\varpi$ and $\phi^*$ remain constant. The recoverable cohesive term in the CZM energy~\eqref{eq:1d-concentrated-energy} gives $\Tperp=\kperp\Delta/\varpi$, so unloading and reloading follow a straight line through the origin. Damage may evolve again only upon returning to the softening envelope.

{Under mixed-mode loading, the cohesive response must account for both normal opening and tangential sliding. We therefore extend the concentration analysis to two and three dimensions.} Following \citet{feng2021endowing}, we analyze concentration along the reference normal to a locally planar crack, with locally uniform stress and a common current normal $\bmn$ throughout the band. Nominal traction is measured per unit reference area.\footnote{{Nominal strength can be understood through an array of springs or molecular bonds bridging a material interface, each with a prescribed objective critical force. For a given material patch, the initial number per reference area remains fixed even when the current area changes. The nominal strength therefore measures the original force capacity per reference area, while damage describes its degradation. This interpretation is consistent with defining cohesive energy per reference area and obtaining nominal traction by differentiation with respect to separation.}} Its components and the displacement jump are resolved relative to the current crack plane. The normal opening and tangential sliding magnitude are
\begin{equation}\label{eq:mixed-mode-separation-components}
 \Deltaperp\coloneqq\jumpu\cdot\bmn\geq0,
 \qquad\Deltaparallel\coloneqq\lVert\bmItan\jumpu\rVert.
\end{equation}
Normalize these components using the critical pure-mode nominal tractions and characteristic separations defined in Section~\ref{sec:energy}:
\begin{equation}\label{eq:mixed-mode-component-normalization}
 \normalizedtraction^{\normalmode}\coloneqq\frac{\Tperp}{\Tperpc},
 \qquad
 \normalizedtraction^{\tangentialmode}\coloneqq\frac{\Tparallel}{\Tparallelc},
 \qquad
 \normalizedopening^{\normalmode}\coloneqq\frac{\Deltaperp}{\Deltaperpch},
 \qquad
 \normalizedopening^{\tangentialmode}\coloneqq\frac{\Deltaparallel}{\Deltaparallelch}.
\end{equation}
{The quadratic cohesive energy and strength surface suggest combining these normalized components into the effective traction and separation}
\begin{equation}\label{eq:mixed-mode-effective-variables}
 \normalizedeffectivetraction\coloneqq
 \sqrt{\positivepartexpr{\normalizedtraction^{\normalmode}}^2
       +(\normalizedtraction^{\tangentialmode})^2},
 \qquad
 \normalizedeffectiveopening\coloneqq
 \sqrt{(\normalizedopening^{\normalmode})^2
       +(\normalizedopening^{\tangentialmode})^2}.
\end{equation}
At the concentrated interface, traction balance~\eqref{eq:traction-equilibrium} gives $\Deltaperp=\varpi\positivepartexpr{\Tperp}/\kperp$ and $\Deltaparallel=\varpi\Tparallel/\kparallel$. Together with $\kperp\Deltaperpch=\Tperpc$ and $\kparallel\Deltaparallelch=\Tparallelc$, these imply $\normalizedeffectiveopening=\varpi\normalizedeffectivetraction$. Normal compression is carried by contact and does not enter the effective traction; the relations also apply to closed sliding.

On the damage-loading branch, the strength surface~\eqref{eq:omega-strength-surface} gives $\normalizedeffectivetraction=1-\phi^*$. Integrating phase-field equilibrium along the reference normal yields $\Omega(\phi^*)=(1-\phi^*)\varpi$, the same concentration relation as Eq.~\eqref{eq:1d-singular-phase-balance}. The mixed-mode cohesive law is therefore
\begin{equation}\label{eq:mixed-mode-parametric-cohesive-law}
 \boxed{\normalizedeffectivetraction=1-\phi^*,
 \qquad\normalizedeffectiveopening=\Omega(\phi^*),
 \qquad0\leq\phi^*<1.}
\end{equation}
Mixed-mode opening and sliding share the dimensionless pure-normal softening envelope: $\beta$ sets the relative scales of the two modes and $\Omega$ sets the normalized softening shape. This parameterization follows \citet{feng2021endowing}.

The same softening envelope can also be constructed by fixing the local phase-field term at $\phi^2$ and varying the coupling between phase-field and compliance damage \citep{feng2025JMPS}. Characteristic functions for common cohesive laws and these alternative constructions are given in \ref{app:cohesive-laws}.

\section{Numerical implementation}\label{sec:numerical-implementation}
We describe a finite element implementation in the reference configuration for $d=2,3$. Displacement and phase-field are global unknowns. Compliance damage $\omega$ and the reference crack normal $\bmN$ are updated at Gauss points, while crack strain $\bmw$ is condensed locally. Displacement, $\omega$, and $\bmN$ are all solved using the reduced energy~\eqref{eq:w-reduced-energy}: every change in $\bmF$, $\omega$, or $\bmN$ requires recomputing $\bmw$ and its derivatives. The staggered solution is described below for displacement-controlled loading.

\subsection{Spatial discretization and reduced potential}\label{sec:FE-discretization}
Consider conforming first-order isoparametric elements: bilinear quadrilaterals in two dimensions and trilinear hexahedra in three. Continuous shape functions $\mathcal N_a$ interpolate displacement $\bmu=\bmchi-\bmX$ and phase-field:
\begin{equation}\label{eq:FE-interpolation}
 \bmu(\bmX)\approx\sum_a\mathcal N_a(\bmX)\bmu_a,
 \qquad
 \phi(\bmX)\approx\sum_a\mathcal N_a(\bmX)\phi_a,
\end{equation}
where $a$ indexes nodes, and $\bmu_a$ and $\phi_a$ are nodal values. In a Cartesian basis shared by the reference and current configurations, the same interpolation gives the deformation gradient, $\bmF=\bmIref+\nabla\bmu$.

We store the current and previous accepted values of $\omega$ and $\bmN$ at each Gauss point. The crack strain $\bmw$ is obtained by local condensation at the current state. These three fields require no continuous interpolation and introduce no additional global degrees of freedom into the displacement and phase-field subproblems.
In this discretization, Dirac concentration is approximated by $\omega$ and $\bmw$ within a narrow band of elements. Resolution of its width $h$ is limited by the local reference mesh size $\meshsize$. For example, the one-dimensional construction in \ref{app:model-concentration} has $\omega^h=\ell\varpi/h$ and $\w^h=\Delta/h$ inside the band. Their amplitudes grow as the band narrows, while their integrals remain $\ell\varpi$ and $\Delta$. Together with Section~\ref{sec:energy-comparison}, this suggests that mesh refinement toward $\meshsize\to0$ on a localized cracking branch produces concentration of $\omega$ and $\bmw$ and a finite displacement jump. The fixed material length $\ell$ continues to govern the phase-field transition.

Applying two-point Gauss quadrature in each coordinate direction of every reference element $e$ gives the discrete reduced potential~\eqref{eq:reduced-total-energy}:
\begin{equation}\label{eq:FE-reduced-potential}
 \energyFE
 =\sum_e\sum_q
 \left[
 \psired(\bmF_{eq},\bmN_{eq},\omega_{eq})
 +\varphif(\phi_{eq},(\nabla\phi)_{eq},\omega_{eq})
 \right]\quadratureweight_q\det\elementJacobian_{eq}
 +\mathcal P_{\mathrm{ext}}^{\mathrm{FE}},
\end{equation}
where $q$ indexes quadrature points and $\quadratureweight_q>0$ is the Gauss weight on the parent element. The matrix $\elementJacobian_{eq}$ is the Jacobian of the mapping from the parent element to the reference element, with $\det\elementJacobian_{eq}>0$. Prescribed displacement values are eliminated from the global unknowns; other boundary conditions enter through the potential and admissible set.

At fixed Gauss-point values of $\omega$ and $\bmN$, displacement variation of this discrete potential gives mechanical equilibrium, $\firstvariation{\bmu}\energyFE=0$. For every test perturbation $\delta\bmu$ satisfying homogeneous displacement boundary conditions,
\begin{equation}\label{eq:FE-mechanical-residual}
 \firstvariation{\bmu}\energyFE[\delta\bmu]
 =\sum_e\sum_q\left[\bmP_{eq}:(\nabla\delta\bmu)_{eq}\right]\quadratureweight_q\det\elementJacobian_{eq}
 +\firstvariation{\bmu}\mathcal P_{\mathrm{ext}}^{\mathrm{FE}}[\delta\bmu],
 \qquad\text{with}\quad
 \bmP=\frac{\partial\psired}{\partial\bmF}.
\end{equation}
The local constraints depend on $\bmF$, so the envelope derivative in Eq.~\eqref{eq:FE-mechanical-residual} includes the contact contribution. The returned total stress is precisely $\bmPb+\bmPgeo$ in Eq.~\eqref{eq:first-Piola-stress}.

At fixed $\omega$, the phase-field subproblem involves only fracture energy. Its first variation and tangent on a smooth branch are
\begin{subequations}\label{eq:FE-phase-variations}
\begin{align}
 &\firstvariation{\phi}\energyFE[\delta\phi]
 =\frac{\Gc}{\ell}\sum_e\sum_q
 \left\{
 \left[\Omega\Omega'-2(1-\phi)\omega\right]\delta\phi
 +\ell^2\nabla\phi\cdot\nabla\delta\phi
 \right\}_{eq}\quadratureweight_q\det\elementJacobian_{eq},\label{eq:FE-phase-residual}\\
 &\firstvariation{\phi}^2\energyFE[\delta\phi,\Delta\phi]
 =\frac{\Gc}{\ell}\sum_e\sum_q
 \left\{
 \left[(\Omega')^2+\Omega\Omega''+2\omega\right]\delta\phi\Delta\phi
 +\ell^2\nabla\delta\phi\cdot\nabla\Delta\phi
 \right\}_{eq}\quadratureweight_q\det\elementJacobian_{eq}.\label{eq:FE-phase-tangent}
\end{align}
\end{subequations}
Here $\Omega$ and its derivatives are evaluated at $\phi_{eq}$, and $\Delta\phi$ is the phase-field correction. The phase-field minimizes energy over its admissible set, with constrained degrees of freedom handled by an active-set method. Over the domain of $\Omega$, conditions~\eqref{eq:Omega-conditions} ensure convexity in the phase-field at fixed $\omega$. If the discrete admissible set is also convex, the subproblem is a convex minimization problem. A linear $\Omega$ gives a quadratic energy.

\subsection{Local condensation of crack strain and consistent tangent}\label{sec:local-condensation}
{Evaluating the reduced potential and its stress at each Gauss point requires local equilibration of the crack strain. Given $\bmF$, $\bmN$, and $\omega>0$, the local solve enforces traction balance~\eqref{eq:w-equilibrium} and contact complementarity~\eqref{eq:KKT_lambda}. We therefore define the traction residual by}
\begin{equation}\label{eq:local-return-residual}
 \mathbm r_{\bmw}
 \coloneqq\bmKomega\bmw-\bmPb(\bmF-\bmw\otimes\bmN)\bmN
 +\lambdaperp\bmn.
\end{equation}
On the open branch, set $\lambdaperp=0$ and solve $\mathbm r_{\bmw}=\mathbm0$. If the computed normal opening is negative, solve again on the closed branch $\bmw\cdot\bmn=0$ and recover a nonpositive contact multiplier. Both branches must satisfy $\Jb>0$. Energy descent and backtracking keep trial states within the orientation-preserving domain.\footnote{The local iteration uses the scaled unknown $\bmw/\omega$ to avoid ill-conditioning from $1/\omega$ in $\bmKomega$ as $\omega$ approaches zero. Its zero-damage limit follows from Eq.~\eqref{eq:intact-w-asymptotics}. The actual crack strain $\bmw$ is recovered from the scaled variable and set to $\mathbm0$ at $\omega=0$.} Each material-point solve involves only $d$ components of $\bmw$ and, under closure, one contact multiplier, independently of the global mesh size.

To construct the local iteration and condensed tangent, define the bulk material tangent $\bulkmodulus\coloneqq\partial\bmPb/\partial\bmFb$. The local tangent $\localcracktangent=\partial_{\bmw}\mathbm r_{\bmw}$ with respect to crack strain $\bmw$ acts on an arbitrary spatial vector $\bmz$ as
\begin{equation}\label{eq:local-return-tangent}
 \localcracktangent\bmz
 =\bmKomega\bmz+
 \left[\bulkmodulus:(\bmz\otimes\bmN)\right]\bmN.
\end{equation}
This tensor combines cohesive stiffness with the bulk response to local cracking. We linearize about a stable local minimum, with $\localcracktangent$ positive definite on the open branch or on $\bmn^\perp$ under closure.

At fixed $\bmN$, implicit differentiation of the local equations gives the changes in $\bmw$ and the contact multiplier with $\bmF$ and $\omega$. On the closed branch, these increments satisfy
\begin{equation}\label{eq:local-return-linearization}
 \underbrace{\begin{bmatrix}
 \localcracktangent&\bmn\\
 \bmn^\trans&0
 \end{bmatrix}}_{\localKKTmatrix}
 \begin{bmatrix}\delta\bmw\\\delta\lambdaperp\end{bmatrix}
 =-\begin{bmatrix}
 (\partial_{\bmF}\mathbm r_{\bmw}):\delta\bmF
 +(\partial_\omega\mathbm r_{\bmw})\delta\omega\\
 \bmw\cdot\delta\bmn
 \end{bmatrix}.
\end{equation}
Partial derivatives on the right hold $\bmw$ and $\lambdaperp$ fixed; $\delta\bmn$ is evaluated from Eq.~\eqref{eq:normal-full-variation} using $\delta\bmN=\mathbm0$. The open branch retains only the first row and sets $\delta\lambdaperp=0$. Displacement linearization uses $\delta\omega=0$, while linearization in $\omega$ uses $\delta\bmF=\mathbm0$. After a change in the contact active set, both the local solution and its linearization are recomputed on the new branch.

Displacement and $\omega$ are updated separately, requiring second partial derivatives of the reduced energy in $\bmF$ and $\omega$, respectively. Comma subscripts below denote partial derivatives with the other independent variables fixed, evaluated at the local solution. Writing $\alpha\in\{\bmF,\omega\}$, the open branch satisfies
\begin{equation}\label{eq:open-condensed-Hessian}
 (\psired)_{,\alpha\alpha}
 =\psi_{,\alpha\alpha}
 -\psi_{,\alpha\bmw}\localcracktangent^{-1}\psi_{,\bmw\alpha}.
\end{equation}
On the closed branch, use the constrained energy $\psilambda=\psi+\lambdaperp\wperp$ to obtain
\begin{equation}\label{eq:closed-condensed-Hessian}
 (\psired)_{,\alpha\alpha}
 =(\psilambda)_{,\alpha\alpha}-
 \begin{bmatrix}(\psilambda)_{,\alpha\bmw}&(\wperp)_{,\alpha}\end{bmatrix}
 \localKKTmatrix^{-1}
 \begin{bmatrix}(\psilambda)_{,\bmw\alpha}\\(\wperp)_{,\alpha}\end{bmatrix}.
\end{equation}
Products in Eqs.~\eqref{eq:open-condensed-Hessian} and~\eqref{eq:closed-condensed-Hessian} denote composition of linear maps. The latter also includes the deformation dependence of the closure constraint. Let $\algmodulus\coloneqq\partial^2\psired/\partial\bmF^2$ denote the consistent material tangent. The second variation for the displacement subproblem is
\begin{equation}\label{eq:algorithmic-material-tangent}
\firstvariation{\bmu}^2\energyFE[\delta\bmu,\Delta\bmu]
 =\sum_e\sum_q
 \left[(\nabla\delta\bmu)_{eq}:(\algmodulus)_{eq}:(\nabla\Delta\bmu)_{eq}\right]\quadratureweight_q\det\elementJacobian_{eq},
\end{equation}
where $\Delta\bmu$ is the displacement correction, while $\omega$ and $\bmN$ are fixed in this subproblem. The prescribed-load potential has no second variation. This condensation follows the structure of consistent linearization in return-mapping algorithms \citep{simo2006computational}. Local variables are absent from the global mechanical unknowns, but their response to trial deformation is retained in the tangent. Thus $\algmodulus$ accounts for the local equilibration of $\bmw$, the deformation dependence of $\bmn$, and the geometric Piola stress $\bmPgeo$. Neither a tangent at fixed $\bmw$ nor $\bulkmodulus$ alone is sufficient.

At $\omega=0$, energy and stress are $\psib(\bmF)$ and $\bmPb(\bmF)$, and the mechanical tangent reduces to the bulk tangent. The first derivative in $\omega$ follows the one-sided expansion in \ref{app:damage-initiation}. The required curvature is computed from the one-sided branch limit of the scaled local equations, keeping near-zero $\omega$ updates consistent with initiation.

\subsection{Gauss-point updates of compliance damage and reference crack normal}\label{sec:local-damage-direction}
\subsubsection{Local energy minimization for compliance damage}\label{sec:omega-update}
At fixed $\bmF$, $\bmN$, and $\phi$, the $\omega$ subproblems are independent at individual Gauss points. Omitting terms independent of $\omega$ gives the local energy
\begin{equation}\label{eq:local-omega-potential}
 \omegapotential(\omega)
 \coloneqq\psired(\bmF,\bmN,\omega)
 +\frac{\Gc}{\ell}(1-\phi)^2\omega,
 \qquad \omega\geq\omega_{\mathrm{prev}}.
\end{equation}
The local update therefore follows the damage conditions above, with derivative and curvature
\begin{equation}\label{eq:local-omega-derivatives}
 \frac{\dd\omegapotential}{\dd\omega}
 =\variationalderivative{\energyred}{\omega}
 =-\frac{\Gc}{\ell}\reducedstrength(\bmF,\bmN,\omega,\phi),
 \qquad
 \omegatangent\coloneqq\frac{\dd^2\omegapotential}{\dd\omega^2}
 =\frac{\partial^2\psired}{\partial\omega^2}.
\end{equation}
The scalar $\omegatangent$ is the tangent for $\omega$ after eliminating $\bmw$; it follows from Eq.~\eqref{eq:open-condensed-Hessian} or~\eqref{eq:closed-condensed-Hessian} with $\alpha=\omega$. Every evaluation of a trial $\omega$ requires a new local solution for $\bmw$, rather than reusing $\bmw$ from the previous damage state.

To handle both near-zero and large $\omega$, use logarithmic coordinates $s=\log(1+\omega)$. On the selected smooth local branch, the energy gradient and curvature are
\begin{equation}\label{eq:log-compliance-derivatives}
 g_s\coloneqq\frac{\dd\omegapotential}{\dd s}
 =(1+\omega)\frac{\dd\omegapotential}{\dd\omega},
 \qquad
 h_s\coloneqq\frac{\dd^2\omegapotential}{\dd s^2}
 =(1+\omega)^2\omegatangent+g_s.
\end{equation}
These define the projected descent step
\begin{equation}\label{eq:projected-compliance-step}
 s_{\mathrm{trial}}
 =\operatorname{proj}_{\mathcal I_s}\left(s-\frac{g_s}{m_s}\right),
 \qquad
 m_s=\max\left\{|h_s|,\frac{|g_s|}{\Delta s_{\max}},\varepsilon_m\frac{\Gc}{\ell}\right\},
\end{equation}
where $\mathcal I_s$ is the image in logarithmic coordinates of the admissible interval for $\omega$, and $\operatorname{proj}$ is its orthogonal projection. With only a history lower bound, $\mathcal I_s=[\log(1+\omega_{\mathrm{prev}}),\infty)$. Projection then takes the maximum with that bound. A computational cap $\omega_{\max}$ limits ill-conditioning in highly damaged regions, giving upper endpoint $\log(1+\omega_{\max})$. The parameter $\Delta s_{\max}>0$ limits the logarithmic step, while the small dimensionless $\varepsilon_m>0$ keeps $m_s$ positive; $m_s$ has units of energy density. Backtracking along the projected direction continues until local energy decreases and orientation preservation and contact are satisfied. The modified curvature provides a descent direction for the original energy. Termination is based on the KKT residual over the specified admissible interval. It requires $\dd\omegapotential/\dd\omega\geq0$ at the history lower bound and $\dd\omegapotential/\dd\omega=0$ at an unconstrained interior point. If a finite upper bound is active, the derivative is nonpositive.

\subsubsection{Energy search over multiple candidate directions}\label{sec:direction-search}
At fixed $\bmF$ and $\omega>0$, the reference crack normal $\bmN$ is determined by Eq.~\eqref{eq:N-reduced-minimization}. Each candidate $\bmN$ requires a fresh condensation of $\bmw$, so energies are compared only after the crack strain has equilibrated locally. Near zero $\omega$, the common bulk energy can obscure small energy differences among candidate normals $\bmN$. We therefore use the equivalent scaled objective
\begin{equation}\label{eq:direction-search-potential}
 \psidir(\bmF,\bmN,\omega)\coloneqq
 \begin{cases}
 \displaystyle\frac{\psired(\bmF,\bmN,\omega)-\psib(\bmF)}{\omega},
 &\omega>0,\\[0.8em]
 \partial_\omega^+\psired(\bmF,\bmN,0),&\omega=0
 \end{cases}.
\end{equation}
For $\omega>0$, the subtracted energy and scaling factor are independent of $\bmN$ and leave the optimal direction unchanged. The zero-damage objective follows Eq.~\eqref{eq:intact-directional-energy-expansion} and recovers selection by the maximum driving force. Near zero, an equivalent energy-increment expression avoids directly subtracting nearly equal energies.

For objective isotropic bulk laws, we express directions in the right principal-stretch basis. Proposition~\ref{prop:crack-direction-symmetry-reduction} then restricts the search to the nonnegative part of the unit sphere. One angle suffices in two dimensions. If Proposition~\ref{prop:crack-direction-plane-reduction} applies in three dimensions, use the single-angle principal-plane search~\eqref{eq:N-principal-plane}. The general three-dimensional case requires two angles, for example
\begin{equation}\label{eq:3d-direction-search-parameters}
 \bmN(\theta,\vartheta)
 =\cos\theta\,\bmN_1
 +\sin\theta\cos\vartheta\,\bmN_2
 +\sin\theta\sin\vartheta\,\bmN_3,
 \qquad 0\leq\theta,\vartheta\leq\frac\pi2.
\end{equation}
If the constitutive law lacks these symmetries, search the full unit sphere and identify $\bmN$ and $-\bmN$ with the same crack plane. After symmetry reduction, restore equal-energy mirror directions and choose the branch whose plane is closest to that of the previous direction. This avoids spurious direction jumps caused solely by eigenvector signs or equal-energy multiplicity.

Optimization of $\bmN$ combines systematic sampling with local refinement of multiple candidates. For a single-angle problem in two dimensions or a three-dimensional principal plane, first sample the admissible interval. Retain the previous $\bmN$ and the endpoints, and identify neighborhoods of candidate minima. If directional derivatives at the ends of a candidate interval change from negative to positive, refine the stationary point by bisection. Otherwise, apply the interval-minimization algorithm of \citet{brent1973algorithms} and increase sampling over the full interval. Compare refined candidates, sampled points, and endpoints to select the lowest-energy direction.

For three-dimensional problems without principal-plane reduction, replace intervals with spherical patches. Refine around each sampled minimum under the directional constraint, then compare interior and boundary candidates. When symmetry reduction applies, Eq.~\eqref{eq:3d-direction-search-parameters} defines the parameter domain. Otherwise, patches cover the full unit sphere, with duplicate planes removed using $\bmN\sim-\bmN$.

{Local refinement of candidate directions uses derivatives of the condensed energy along the unit sphere. On a smooth branch with $\omega>0$ and $\bmN\cdot\delta\bmN=0$, the constrained envelope gives}
\begin{equation}\label{eq:reduced-direction-derivative}
\firstvariation{\bmN}\psired[\delta\bmN]=\bmQ\cdot\delta\bmN,
 \qquad
 \firstvariation{\bmN}\psidir[\delta\bmN]=\frac{\bmQ\cdot\delta\bmN}{\omega},
\end{equation}
where $\bmQ$ is defined by Eq.~\eqref{eq:direction-conjugate-force} and accounts for both the local equilibration of $\bmw$ and changes in the contact constraint. At exactly zero damage, differentiate the one-sided objective in Eq.~\eqref{eq:direction-search-potential} directly; substituting $\bmw=\mathbm0$ into $\bmQ$ cannot determine the initiation direction. Candidates at contact transitions are retained by energy comparison rather than stationarity alone.

For Gauss points with $\omega=0$, the energy itself is independent of $\bmN$, and initiation direction follows the one-sided objective~\eqref{eq:direction-search-potential}. To avoid searching every Gauss point, let $k_{\min}\coloneqq\min(\kperp,\kparallel)$ and use the traction $\bmT_0=\bmPb(\bmF)\bmN$ to obtain the conservative strength-function bound
\begin{equation}\label{eq:intact-direction-screening}
 \max_{\bmN\in\mathbb S^{d-1}}\reducedstrength(\bmF,\bmN,0,\phi)
 \leq\frac{\lVert\bmPb(\bmF)\rVert_{\mathrm{op}}^2}{2\Gc k_{\min}}-(1-\phi)^2,
\end{equation}
where $\lVert\cdot\rVert_{\mathrm{op}}$ denotes the largest singular value of the matrix. A negative right-hand side guarantees that no candidate plane has reached the strength surface. No initiation search is then needed. Only points with a nonnegative bound undergo the one-sided search and a renewed check of $\omega$ evolution. Screening is repeated after each deformation or phase-field update. For positive $\omega$, direction updates use the finite-compliance objective.

\subsection{Staggered solution}\label{sec:staggered-solution}
A staggered scheme couples local updates with global finite element equations to trace the equilibrium branch. For each prescribed boundary-displacement step, initialize from the previous accepted state and hold damage-history lower bounds fixed throughout the step. Each staggered iteration solves displacement, $\omega$, and phase-field in turn, then updates $\bmN$, as summarized in Algorithm~\ref{alg:staggered-condensation}.

The displacement subproblem uses Eq.~\eqref{eq:FE-mechanical-residual} as its residual and Eq.~\eqref{eq:algorithmic-material-tangent} as its consistent tangent in Newton's method. Every trial displacement triggers new Gauss-point condensation of $\bmw$ and checks of orientation preservation and contact; an inadmissible trial is shortened.

\begin{algorithm}[Staggered solution with local crack-strain condensation]\label{alg:staggered-condensation}
Given the boundary displacement and previous accepted state, repeat:
\begin{enumerate}
\item Hold $\omega,\bmN,\phi$ fixed and solve for the global displacement. For every trial $\bmF$, condense $\bmw$ locally and return $\psired,\bmP,\algmodulus$.
\item Hold the updated $\bmF$ and $\bmN,\phi$ fixed and solve $\omega$ at each Gauss point. Every trial $\omega$ requires renewed condensation of $\bmw$ and returns the derivative and curvature in Eq.~\eqref{eq:local-omega-derivatives}.
\item Hold $\omega$ fixed and solve the global phase-field over its admissible set. The first three steps may be repeated at fixed $\bmN$.
\item At fixed current $\bmF,\omega$, search for $\bmN$ as in Section~\ref{sec:direction-search}. For $\omega>0$, condense $\bmw$ for each candidate $\bmN$. For points with $\omega=0$, screen using Eq.~\eqref{eq:intact-direction-screening} and apply the one-sided objective at potential initiation points.
\item Reevaluate mechanical equilibrium, damage conditions, and the directional energy gap for $\bmN$ using the retained fields. Accept the loading step and commit the history only when all tolerances are met; otherwise continue the staggered iterations.
\end{enumerate}
\end{algorithm}

Convergence is assessed using residuals for mechanical equilibrium, local traction balance, and damage complementarity \citep{farrell2017linear}. Directional convergence for $\bmN$ is assessed by the total energy reduction available among the searched candidates relative to the current normal:
\begin{equation}\label{eq:numerical-direction-energy-gap}
 \Delta\energy_{\bmN}
 \coloneqq\sum_e\sum_q
 \left[
 \psired(\bmF_{eq},\bmN_{eq},\omega_{eq})
 -\psired(\bmF_{eq},\bmN_{eq}^{\mathrm{cand}},\omega_{eq})
 \right]_{+}\quadratureweight_q\det\elementJacobian_{eq},
\end{equation}
where $\bmN^{\mathrm{cand}}$ is the lowest-energy candidate returned by the current search. If this gap is below the direction tolerance, the existing $\bmN$ may be retained at damaged points; points with zero or near-zero $\omega$ are still updated as in Section~\ref{sec:direction-search}. Field residuals are recomputed after every $\bmN$ update, and the current state is accepted only when all tolerances are met. The examples specify the meshes and loading controls; mesh and phase-field-length sensitivity are examined in Section~\ref{sec:example1}.

{The basic staggered scheme can be complemented by methods that accelerate convergence or account more directly for interfield coupling. Anderson acceleration, as used by \citet{storvik2021accelerated}, can improve convergence of staggered iterations.} \citet{lampron2021efficient} developed monolithic modified-Newton solvers. \citet{farrell2017linear} combined alternate minimization with coupled Newton iterations for phase-field fracture. For coupled fields, the coupling-strength analysis of \citet{zhao2025staggered} can guide update ordering and field grouping. \citet{jiang2026schur} incorporated interfield coupling into subfield updates through Schur-complement-enhanced iterations, improving computational efficiency in phase-field fracture.

For the adaptive examples below, we follow the multilevel refinement strategy of \citet{feng2026multilevel}, using $\phi$ and $\omega$ to identify the phase-field transition and damage-concentration regions, respectively. Finite-mesh corrections are specified in the relevant examples.

{Beyond this refinement strategy, adaptive remeshing can improve local resolution and element quality. \citet{cheng2025global} developed global--local adaptive meshing with nodal density controlled by phase-field and displacement errors.} \citet{cheng2026selfcorrecting} developed three-dimensional self-correcting remeshing to balance local resolution and element quality.
\section{Numerical examples}\label{sec:numerical-examples}
\subsection{Uniaxial tension: independent bulk and cohesive constitutive laws}\label{sec:example1}
We first use uniaxial tension to test whether the finite-deformation $\Omega^2$ model reproduces independently prescribed bulk and cohesive laws. This example also illustrates how damage concentration and displacement discontinuities coexist with a continuous phase-field. Neo--Hookean (NH), Mooney--Rivlin (MR), and Saint Venant--Kirchhoff (SVK) bulk laws are each combined with linear, exponential, and quadratic PPR softening, giving a $3\times3$ test matrix. Bulk and cohesive laws are listed in \ref{app:bulk-constitutive} and~\ref{app:cohesive-laws}, respectively; the PPR law uses $\alpha=3$.

A rectangular specimen of length $L=100\,\mathrm{mm}$ and height $H=2\,\mathrm{mm}$ is analyzed in two dimensions with reference thickness $1\,\mathrm{mm}$. The axial displacement is fixed at $u_x=0$ on the left, and a monotonically increasing displacement $u_x=\bar u$ is applied on the right. Transverse motion is free at both ends, except that $u_y=0$ is imposed at the lower-left corner to remove rigid translation. The upper and lower edges are traction-free, permitting lateral contraction. The specimen is initially intact. The phase-field is fixed at $\phi=0$ at both ends, with the natural condition~\eqref{eq:phi-boundary-KKT} on the upper and lower edges.

All nine calculations use the same mesh of 504 bilinear quadrilateral elements. The central axial mesh size is $0.1\,\mathrm{mm}$ and the transverse size $0.5\,\mathrm{mm}$; elements become coarser in the axial direction toward the ends. This constitutive comparison uses a fixed, locally refined mesh and displacement increments of $4\,\mathrm{mm}$.

All calculations use shear modulus $\mu=0.0315\,\mathrm{N/mm^2}$, fracture energy $\Gc=10\,\mathrm{N/mm}$, normal cohesive strength $\Tperpc=0.05\,\mathrm{N/mm^2}$, phase-field length $\ell=5\,\mathrm{mm}$, and tensile-to-shear strength ratio $\beta=0.5$. NH and MR use incompressible plane-stress reduction, with MR coefficients $c_{10}=0.8$ and $c_{01}=0.2$. SVK uses the two-dimensional energy~\eqref{eq:bulk-StVK-energy-stress} with Poisson's ratio $\nu=0.4$ and Lam\'e constant $\lambda=2\mu\nu/(1-\nu)$. The relatively large fracture energy makes the softening opening large compared with the bulk elastic extension. The response can thus be tracked under ordinary displacement control without arc-length methods.

Figure~\ref{fig:example1-stress-displacement} plots axial nominal stress $P$ against displacement $\bar u$ for all nine combinations. The stress $P$ is the right-end reaction divided by the reference cross-sectional area. Rows correspond to bulk laws and columns to cohesive laws. Black solid curves show analytical bulk loading responses; the uniaxial relations are given in Eqs.~\eqref{eq:bulk-NH-uniaxial}, \eqref{eq:bulk-MR-uniaxial}, and~\eqref{eq:bulk-StVK-uniaxial}. Before initiation, $\Fb=1+\bar u/L$ and the bulk potential governs the response. The three bulk laws exhibit different nonlinear elastic behavior, and all three loading curves for each law coincide with its analytical curve. Post-initiation softening follows the prescribed cohesive law.

\begin{figure}[!htbp]
 \centering
 \includegraphics[width=\linewidth]{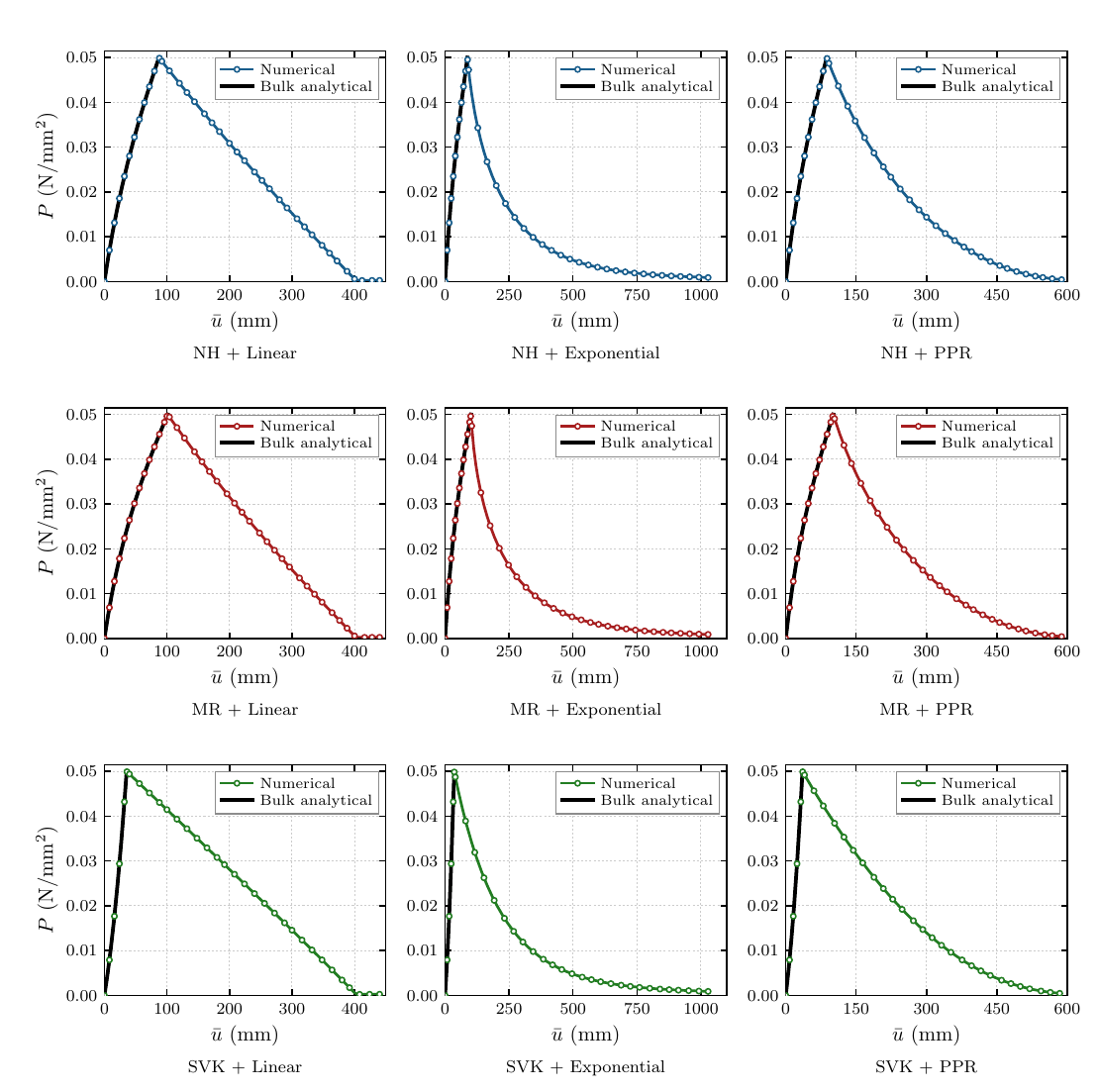}
\caption{Nominal stress--displacement responses for nine constitutive combinations. NH, MR, and SVK denote Neo--Hookean, Mooney--Rivlin, and Saint Venant--Kirchhoff materials. Black curves: analytical bulk response.}
 \label{fig:example1-stress-displacement}
\end{figure}

Figure~\ref{fig:example1-fields} shows the NH/linear-cohesive case during softening at a nominal traction of approximately $\Tperpc/2$, corresponding to $\bar u=236\,\mathrm{mm}$ and $P/\Tperpc=0.502$. The phase-field $\phi$ decays smoothly on either side of the crack, retaining a transition width controlled by $\ell$. Compliance damage $\omega$ and the axial displacement gradient concentrate within a narrow region of width approximately $0.1\,\mathrm{mm}$. Although the peak $\phi$ is only about $0.5$, well below the complete-fracture value $\phi=1$, the displacement already shows a pronounced jump.

Concentration begins near initiation. In the first post-initiation state $\bar u=88\,\mathrm{mm}$, the peak $\phi$ is only $2.65\times10^{-3}$, yet $\omega$ has already localized in the narrow central element band. Consistent with Section~\ref{sec:energy-comparison}, this supports the emergence of a strong discontinuity at crack nucleation rather than only at complete phase-field damage. Continuous finite elements approximate the jump through a steep displacement variation across a narrow band. Conventional stiffness-degradation phase-field models spread crack deformation over a finite-width damaged zone. Here $\omega$ and $\bmw$ concentrate, allowing a diffuse phase-field to coexist with a displacement discontinuity.

\begin{figure}[!htbp]
 \centering
 \includegraphics[width=\linewidth]{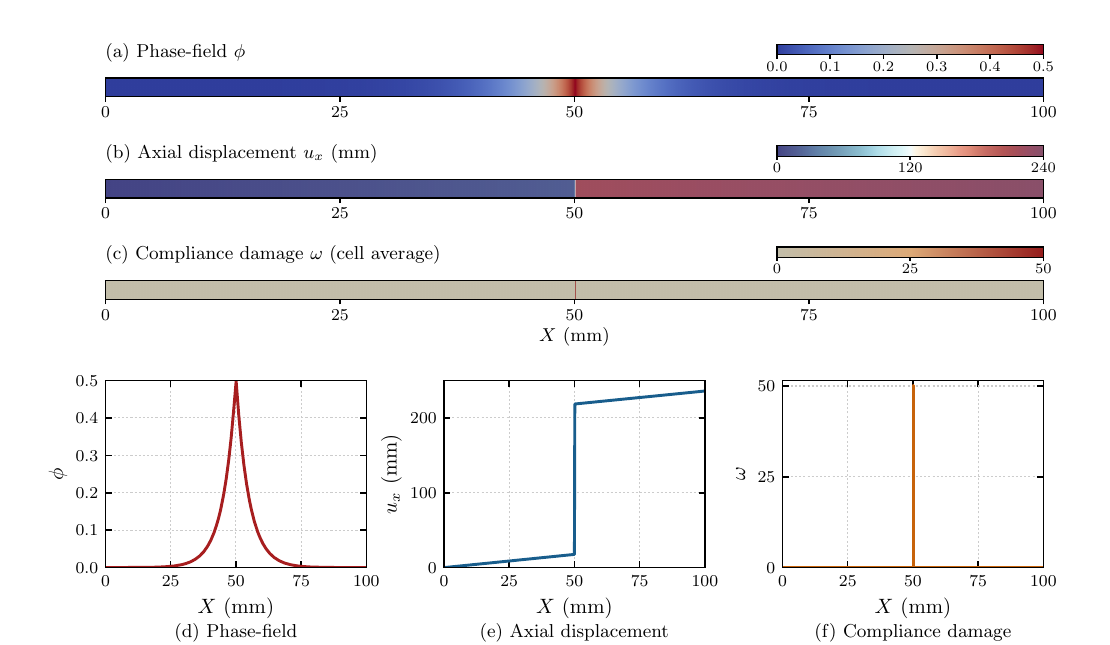}
\caption{Fields for the NH/linear-cohesive case at $\bar u=236\,\mathrm{mm}$: (a--c) phase-field $\phi$, axial displacement $u_x$, and element-averaged compliance damage $\omega$ in the reference configuration; (d--f) centerline profiles.}
 \label{fig:example1-fields}
\end{figure}

{To isolate cohesive separation from the overall response, we subtract the bulk elastic extension and obtain the crack opening $\Delta=\bar u-L[\Fb(P)-1]$.} Here $\Fb(P)$ is the inverse uniaxial bulk relation, and $\Tperp=P$ in pure opening. We use the normalized quantities $\normalizedtraction=\Tperp/\Tperpc$ and $\normalizedopening=\Delta/\Deltaperpch$ from Eq.~\eqref{eq:1d-parametric-cohesive-law}, with $\Deltaperpch=2\Gc/\Tperpc=400\,\mathrm{mm}$. Figure~\ref{fig:example1-cohesive-laws} groups the traction--opening curves by target softening law. Results for all three bulk laws agree closely with their target cohesive laws. Across all softening states compared at equal opening, the maximum absolute traction error normalized by $\Tperpc$ is no greater than $1.36\%$.

\begin{figure}[!htbp]
 \centering
 \includegraphics[width=\linewidth]{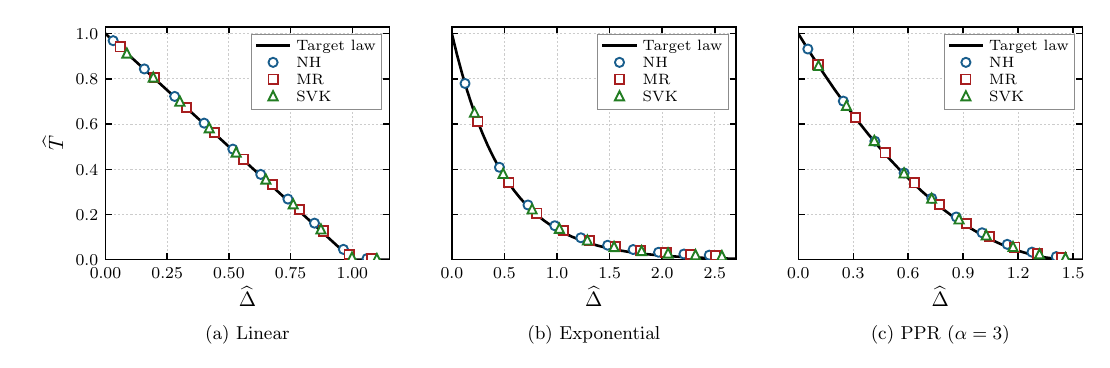}
\caption{Dimensionless traction--opening relations for nine constitutive combinations: (a) linear, (b) exponential, and (c) PPR softening. Symbols: numerical results; solid curves: target cohesive laws.}
 \label{fig:example1-cohesive-laws}
\end{figure}

We examine mesh and phase-field length effects for NH elasticity with linear cohesive softening. In Fig.~\ref{fig:example1-mesh-length-sensitivity}(a,b), fix $\ell=5\,\mathrm{mm}$ and use central axial sizes $\meshsize=0.2,0.1,0.05\,\mathrm{mm}$, or $\meshsize/\ell=1/25,1/50,1/100$, giving 158, 504, and 1792 elements. Peak nominal stresses differ by less than $0.005\%$, and all ascending branches coincide with the analytical bulk response. The softening response converges with refinement; the medium-mesh curve differs from the finest by approximately $0.60\%$.\footnote{The relative $L^2$ difference is $\lVert P-P_{\mathrm{ref}}\rVert_{L^2}/\lVert P_{\mathrm{ref}}\rVert_{L^2}$, integrated over loading displacement $\bar u$. Mesh comparisons use the finest mesh as reference over $0\leq\bar u\leq436\,\mathrm{mm}$. Length-scale comparisons below use $\ell=5\,\mathrm{mm}$ as reference over $0\leq\bar u\leq404\,\mathrm{mm}$.} All traction--opening curves approach the target linear law as the mesh is refined.

Figure~\ref{fig:example1-mesh-length-sensitivity}(c,d) fixes $\meshsize/\ell=1/50$ and compares $\ell=1,2,5\,\mathrm{mm}$, with central axial sizes $0.02,0.04,0.1\,\mathrm{mm}$. Over the common loading interval $0\leq\bar u\leq404\,\mathrm{mm}$, the nominal stress--displacement curves and cohesive laws nearly coincide; peak stresses differ by less than $0.002\%$. Relative to $\ell=5\,\mathrm{mm}$, the $\ell=1,2\,\mathrm{mm}$ curves have relative $L^2$ differences of approximately $0.020\%$ and $0.016\%$, respectively. This agrees with Section~\ref{sec:cohesive-law}: $\ell$ controls the diffuse phase-field transition, while the target cohesive law is independent of $\ell$.

\begin{figure}[!htbp]
 \centering
 \includegraphics[width=0.730\linewidth]{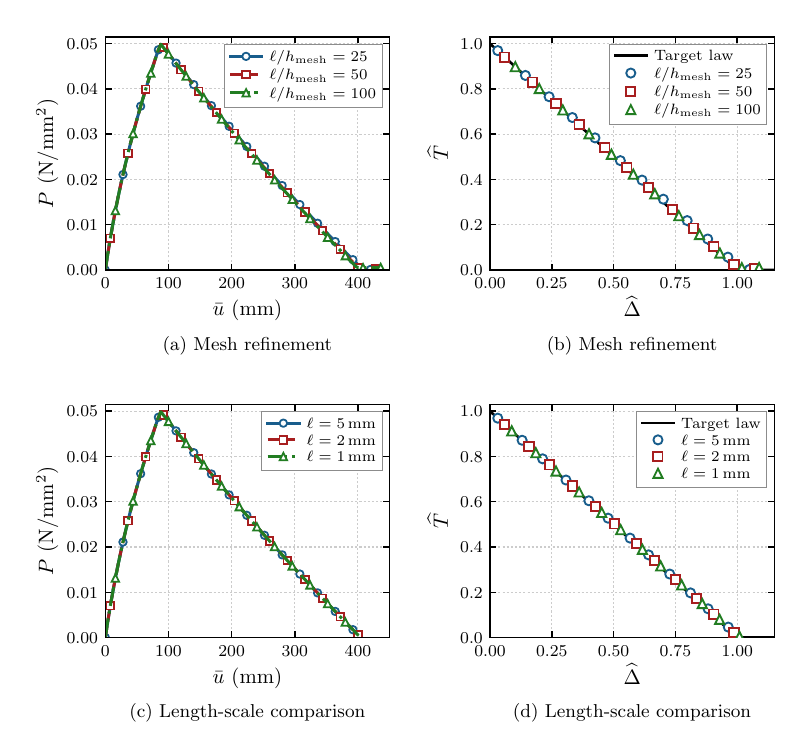}
\caption{Sensitivity of the NH/linear-cohesive case to (a,b) mesh size and (c,d) phase-field length. (a,c) Nominal stress--displacement; (b,d) traction--opening. Black curves: target cohesive law.}
 \label{fig:example1-mesh-length-sensitivity}
\end{figure}

{To test the energetic connection in Section~\ref{sec:energy-comparison}, we also compare bulk elastic energy and equivalent cohesive-interface potential with analytical CZM solutions. Figure~\ref{fig:example1-energy} reports this comparison for NH elasticity with linear, exponential, and PPR laws.} All three cases use $\meshsize/\ell=1/100$ in the concentration region. Numerical energies are $\energy_{\mathrm b}=\int_{\mathcal B}\psib\,\dd V$ and $\energy_{\Gamma}=\int_{\mathcal B}(\psic+\varphif)\,\dd V$; the latter includes cohesive stored energy and fracture energy. The analytical curves follow the single-crack loading branch~\eqref{eq:1d-CZM-functional}, with interface potential $A_0g(\Delta)$ and reference area $A_0=2\,\mathrm{mm^2}$. All three cohesive laws show close agreement.

\begin{figure}[!htbp]
 \centering
 \includegraphics[width=\linewidth]{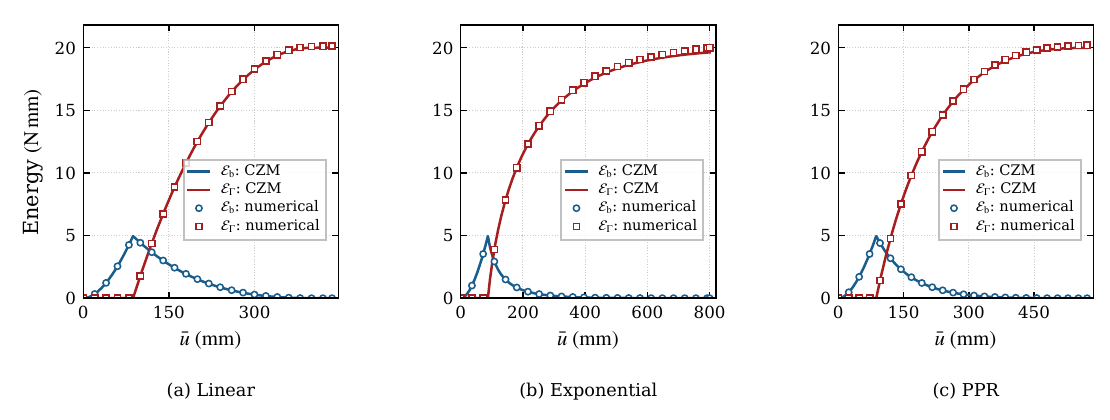}
\caption{Energy evolution for NH elasticity with (a) linear, (b) exponential, and (c) PPR cohesive laws. Solid curves: analytical CZM solutions; symbols: numerical results.}
 \label{fig:example1-energy}
\end{figure}

\FloatBarrier
\subsection{Peeling: large rotations and mixed-mode cohesive response}\label{sec:example2}
We use the two-dimensional peeling geometry reported by \citet{tian2022mixed} and also considered by \citet{tian2025large}. Constitutive laws and parameters are selected for the present model. Bending and large rotation of the upper arm generate both normal tension and tangential shear at the interface, allowing mixed-mode cohesion to be tested during crack growth. We examine the overall reaction and field distributions, then sample tractions and separations on a fixed reference section. These local quantities are compared directly with the dimensionless law of Section~\ref{sec:cohesive-law}.

Figure~\ref{fig:example2-geometry-mesh} shows the geometry, boundary conditions, and mesh. The reference length is $L=80\,\mathrm{mm}$ and height $H=16\,\mathrm{mm}$, with unit reference thickness. A starter crack of length $a_0=20\,\mathrm{mm}$ lies on the midline $Y=H/2$, with a prescribed propagation interface ahead of it. The bottom edge is fixed, $u_x=u_y=0$. An upward displacement $u_y=\bar u$ is applied over the upper-left segment of width $1\,\mathrm{mm}$, with horizontal motion free; other boundaries are traction-free. Initially $\omega=\phi=0$, and the phase-field satisfies the natural condition~\eqref{eq:phi-boundary-KKT}. The prescribed interface has reference normal $\bmN=(0,1)^\trans$, with local directional relaxation allowed near the initial tip.\footnote{Related discussions of diffuse damage and energy relaxation near a crack tip are given in \citet{Braides1995,rodella2026sharp}.}

\begin{figure}[!htbp]
 \centering
 \includegraphics[width=0.640\linewidth]{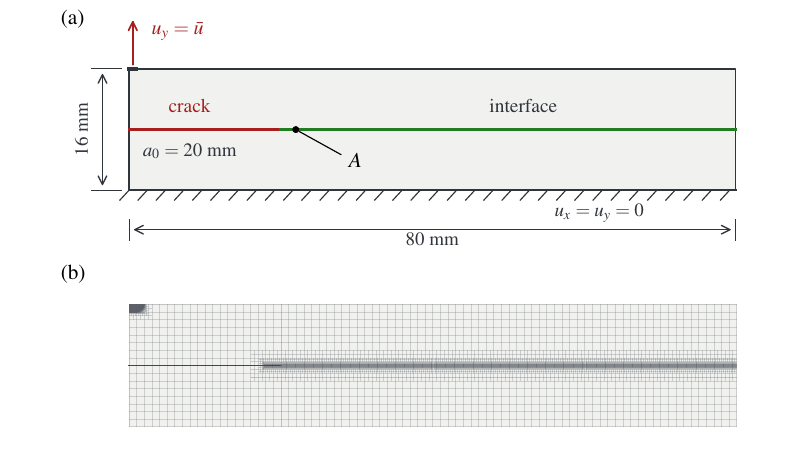}
\caption{Peeling specimen: (a) reference geometry and boundary conditions; (b) mesh. Red and green lines mark the initial crack and propagation interface; $A$ is the fixed sampling section.}
 \label{fig:example2-geometry-mesh}
\end{figure}

The bulk follows incompressible plane-stress NH elasticity~\eqref{eq:bulk-incompressible-NH} and the cohesive law is linear, $\Omega(\phi)=\phi$. Parameters are $\mu=0.03\,\mathrm{N/mm^2}$, $\Gc=0.01\,\mathrm{N/mm}$, $\Tperpc=0.01\,\mathrm{N/mm^2}$, $\ell=0.2\,\mathrm{mm}$, and $\beta=1$, with normal and tangential characteristic separations both $2\,\mathrm{mm}$. The mesh has 14618 bilinear quadrilaterals, with size $\meshsize=0.0625\,\mathrm{mm}$ near the interface, graded to $1\,\mathrm{mm}$ away from it, and additional refinement at the loading segment. This fixed, locally refined mesh uses an initial load increment $0.5\,\mathrm{mm}$, reduced when needed.\footnote{The computational compliance-damage cap is $\omega=10^4$. At $\bar u=15.5\,\mathrm{mm}$, one distorted, already damaged element near the initial tip was deactivated and equilibrium recomputed at the same displacement; the element remains in the output and field plots.}

Figure~\ref{fig:example2-fields} shows deformations and fields at $\bar u=25$, $44.75$, and $60\,\mathrm{mm}$. The upper arm bends and rotates as the opened region advances rightward. The highly damaged, nearly separated band stretches with the opening, while the reference phase-field retains a narrow continuous transition. Compliance damage concentrates near the interface and advances with the crack. Concentrated compliance damage and crack strain thus accommodate the widening opening without comparable widening of the reference band. All panels use the same spatial scale and actual deformation, with all output elements retained. The common compliance-damage color limit is $\omega_{\max}=10^4$.

\begin{figure}[!htbp]
 \centering
 \includegraphics[width=\linewidth]{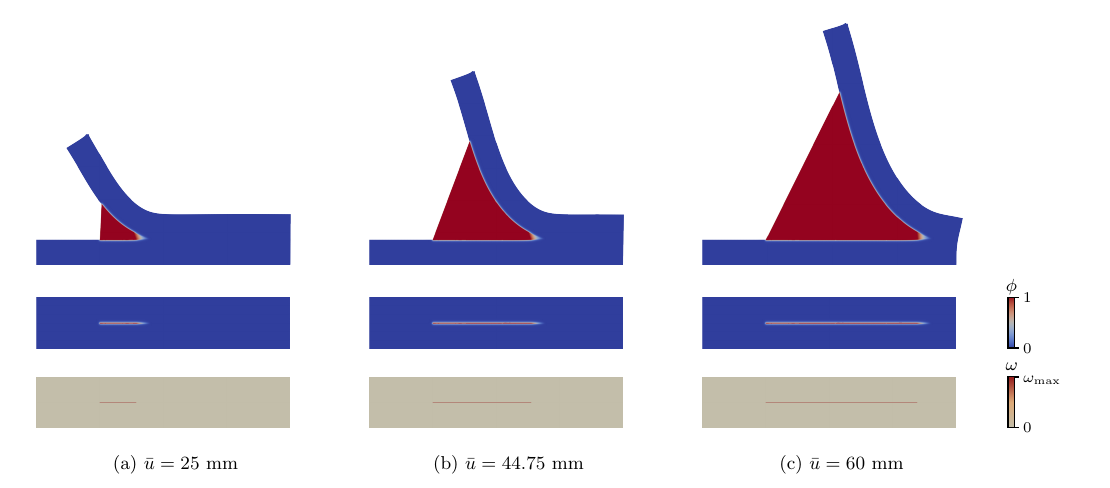}
\caption{Peeling fields. Rows: current-configuration phase-field $\phi$, reference-configuration phase-field $\phi$, and element-averaged compliance damage $\omega$. Columns correspond to the indicated loading displacements.}
 \label{fig:example2-fields}
\end{figure}

Figure~\ref{fig:example2-response} gives the overall and local responses. In Fig.~\ref{fig:example2-response}(a), the vertical reaction $F_y$ is plotted against displacement $\bar u$. It reaches $F_y\approx1.68\times10^{-2}\,\mathrm{N}$ at $\bar u\approx14\,\mathrm{mm}$, then decreases and remains at a comparable level during sustained propagation. Late in propagation, it drops substantially and subsequently rises. In Fig.~\ref{fig:example2-response}(b), crack length $a$ is the reference horizontal coordinate of the rightmost point on the interface at which $\phi=0.95$, or $a=a_0$ before such a point exists. It tracks the advance of the nearly separated region from the initial tip toward the right end.\footnote{Path tracking ends as the crack passes through the specimen. The separated upper arm retains rigid-body degrees of freedom unconstrained by the displacement boundary conditions; its subsequent motion is not followed.}

We test local mixed-mode cohesion at the fixed reference section $A$ ($X_A\simeq22\,\mathrm{mm}$) in Fig.~\ref{fig:example2-geometry-mesh}(a). Traction is averaged over the damaged band; separation is obtained by integrating crack strain along the reference normal. Figure~\ref{fig:example2-response}(c) shows traction components normalized using Eq.~\eqref{eq:mixed-mode-component-normalization}. Both normal tension and tangential shear are substantial during softening, confirming mixed-mode separation. The effective quantities~\eqref{eq:mixed-mode-effective-variables} allow direct comparison of the local damage-loading branch with the target envelope.\footnote{Following \citet{feng2026multilevel}, we divide both integrated separations by $1+\meshsize/(2\ell)=1.15625$. This quasi-one-dimensional finite-band-width correction follows from the mesh band width and $\ell$. For linear $\Omega=\phi$, the plateau term in Eq.~\eqref{eq:1d-positive-width-energy} increases the stationary separation by this factor.}

Figure~\ref{fig:example2-response}(d) shows the measured mixed-mode law at section $A$. The dimensionless effective traction--separation curve closely follows the target line $\normalizedeffectivetraction=1-\normalizedeffectiveopening$ in Eq.~\eqref{eq:mixed-mode-parametric-cohesive-law}. Traction components change as the current normal rotates, but the effective response retains the prescribed envelope. The target cohesive law is thus recovered under finite deformation and tensile--shear coupling.

\begin{figure}[!htbp]
 \centering
 \includegraphics[width=0.730\linewidth]{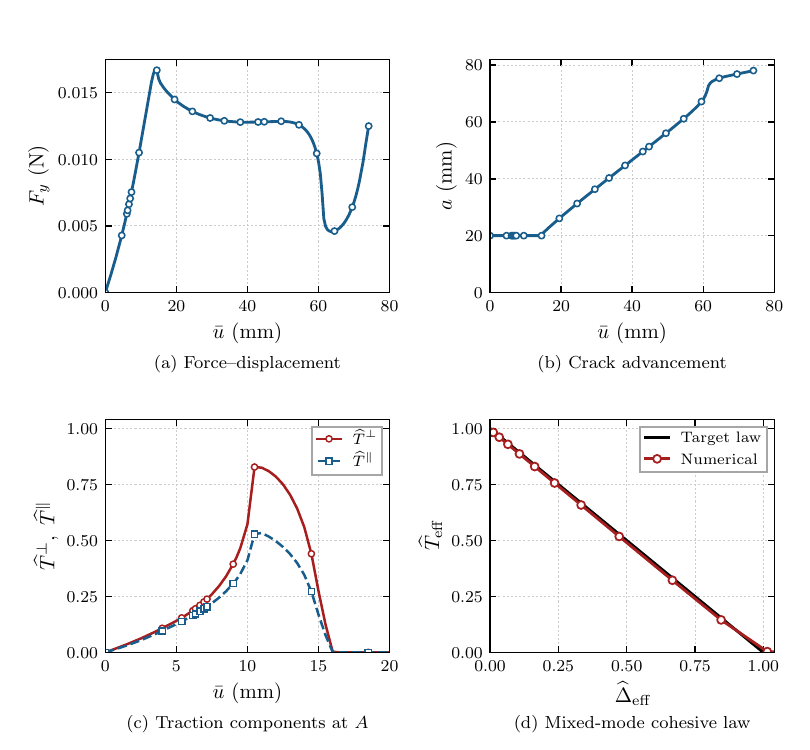}
\caption{Peeling response: (a) vertical reaction--displacement; (b) crack length; (c) normal and tangential tractions at section $A$; (d) mixed-mode cohesive law at that section. Black line: target envelope.}
 \label{fig:example2-response}
\end{figure}

\FloatBarrier
\subsection{Three-dimensional single-edge-notched specimen: competing tensile and shear strengths and out-of-plane deformation}\label{sec:example3}
We examine a three-dimensional single-edge-notched thin plate under horizontal shear. The aim is to determine how the relative tensile and shear strengths affect the crack path and its coupling with out-of-plane deformation. Both calculations use the same geometry, boundary conditions, compressible three-dimensional NH law, fracture energy, and normal cohesive strength. Strength ratios $\beta=\Tperpc/\Tparallelc$ of 10 and 0.1 make the specimen relatively weaker in shear and tension, respectively.

Figure~\ref{fig:example3-geometry-mesh}(a) shows the reference geometry and boundary conditions. The reference axes $X,Y,Z$ align with the loading, height, and thickness directions, respectively. The in-plane dimensions are $L=H=1\,\mathrm{mm}$ and the thickness is $B=0.1\,\mathrm{mm}$. A through-thickness edge notch centered on the left edge has length $a_0=0.5\,\mathrm{mm}$ and width $0.002\,\mathrm{mm}$. The bottom face $Y=0$ is fixed, $u_x=u_y=u_z=0$; the top face $Y=H$ is subjected to horizontal displacement $u_x=\bar u$ with $u_y=u_z=0$ constrained. Other surfaces are traction-free. The phase-field satisfies the natural condition~\eqref{eq:phi-boundary-KKT} on all external surfaces. No propagation path is prescribed ahead of the notch.

\begin{figure}[!htbp]
 \centering
 \includegraphics[width=\linewidth]{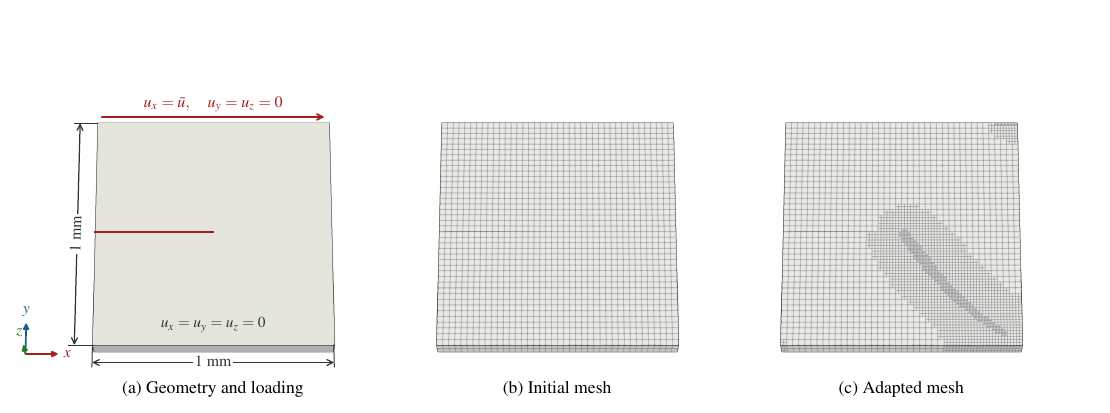}
\caption{Reference configuration of the three-dimensional single-edge-notched specimen: (a) geometry and shear loading; (b) initial mesh; (c) adapted mesh at $\beta=0.1$ and $\bar u=0.284\,\mathrm{mm}$. Thickness: $B=0.1\,\mathrm{mm}$.}
 \label{fig:example3-geometry-mesh}
\end{figure}

The bulk uses the NH potential~\eqref{eq:bulk-NH-energy} with $\mu=2\,\mathrm{N/mm^2}$ and $\nu=0.4$, giving first Lam\'e constant $\lambda=8\,\mathrm{N/mm^2}$. Linear cohesive softening, $\Omega(\phi)=\phi$, uses $\Gc=0.025\,\mathrm{N/mm}$, $\Tperpc=1\,\mathrm{N/mm^2}$, and $\ell=0.04\,\mathrm{mm}$. Ratios $\beta=10$ and 0.1 give tangential strengths $\Tparallelc$ of 0.1 and $10\,\mathrm{N/mm^2}$. Tangential characteristic separation and stiffness vary consistently as $\Deltaparallelch=\beta\Deltaperpch$ and $\kparallel=\kperp/\beta^2$; the corresponding $\Deltaparallelch$ values are 0.5 and $0.005\,\mathrm{mm}$.

Displacement and phase-field use trilinear hexahedra with $2\times2\times2$ full integration. Internal variables are stored at Gauss points. The initial mesh has 3200 elements and two layers through the thickness. Two-level adaptive refinement \citep{feng2026multilevel} targets element diameters $\sqrt{3}\ell/2$ and $\sqrt{3}\ell/3$ in the phase-field transition and compliance-damage concentration regions, respectively. Final meshes contain 9269 and 14470 elements for $\beta=10$ and 0.1; Fig.~\ref{fig:example3-geometry-mesh}(c) shows the latter. NH elasticity satisfies Proposition~\ref{prop:crack-direction-plane-reduction}. We therefore search for Gauss-point directions in the plane of the largest and smallest right principal stretches and update them during loading. All three displacement components remain active. To track instability in the $\beta=0.1$ case, we constrain the phase-field increment norm. When this constraint is active, the boundary displacement is held fixed. Elements whose vertices all satisfy $\phi>0.99$ are deactivated after a converged step, retaining their energy at deactivation. Both field sequences show actual deformation, with blue to red indicating $\phi$ from 0 to 1.

When shear strength is relatively low ($\beta=10$), the crack extends mainly rightward along the original notch direction (Fig.~\ref{fig:example3-fields-shear}). Relative sliding between the upper and lower parts increases, while the overall deformation remains nearly in-plane. Figure~\ref{fig:example3-fields-shear}(a) corresponds to the reaction peak [Fig.~\ref{fig:example3-response}(a)]; Fig.~\ref{fig:example3-fields-shear}(b,c) shows postpeak propagation.

\begin{figure}[!htbp]
 \centering
 \includegraphics[width=\linewidth]{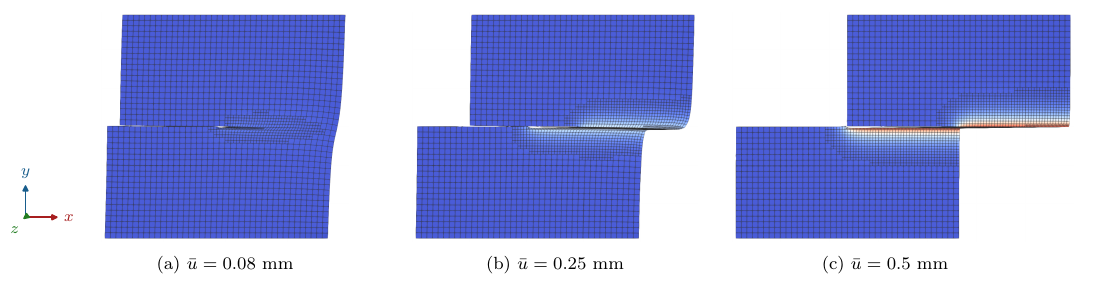}
\caption{Case with relatively low shear strength ($\beta=10$): current configuration, phase-field $\phi$, and adaptive mesh. Actual deformation is shown; regions with $\phi>0.95$ are hidden only for visualization.}
 \label{fig:example3-fields-shear}
\end{figure}

When tensile strength is relatively low ($\beta=0.1$), the crack turns onto an inclined opening path accompanied by pronounced out-of-plane bending (Fig.~\ref{fig:example3-fields-opening}). Figure~\ref{fig:example3-fields-opening}(a) corresponds to the reaction peak at $\bar u\approx0.196\,\mathrm{mm}$, when deformation remains nearly in-plane. Bending then grows rapidly after the peak, reaching its largest amplitude at $\bar u\approx0.284\,\mathrm{mm}$ [Fig.~\ref{fig:example3-fields-opening}(b)]. As the crack advances toward the lower edge, bending is progressively released. Figure~\ref{fig:example3-fields-opening}(c) shows the final state at $\bar u\approx0.288\,\mathrm{mm}$.

\begin{figure}[!htbp]
 \centering
 \includegraphics[width=\linewidth]{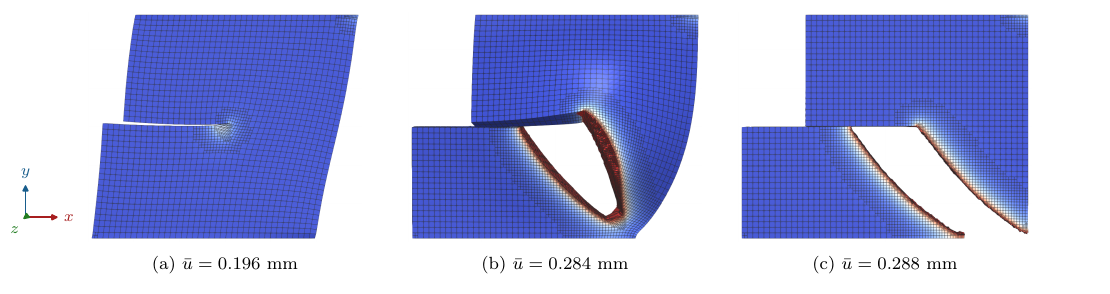}
\caption{Case with relatively low tensile strength ($\beta=0.1$): current configuration, phase-field $\phi$, and adaptive mesh. Actual deformation is shown; regions with $\phi>0.95$ are hidden.}
 \label{fig:example3-fields-opening}
\end{figure}

The two failure modes have different postpeak responses. In Fig.~\ref{fig:example3-response}(a), the $\beta=10$ reaction decreases gradually with in-plane sliding and crack growth, whereas $\beta=0.1$ produces a steeper drop as inclined opening and out-of-plane bending develop.

Figure~\ref{fig:example3-response}(b) quantifies out-of-plane deformation by $\max_{Z=B/2}|u_z|/B$, maximized over mesh nodes on the reference midplane. The $\beta=10$ value remains below 0.01, consistent with nearly in-plane sliding. For $\beta=0.1$, it reaches approximately 2.06 at $\bar u\approx0.284\,\mathrm{mm}$; Fig.~\ref{fig:example3-response}(c) views this state along $x$. Further growth reduces the final amplitude to about 0.21 and reaction to approximately $0.58\%$ of its peak. Several late equilibrium states share one boundary displacement, producing nearly vertical drops in the response curves. Thus, under the same horizontal shear, changing the relative tensile and shear strengths shifts failure from nearly in-plane sliding to inclined opening coupled with out-of-plane deformation. Out-of-plane buckling of cracked thin plates under shear was also studied by \citet{brighenti2011buckling}.

\begin{figure}[!htbp]
 \centering
 \includegraphics[width=\linewidth]{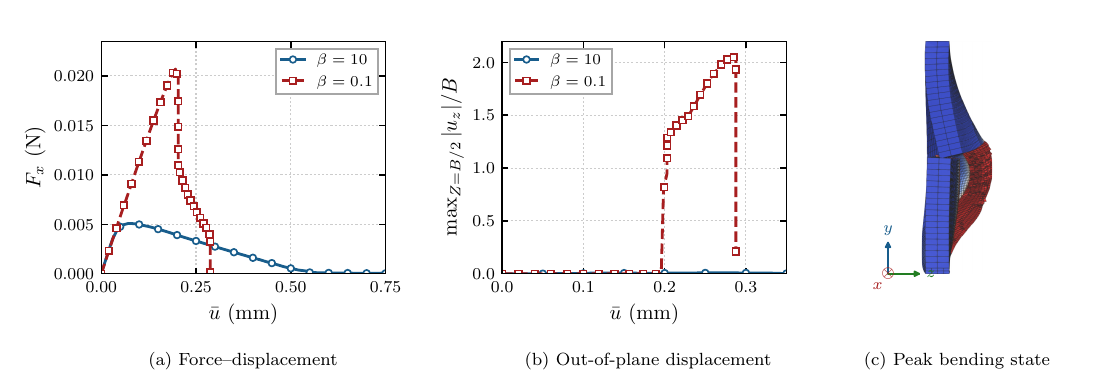}
\caption{Three-dimensional shear response: (a) horizontal reaction--displacement; (b) normalized out-of-plane displacement; (c) $yz$-plane front view at maximum bending ($\beta=0.1$, $\bar u\approx0.284\,\mathrm{mm}$).}
 \label{fig:example3-response}
\end{figure}

\FloatBarrier
\subsection{Crack growth and crack-tip fields in a hyperelastic sheet}\label{sec:example4}
A hyperelastic sheet with a single rounded edge notch is loaded in symmetric tension. This example compares the field ahead of a propagating crack with the asymptotic solution of \citet{long2011finite} and examines diffuse damage during notch-induced nucleation. The latter concerns bulk-energy relaxation and its compatibility with interfacial strength in free cohesive-crack models \citep{Braides1995,rodella2026sharp,bourdin2026strength}.

The reference sheet measures $40\times40\,\mathrm{mm}$ with unit reference thickness. A rounded notch of length $a_0=2\,\mathrm{mm}$ and tip radius $0.0125\,\mathrm{mm}$ lies on the left midline $Y=20\,\mathrm{mm}$. The upper and lower boundaries receive $u_y=\bar u$ and $u_y=-\bar u$, respectively, with lateral motion free. At the lower-left corner, $u_x=0$ removes rigid translation; all remaining boundaries are traction-free. Figure~\ref{fig:example4-geometry} shows the geometry and fixed mesh of 22949 bilinear quadrilaterals. The vertical element size near the notch and along the refined midline band is $\meshsize=0.005\,\mathrm{mm}$; five straight segments approximate the rounded tip. No crack-growth interface is prescribed, and no elements are deleted.

The bulk uses incompressible plane-stress NH elasticity with $\mu=1\,\mathrm{MPa}$. Linear cohesive softening uses $\Gc=5\,\mathrm{N/mm}$, $\Tperpc=\sqrt{50}\,\mathrm{MPa}$, and $\beta=0.1$, with phase-field length $\ell=0.1\,\mathrm{mm}$. A relatively high cohesive strength reduces the characteristic cohesive length, separating the cohesive zone more clearly from the outer $1/r$ asymptotic region:
\begin{equation}\label{eq:example4-characteristic-length}
 \ell_{\mathrm{ch}}=\frac{2\mu\Gc}{(\Tperpc)^2}
 =0.2\,\mathrm{mm},\qquad \frac{\ell}{\ell_{\mathrm{ch}}}=0.5.
\end{equation}
The discrete energy coefficient uses the mesh correction $\Gc^{\mathrm{disc}}=\Gc/[1+\meshsize/(2\ell)]$, where superscript $\mathrm{disc}$ denotes the discretization correction \citep{feng2026multilevel}. The computational cap on $\omega$ is $10^4$.

\begin{figure}[!htbp]
 \centering
 \includegraphics[width=\linewidth]{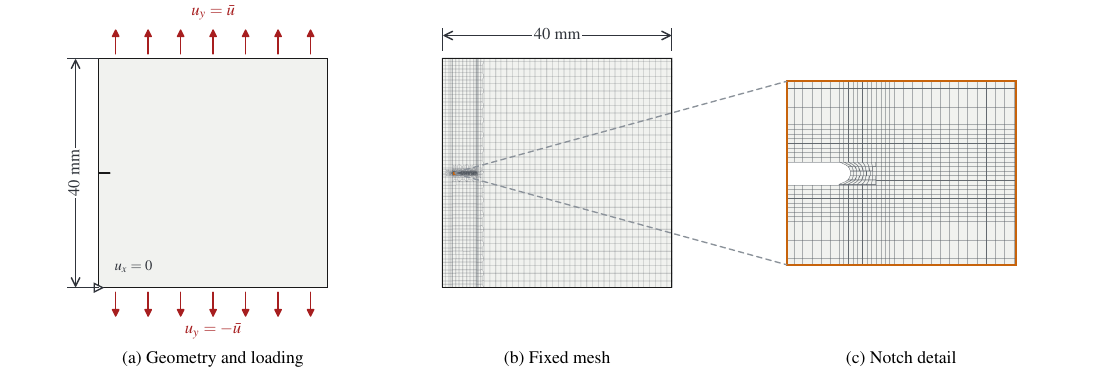}
\caption{Single-edge-notched sheet: (a) geometry and loading; (b) fixed mesh; (c) mesh near the notch.}
 \label{fig:example4-geometry}
\end{figure}

\subsubsection{Nucleation and crack growth}
The boundary displacement is increased initially. Once damage develops, we prescribe a positive fracture-energy increment to trace the equilibrium path, including displacement snap-back. This follows dissipation-control and phase-field fracture-energy-based arc-length approaches \citep{verhoosel2009dissipation,bharali2022robust}. Here $\bar u$ is the displacement of one boundary; the total relative displacement is $2\bar u$. The crack-mouth opening displacement (CMOD) is monitored but does not control loading. It is the vertical displacement difference between the initial notch faces at $X=1.5\,\mathrm{mm}$. The arc-length control quantity is the fracture energy
\begin{equation}\label{eq:example4-arc-energy}
 \energyf=\int_{\mathcal B}\varphif(\phi,\nabla\phi,\omega)\,\dd V,
\end{equation}
where $\varphif$ is the density~\eqref{eq:fracture-energy-density}, with $\Omega(\phi)=\phi$ and discrete coefficient $\Gc^{\mathrm{disc}}$. Figure~\ref{fig:example4-response} plots reaction against boundary displacement, CMOD, and $\energyf$. Circles and squares distinguish states with increasing and decreasing boundary displacement. The reaction peaks at $124.56\,\mathrm N$ for $\bar u\approx17.12\,\mathrm{mm}$, followed by pronounced displacement snap-back during growth; CMOD also decreases locally. Neither boundary displacement nor CMOD is therefore monotone along the entire path, whereas fracture-energy-increment control traces the snap-back branch.

\begin{figure}[!htbp]
 \centering
 \includegraphics[width=\linewidth]{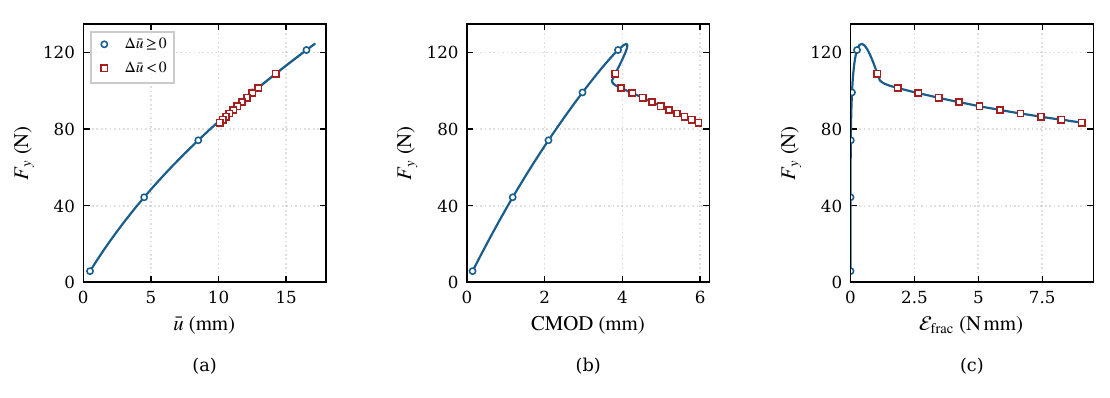}
\caption{Overall snap-back response: reaction versus (a) boundary displacement, (b) CMOD, and (c) fracture energy.}
 \label{fig:example4-response}
\end{figure}

Figure~\ref{fig:example4-early} compares reference-configuration fields near the notch at $\bar u=9.50\,\mathrm{mm}$ and $14.50\,\mathrm{mm}$. During the early stages of nucleation, $\phi$, $\omega$, and $\bmw$ are diffuse ahead of the rounded tip, extending beyond a single midline element layer. With increasing load, $\omega$ and $\bmw$ grow and their supports extend ahead of the notch. This diffuse response retains the deformation modes needed for bulk-energy relaxation outside the concentrated crack. Lower semicontinuity and relaxation of free cohesive-crack functionals require compatibility of bulk energy and interfacial strength \citep{Braides1995,rodella2026sharp}; strength-degradation phase-field models also retain the associated nonlinear diffuse deformation \citep{bourdin2026strength}. Logarithmic scales for $\omega$ and $\bmw$ in Fig.~\ref{fig:example4-early} reveal low-amplitude regions, with a common range for each field at both states.

\begin{figure}[!htbp]
 \centering
 \includegraphics[width=\linewidth]{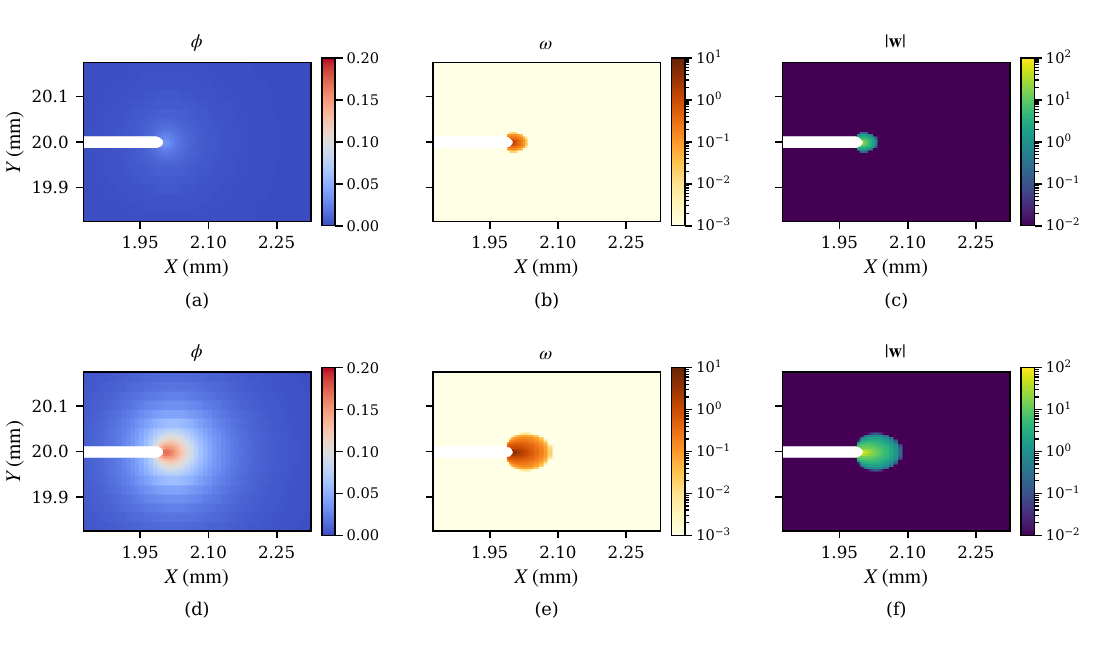}
\caption{Reference-configuration fields near the notch. Rows: $\bar u=9.50$ and $14.50\,\mathrm{mm}$; columns: $\phi$, $\omega$, and $|\bmw|$.}
 \label{fig:example4-early}
\end{figure}

Figure~\ref{fig:example4-developed} shows local fields after substantial propagation, at $\bar u=10.09\,\mathrm{mm}$ and CMOD $5.96\,\mathrm{mm}$. Phase-field $\phi$ and compliance damage $\omega$ are plotted in the reference configuration; crack strain $\bmw$ is shown in the deformed configuration. Both $\omega$ and $\bmw$ concentrate within a narrow band, approximating a singular distribution. The phase-field $\phi$ retains a continuous transition. Panel (c) shows sparse direction arrows only for sufficiently large $\bmw$ in the reference midline layer, indicating the separation direction. All elements are retained; opening is displayed through actual deformation and field values.

\begin{figure}[!htbp]
 \centering
 \includegraphics[width=\linewidth]{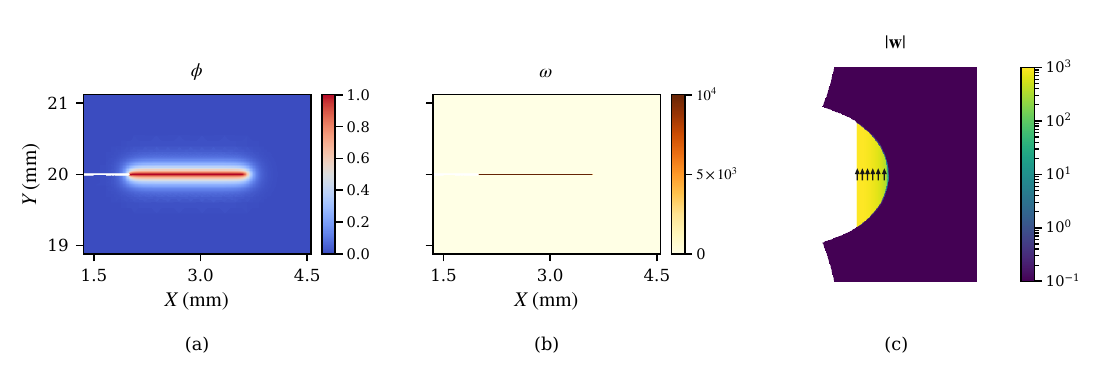}
\caption{Local fields of the propagated crack: (a) phase-field $\phi$; (b) compliance damage $\omega$; (c) crack strain $\bmw$ and its direction in the deformed configuration.}
 \label{fig:example4-developed}
\end{figure}

\subsubsection{Crack-tip stress distribution}
{The propagated state allows a local test against nonlinear fracture mechanics beyond the preceding field comparisons. We therefore extract Cauchy (true) stress ahead of the crack from elements adjacent to the reference midline and compare it with the asymptotic solution of \citet{long2011finite}.} Stress recovery follows the code's three-dimensional incompressible plane-stress convention, with three-dimensional volume ratio $J^{(3)}=1$ and hence $\sigma_{22}=(\bmP\bmF^\trans)_{22}/J^{(3)}=(\bmP\bmF^\trans)_{22}$. Here $J^{(3)}$ differs from the in-plane Jacobian $J$. We compute this quantity from the full variational stress at each Gauss point. We then average it with reference-volume weights and combine adjacent elements above and below the midline. In the sampling columns used for fitting, $\omega=0$ and $\bmw=\mathbm0$, so total and bulk deformation coincide. For the NH material used here, the asymptotic field ahead of the tip derived by \citet{long2011finite} is
\begin{equation}\label{eq:example4-long-stress}
 \sigma_{22}\sim\frac{\mathcal{J}}{\pi r},\qquad r=X-a>0,
\end{equation}
where $\mathcal{J}$ is the mechanical energy flux defined in the next subsection, $a$ is the effective tip position, and $r$ is distance in the reference configuration.

Figure~\ref{fig:example4-tip}(a) compares $\phi$ and normalized compliance $\omega/(1+\omega)$. In panel (b), a red open circle marks the distinct stress kink $A$ at $X\approx3.754\,\mathrm{mm}$. This is where $\omega$ vanishes, identifying the cohesive zone front in this state. Cohesive effects act to the left of $A$; data to its right are compared with Long's asymptotic field. The fits in panels (b,c) use the data over $X\in[3.753,4.007]\,\mathrm{mm}$, indicated by gray shading and red open symbols in (b). The amplitude is supplied independently by the domain-integral value $\mathcal{J}$ from the next subsection; only the effective tip $a$ and nonsingular background stress $\sigma_0$ are fitted. Vertical dashed lines in (a,b) mark the resulting $a\approx3.703\,\mathrm{mm}$. The log--log plot of $\sigma_{22}-\sigma_0$ against $r$ in (c) shows close agreement with Long's $1/r$ asymptotic field over the selected interval outside the process zone.

\begin{figure}[!htbp]
 \centering
 \includegraphics[width=\linewidth]{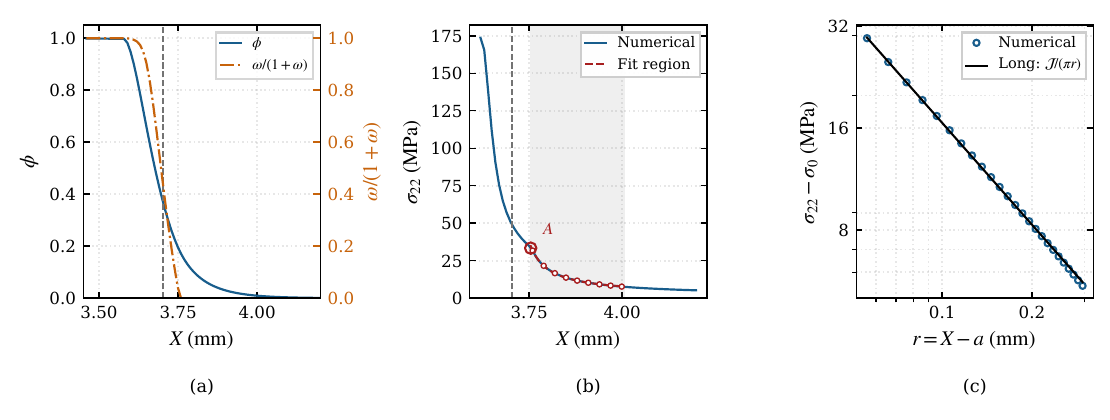}
\caption{Crack-tip comparison: (a) phase-field $\phi$ and normalized compliance $\omega/(1+\omega)$; (b) Cauchy stress $\sigma_{22}$; (c) log--log Cauchy-stress asymptotics. $A$ marks the stress kink at the cohesive zone front; gray shading indicates the fitting interval.}
 \label{fig:example4-tip}
\end{figure}

\subsubsection{Mechanical and fracture-energy fluxes}
{The preceding stress comparison requires an independent measure of its amplitude, while the diffuse damage raises a further question about the energy required for propagation. We address both by evaluating the mechanical and fracture-energy fluxes in the same state. The mechanical $\mathcal{J}$-integral introduced by \citet{rice1968path} measures the energetic driving force for crack growth.} Following \citet{moran1987crack}, we use a weight function $q$ to convert the contour integral to an equivalent domain integral. Per unit reference thickness,
\begin{equation}\label{eq:example4-J-definitions}
 \mathcal{J}=\int_{\mathcal B}
 \bigl(P_{iJ}u_{i,1}-\psi\delta_{1J}\bigr)q_{,J}\,\dd A,
\end{equation}
where $\psi$ is the recoverable energy density~\eqref{eq:split} and $\bmP$ is the full variational stress. The weight $q$ equals one inside the integration box and zero outside, varying continuously across edge strips of width $0.02\,\mathrm{mm}$. Figure~\ref{fig:example4-J}(a) shows four contours enclosing the propagated tip, used to assess the reliability of the integral. Their left edges lie in the nearly fully damaged wake, and their upper and lower integration strips are fixed; only the right edge is varied.

{To compare this mechanical driving force with the energy required for crack formation, we evaluate the fracture-energy flux from the phase-field energy--momentum balance of \citet{karma2005crack,hakim2009laws}.} Take the flux resisting propagation in the positive $X_1$ direction as positive and denote it by $\mathcal{J}_{\mathrm{frac}}$. The fracture-energy density~\eqref{eq:fracture-energy-density} gives the decomposition
\begin{equation}\label{eq:example4-J-fracture}
 \mathcal{J}_{\mathrm{frac}}=\mathcal{J}_{\mathrm{frac}}^{(1)}+\mathcal{J}_{\mathrm{frac}}^{(2)},
\end{equation}
where $\mathcal{J}_{\mathrm{frac}}^{(1)}$ contains the local phase-field and gradient contributions and $\mathcal{J}_{\mathrm{frac}}^{(2)}$ is the compliance-damage/phase-field coupling contribution:
\begin{equation}\label{eq:example4-J-partition}
 \begin{aligned}
 &\mathcal{J}_{\mathrm{frac}}^{(1)}=\int_{\mathcal B}
 \left[\frac{\Gc^{\mathrm{disc}}}{2\ell}
 \bigl(\phi^2+\ell^2\lvert\nabla\phi\rvert^2\bigr)\delta_{1J}
 -\xi_J\phi_{,1}\right]q_{,J}\,\dd A,\\[0.4em]
 &\mathcal{J}_{\mathrm{frac}}^{(2)}=\frac{\Gc^{\mathrm{disc}}}{\ell}
 \int_{\mathcal B}(1-\phi)^2\omega\,q_{,1}\,\dd A,
 \end{aligned}
\end{equation}
where $\boldsymbol{\xi}=\partial_{\nabla\phi}\varphif=\Gc^{\mathrm{disc}}\ell\nabla\phi$ is energetically conjugate to the phase-field gradient. Suppose that quasistatic propagation satisfies the field equations, with no additional dissipation and all fields translating steadily with the tip. The mechanical and fracture fluxes then satisfy $\mathcal{J}=\mathcal{J}_{\mathrm{frac}}$.

Figure~\ref{fig:example4-J}(b) reports results for the four contours. The mechanical flux is $\mathcal{J}/\Gc\approx1.04202$, with relative range approximately $6.41\times10^{-8}$, indicating insensitivity to the chosen right-edge location. The fracture flux is $\mathcal{J}_{\mathrm{frac}}/\Gc\approx1.04096$, and their difference is $(\mathcal{J}-\mathcal{J}_{\mathrm{frac}})/\Gc\approx1.06\times10^{-3}$, showing near balance. Both exceed the input toughness $\Gc$ by approximately $4\%$. The decomposition gives $\mathcal{J}_{\mathrm{frac}}^{(1)}/\Gc\approx1.00181$ and $\mathcal{J}_{\mathrm{frac}}^{(2)}/\Gc\approx0.03916$. The local phase-field and gradient contributions are already very close to $\Gc$. The excess arises mainly from the $(1-\phi)^2\omega$ coupling term.

In the complete-separation limit of the one-dimensional concentrated solution, $\phi^*\to1$ and the coupling term vanishes. Here a diffuse damaged region first forms ahead of the notch before the crack localizes and propagates, and the coupling flux remains nonzero. This is consistent with irreversibility retaining the damage history of nucleation, and reflects coexistence of diffuse damage and concentrated fracture in the propagated state.

For a nearly crack-like V-notch, \citet{bourdin2026strength} found a nucleation stress-intensity factor approximately $6\%$ above the sharp-interface result after removing mesh-induced toughening. Through the antiplane relation $G\propto K^2$, this corresponds to an energy increase of approximately $12\%$, also associated with a nonlinear deformation region formed before nucleation. Despite differences in model and loading, this result is consistent with the present flux decomposition. Diffuse response during nucleation can raise the fracture energy above the interfacial cost of an ideal concentrated crack. Multidimensional $\Gamma$-convergence analysis should help characterize this history effect and its relation to concentration and relaxation limits.

\begin{figure}[!htbp]
 \centering
 \includegraphics[width=\linewidth]{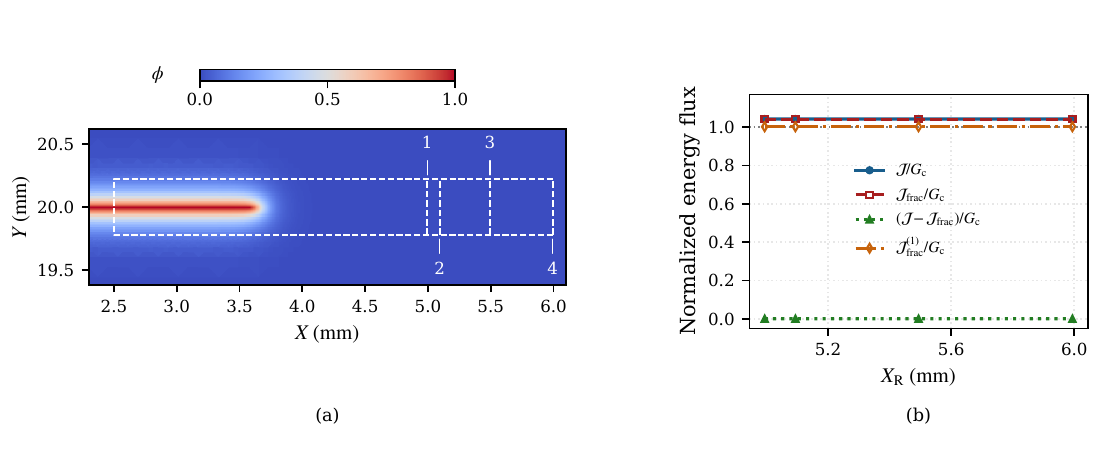}
\caption{Energy fluxes in the propagated state: (a) phase-field $\phi$ and integration contours; (b) normalized mechanical and fracture fluxes and their difference. $\mathcal{J}_{\mathrm{frac}}^{(1)}$ excludes the coupling term.}
 \label{fig:example4-J}
\end{figure}

\FloatBarrier
\subsection{Randomly perforated elastomer sheet: abrupt fracture and complex crack paths}\label{sec:example5}

We reproduce the tensile experiment on a randomly perforated hyperelastic sheet reported by \citet{lee2024size}. Sequential failure of ligaments between holes causes sudden load drops, while interactions among holes lead to pronounced crack deflection and coalescence. This example tests the model's ability to predict abrupt brittle failure and complex crack paths. A CZM can also describe brittle failure when its characteristic cohesive length is much smaller than the structural scale.

The specimen measures $20\times20\,\mathrm{mm}$ in-plane and has thickness $B=0.5\,\mathrm{mm}$. A two-dimensional plane-stress model reconstructs the 12 circular and elliptical holes from the experiment [Fig.~\ref{fig:example5-geometry-mesh}(a)]. The bottom edge is fixed, $u_x=u_y=0$; the top receives vertical displacement $u_y=\bar u$ with $u_x=0$ constrained. Side edges and hole boundaries are traction-free, and natural phase-field conditions are used. No inter-hole cracks or weakened paths are prescribed.

\begin{figure}[!htbp]
 \centering
 \includegraphics[width=\linewidth]{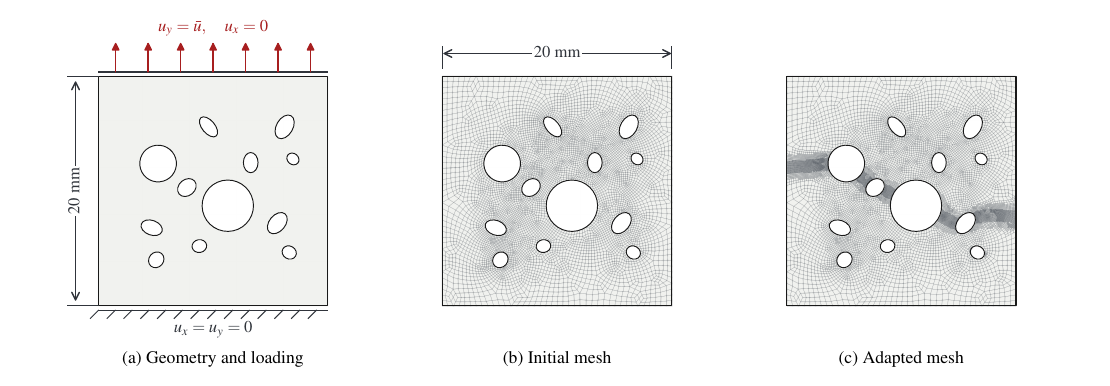}
\caption{Reference configuration of the randomly perforated sheet: (a) geometry and boundary conditions, reconstructed from \citet{lee2024size}; (b) initial mesh; (c) adapted mesh at $\bar u=13\,\mathrm{mm}$.}
 \label{fig:example5-geometry-mesh}
\end{figure}

The bulk uses incompressible plane-stress generalized NH elasticity [Eq.~\eqref{eq:bulk-GNH-energy}] with $\mu=0.22448\,\mathrm{N/mm^2}$, $b=0.04733$, and $p=0.75$. Linear cohesive softening uses $\Gc=0.5\,\mathrm{N/mm}$, $\Tperpc=0.56\,\mathrm{N/mm^2}$, $\ell=0.2\,\mathrm{mm}$, and $\beta=0.1$. Displacement and phase-field use bilinear quadrilaterals with $2\times2$ full integration; internal variables are updated at Gauss points. The initial mesh contains 6764 elements. Adaptive refinement targets diameters $\ell$ and $\ell/2$ in the transition and damage-concentration regions, respectively, giving 13931 final elements [Fig.~\ref{fig:example5-geometry-mesh}(c)]. We use an enhanced dissipation-based arc-length algorithm \citep{rorentrop2024time,rorentrop2026computationally} with the implementation in Section~\ref{sec:numerical-implementation}. This gives stable, converged solutions through abrupt failure.

Figure~\ref{fig:example5-fields} compares the main experimental and simulated failure stages. Macroscopic stretch is $\lambda=1+\bar u/L_0$ with $L_0=20\,\mathrm{mm}$. Dissipation-based arc-length control permits successive crack-growth states at a fixed displacement, so several saved states may share one displacement. Initially the holes elongate while intervening ligaments remain intact [columns (i--iii)]. The central hole then coalesces with neighboring small holes [column (iv)] and connects to the larger upper-left hole, forming an enlarged cavity [column (v)]. Once the cavity reaches the left edge, the right ligament contracts and ruptures [column (vi)], ultimately separating the sheet into upper and lower parts [column (vii)]. Simulation and experiment show similar coalescence, breakthrough at one edge, and ligament failure, but at somewhat different applied stretches. In the simulation, the right ligament contracts and nearly separates during the fixed-displacement stage $\lambda=1.611$. In the experiment, a connecting ligament remains on the right at $\lambda=1.637$. The model captures the observed sequence of coalescence, edge breakthrough, and final ligament failure.
\begin{figure}[!htbp]
 \centering
 \includegraphics[width=\linewidth]{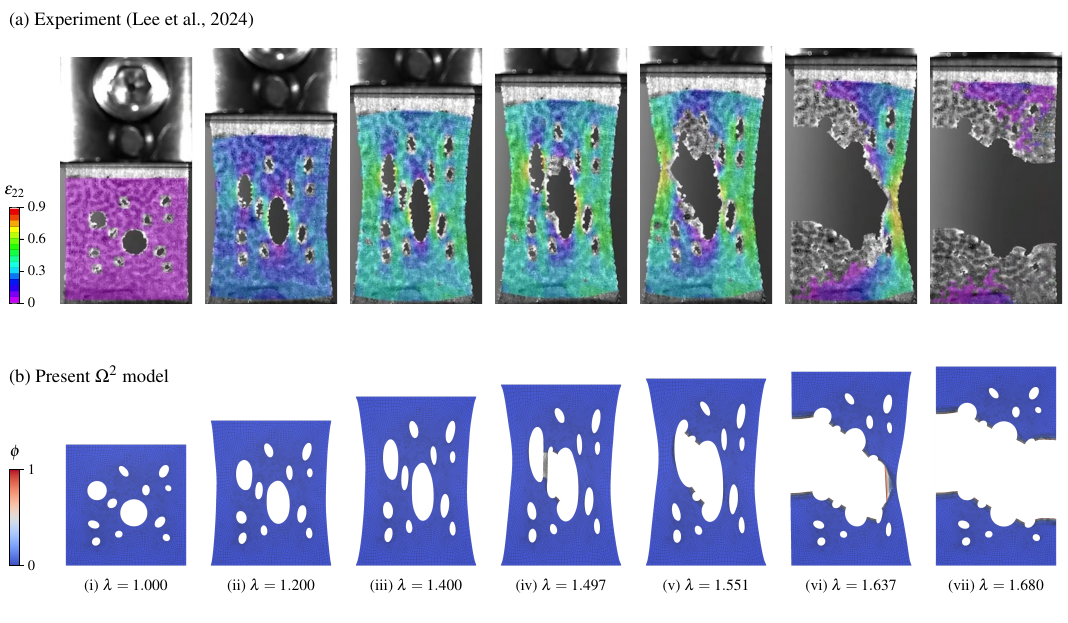}
\caption{Failure stages of the perforated sheet: (a) experimental images from Fig.~9(a) of \citet{lee2024size}; (b) current configuration and damage field of the present model. Color scales show experimental strain $\varepsilon_{22}$ and computed phase-field $\phi$. Shared column labels use experimental stretches; actual simulation sampling is discussed in the text. Regions with $\phi>0.95$ are hidden only for visualization.}
 \label{fig:example5-fields}
\end{figure}

Sequential ligament failure produces multiple sudden reaction drops (Fig.~\ref{fig:example5-response}). Experimental data are taken from Fig.~10 of \citet{lee2024size}. Stretch $\lambda$ and nominal stress $P$ are converted to displacement and force using $\bar u=20(\lambda-1)\,\mathrm{mm}$ and $F_y=P\times(20\times0.5\,\mathrm{mm^2})$. The ascending branches are close. The computed peak is $1.823\,\mathrm{N}$ at $\bar u=9.779\,\mathrm{mm}$; the digitized experimental peak is approximately $1.864\,\mathrm{N}$, a difference of about $2.2\%$.

\begin{figure}[!htbp]
 \centering
 \setlength{\abovecaptionskip}{2pt}
 \includegraphics[trim=0 12bp 0 0,clip,width=0.608333\linewidth]{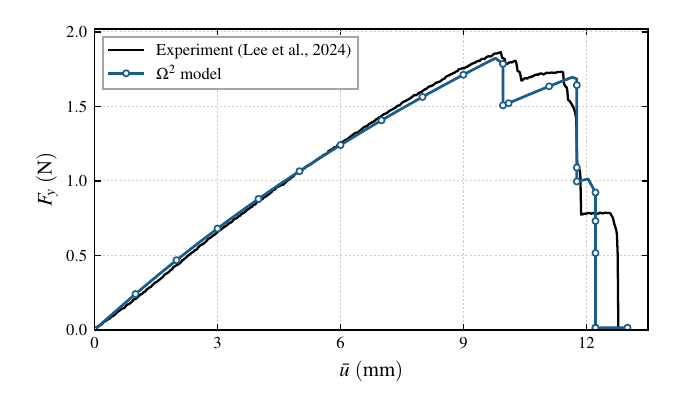}
\caption{Vertical reaction--displacement response of the perforated sheet, compared with the experiment in Fig.~10 of \citet{lee2024size}.}
 \label{fig:example5-response}
\end{figure}

Postpeak coalescence first reduces the reaction. The remaining ligaments carry renewed load as stretching continues, then lose capacity through further rupture. Two main computed drops near $\bar u\approx11.76\,\mathrm{mm}$ and $12.22\,\mathrm{mm}$ correspond to edge breakthrough and final separation between columns (v) and (vi) of Fig.~\ref{fig:example5-fields}. The final drop occurs slightly earlier than in the experiment.
Together, the overall response and crack morphology show that the model captures inter-hole rupture, deflection, coalescence, and load drops caused by abrupt local failure in a complex perforated geometry without prescribed crack paths.

\FloatBarrier
\section{Conclusions and perspectives}\label{sec:conclusions}
This paper has formulated a kinematically and variationally consistent phase-field theory for cohesive fracture at finite deformation, in which continuous bulk deformation and concentrated crack separation can coexist. Its rank-one split retains independent crack-strain and reference-normal fields without prescribing displacement jumps, together with independent compliance damage $\omega$. General objective hyperelastic bulk potentials and admissible mixed-mode cohesive laws enter as independent constitutive inputs. Minimum potential energy links local separation to macroscopic equilibrium, with crack-plane rotation generating the geometric stress required for angular momentum balance.

Within a one-dimensional single-band family, energy comparisons at fixed boundary displacement favor Dirac-type concentration of crack strain and compliance damage. A displacement jump emerges at finite $\ell$ while the phase-field remains continuous, and the energy takes exactly the bulk-volume and cohesive-interface form of a sharp-crack CZM. The prescribed nominal traction--separation law is recovered independently of the bulk potential and $\ell$. Concentration gives separation a sharp interfacial description, where bulk unloading and cohesive separation interact through kinematics, traction balance, and energy transfer.

The near-tip Cauchy stress closely follows the {analytical $1/r$ crack-tip asymptotic solution} of \citet{long2011finite} outside the cohesive zone at finite $\ell$, providing a stringent test of the model. The energetic connection is thus accompanied by quantitative agreement with nonlinear crack-tip mechanics. Three-dimensional calculations and comparison with the perforated-sheet experiment further support the model's description of large deformation, out-of-plane buckling, and complex crack paths.

Diffuse damage generated during nucleation can persist alongside a growing concentrated crack. In the notched hyperelastic sheet, the nearly balanced mechanical and fracture-energy fluxes exceed $\Gc$ by about $4\%$, whereas excluding the $(1-\phi)^2\omega$ coupling contribution brings the fracture flux close to $\Gc$. This contribution records the irreversible history of diffuse damage during notch-induced nucleation and remains necessary for the model's description of bulk-energy relaxation. These findings call for mathematical analysis of multidimensional concentration and relaxation limits, including $\Gamma$-convergence, to clarify how diffuse damage and irreversible loading shape the limiting fracture energy.

Looking ahead, dynamic fracture at finite deformation would bring inertia and nonlinear wave transport into the energy balance, allowing local separation to interact with crack speed and dynamic instability. An extension to finite-strain plasticity would address the complementary question of how bulk plastic dissipation and stress redistribution compete with interfacial separation to govern ductile crack initiation and growth.

\section*{CRediT authorship contribution statement}
Ye Feng: Conceptualization, Methodology, Formal analysis, Software, Validation, Investigation, Data curation, Visualization, Writing -- original draft, Writing -- review \& editing, Funding acquisition.

\section*{Declaration of competing interest}
The author declares no competing financial interests or personal relationships that could have influenced this work.

\section*{Data availability}
The data and source code supporting this study are available from the author upon reasonable request.

\section*{Acknowledgments}
The author gratefully acknowledges financial support from the National Natural Science Foundation of China (Grant Nos.~12672112 and 12302101). 

\appendix

\section{Distributional gradients and measure representations of damage}\label{app:distributional-notation}
{Let $\mathcal B$ be a bounded Lipschitz domain, $\Gamma$ a smooth oriented interface, and $\bmchi$ piecewise $C^1$ with integrable gradient and jump. For smooth compactly supported tests $\mathbm A$ and $\zeta$, define $\langle D\bmchi,\mathbm A\rangle=-\int_{\mathcal B}\bmchi\cdot\mathrm{Div}\,\mathbm A\,\dd V$ and $\langle\delta_\Gamma,\zeta\rangle=\int_\Gamma\zeta\,\dd A$, with $\mathrm{Div}$ acting rowwise on the two-point tensor $\mathbm A$. Then}
\begin{equation}\label{eq:distributional-gradient-measure}
 D\bmchi
 =\bmFreg\,\mathcal L^d\measurerestriction\mathcal B
  +(\jumpu\otimes\bmN)\,\mathcal H^{d-1}\measurerestriction\Gamma,
\end{equation}
{where $\mathcal L^d$ and $\mathcal H^{d-1}$ are Lebesgue volume and Hausdorff surface measures; $\mathcal H^0$ is counting measure. The normal $\bmN$ points from the negative to the positive crack face, and $(\mu\measurerestriction E)(A)\coloneqq\mu(A\cap E)$ denotes measure restriction.}

Since $\mathcal L^d(\Gamma)=\mathcal L^d(\jumpSet)=0$, volume integrals over $\mathcal B$, $\mathcal B\setminus\Gamma$, and $\mathcal B\setminus\jumpSet$ coincide. The fracture integral in Eq.~\eqref{eq:1d-concentrated-energy} remains over $\Gamma$, retaining irreversible damage when the displacement jump closes.

Reference position is continuous across the interface, so $\jump{\bmchi}=\jumpu$. The gradient in Eq.~\eqref{eq:strong-discontinuity-additive} is distributional, with volume density $\bmFreg$; in the model formulated with ordinary functions, $\bmF=\nabla\bmchi$ is the weak gradient of the deformation map.

\subsection{Multiplicative representations of finite-width kinematics}\label{app:classical-multiplicative-kinematics}
{For invertible $\bmFreg$, the finite-width split~\eqref{eq:classical-finite-width} admits the multiplicative representations of \citet{armero1996analysis} and \citet{oliver2003strong}. Extending the jump and normal into the band gives}
\begin{equation}\label{eq:AG-multiplicative-decompositions}
 \bmFh=\bmFreg\bmFRh=\bmFLh\bmFreg,
\end{equation}
where the right and left factors are
\begin{equation}\label{eq:classical-multiplicative-factors}
\bmFRh\coloneqq\bmIref+(\bmFreg^{-1}\jumpu)\otimes\bmN\,\delta_\Gamma^h,
 \qquad
 \bmFLh\coloneqq\bmI+\jumpu\otimes(\bmFreg^{-\trans}\bmN)\,\delta_\Gamma^h,
\end{equation}
where $\bmIref$ and $\bmI$ are the reference and spatial identity tensors; $\bmFRh$ and $\bmFLh$ act on material and spatial vectors, respectively. Let $a_h\coloneqq1+\delta_\Gamma^h\bmN\cdot\bmFreg^{-1}\jumpu$. For $a_h\neq0$, the rank-one determinant and inverse identities give
\begin{equation}\label{eq:classical-rank-one-transport}
 \det\bmFh=(\det\bmFreg)a_h,
 \qquad
 (\bmFh)^{-\trans}\bmN=\frac{\bmFreg^{-\trans}\bmN}{a_h}.
\end{equation}
For orientation-preserving $\bmFh$ and $\bmFreg$, $a_h>0$, so normalization gives Eq.~\eqref{eq:classical-normal-transport}. In a compatible deformation, tangential gradients of the jump are included in the regular part.

\subsection{Damage fields and concentrated states}\label{app:model-concentration}
{Take a continuous $\phi\in H^1(\mathcal B)$ with $0\leq\phi\leq1$ and nonnegative $\omega\in L^1(\mathcal B)$, satisfying the boundary conditions, irreversibility, and finite-energy requirements. Boundary and interface values are interpreted as traces.}

{With $\stackrel{*}{\rightharpoonup}$ denoting weak-star convergence of finite Radon measures, diffuse and concentrated compliance damage are represented by}
\begin{equation}\label{eq:omega-volume-surface-measure}
 \omega^h\,\mathcal L^d\measurerestriction\mathcal B\stackrel{*}{\rightharpoonup}\omegameasure
 =\omegareg\,\mathcal L^d\measurerestriction\mathcal B
  +\omegasurf\,\mathcal H^{d-1}\measurerestriction\Gamma,
\end{equation}
{where $\omegareg\in L^1_+(\mathcal B)$ is the diffuse density and $\omegasurf\in L^1_+(\Gamma;\mathcal H^{d-1})$ the interfacial density, with units of length.}

A purely concentrated crack-strain state $\bmw$ satisfies
\begin{equation}
 (\bmw^h\otimes\bmN^h)\,\mathcal L^d\measurerestriction\mathcal B
 \stackrel{*}{\rightharpoonup}
 (\jumpu\otimes\bmN)\,\mathcal H^{d-1}\measurerestriction\Gamma.
\end{equation}
{Together with convergence of the deformation and bulk gradients, this recovers Eq.~\eqref{eq:distributional-gradient-measure}; any diffuse crack strain contributes to its regular part. The concentration width $h$ is distinct from $\ell$. Energy limits are taken along compatible, orientation-preserving fields satisfying the boundary conditions.}

{For the one-dimensional construction in Section~\ref{sec:energy-comparison}, fix $\Delta\geq0$, $0\leq\phi^*<1$, and total compliance mass $\ell\varpi>0$. In the band centered at $X^*$, choose}
\[
 \w^h=\frac{\Delta}{h}\mathbf1_{\damageband},
 \qquad
 \omega^h=\frac{\ell\varpi}{h}\mathbf1_{\damageband}.
\]
For every continuous test function $\zeta$, $\int_{\mathcal B}\w^h\zeta\,\dd X\to\Delta\zeta(X^*)$; an analogous limit holds for $\omega^h$. Thus
\begin{equation}\label{eq:1d-concentrated-measures}
 \w^h\,\mathcal L^1\measurerestriction\mathcal B
 \overset{*}{\rightharpoonup}\Delta\delta_{X^*},
 \qquad
 \omega^h\,\mathcal L^1\measurerestriction\mathcal B
 \overset{*}{\rightharpoonup}\omegameasure
 =\ell\varpi\delta_{X^*},
\end{equation}
{where $\delta_{X^*}$ is the unit Dirac measure, $\omegareg=0$, and $\omegasurf=\ell\varpi$. The energy limit is Eq.~\eqref{eq:1d-concentrated-energy} for $\Gamma=\{X^*\}$; minimizing over $\varpi$ recovers the condensation~\eqref{eq:1d-compliance-condensation}. For $\Delta=0$, use $\omega=0$.}

For a uniform bulk stretch, the distributional displacement derivative satisfies
\begin{equation}\label{eq:1d-concentrated-gradient-measure}
 Du=(\Fb-1)\,\mathcal L^1\measurerestriction\mathcal B
    +\Delta\delta_{X^*},
 \qquad
 \Fb=1+\frac{\bar u-\Delta}{L}.
\end{equation}
{For piecewise $C^1$ nonuniform bulk deformation, replace the volume term by $u_{,X}\,\mathcal L^1\measurerestriction\mathcal B$. Then $D\chi=\mathcal L^1\measurerestriction\mathcal B+Du$ recovers Eq.~\eqref{eq:strong-discontinuity-additive}. Complete fracture is the limit $\phi^*\to1$ and, for $\Delta>0$, $\varpi\to\infty$.}

\section{Bulk hyperelastic constitutive laws}\label{app:bulk-constitutive}
This appendix lists the bulk strain-energy densities and first Piola--Kirchhoff stresses used here. Energies are per unit reference volume, with $\bmPb=\partial\psib/\partial\bmFb$.

\subsection{Compressible Neo--Hookean law}\label{app:compressible-NH}

The compressible NH strain-energy density is
\begin{equation}\label{eq:bulk-NH-energy}
 \psib(\bmFb)
 =\frac{\mu}{2}(\bmFb:\bmFb-d)
 -\mu\ln\Jb+\frac{\lambda}{2}(\ln\Jb)^2,
 \qquad \Jb=\det\bmFb>0,
\end{equation}
where $\lambda$ and $\mu$ are Lam\'e constants and $d$ is the dimension of $\bmFb$. The bulk stress is
\begin{equation}\label{eq:bulk-NH-stress}
 \bmPb=\mu(\bmFb-\bmFb^{-\trans})
       +\lambda\ln\Jb\,\bmFb^{-\trans}.
\end{equation}
In plane strain, the three-dimensional bulk gradient is $\operatorname{diag}(\bmFb,1)$, and Eq.~\eqref{eq:bulk-NH-energy} reduces to the corresponding $d=2$ expression with unchanged Lam\'e constants.

\subsection{Two-dimensional incompressible plane-stress Neo--Hookean law}\label{app:incompressible-plane-stress-NH}

For a plane-stress sheet with negligible transverse shear, the three-dimensional bulk gradient and incompressibility condition are
\begin{equation}\label{eq:bulk-incompressible-embedding}
 \bmFb^{(3)}=\operatorname{diag}(\bmFb,\lambda_3),
 \qquad\det\bmFb^{(3)}=\Jb\lambda_3=1,
 \qquad\lambda_3=\Jb^{-1},
\end{equation}
{where $\lambda_3$ is the thickness stretch and $\Jb$ the in-plane area ratio. Zero transverse traction determines the pressure; eliminating $\lambda_3$ gives an energy in $\bmFb$.}

Let $\bmC^{(3)}=(\bmFb^{(3)})^\trans\bmFb^{(3)}$. Under reduction~\eqref{eq:bulk-incompressible-embedding}, its first two principal invariants are
\begin{equation}\label{eq:bulk-plane-stress-invariants}
 I_1=\operatorname{tr}\bmC^{(3)}=a+\Jb^{-2},\qquad
 I_2=\tfrac12\bigl[(\operatorname{tr}\bmC^{(3)})^2
                 -\operatorname{tr}((\bmC^{(3)})^2)\bigr]
     =\Jb^2+\Jb^{-2}a,
 \qquad a\coloneqq\bmFb:\bmFb.
\end{equation}
Denote their derivatives with respect to the in-plane bulk gradient by
\begin{align}
 &\bulkInvGradient{1}
 \coloneqq\tfrac12\frac{\partial I_1}{\partial\bmFb}
 =\bmFb-\Jb^{-2}\bmFb^{-\trans},\label{eq:bulk-invariant-gradient-1}\\
 &\bulkInvGradient{2}
 \coloneqq\tfrac12\frac{\partial I_2}{\partial\bmFb}
 =\Jb^{-2}\bmFb+(\Jb^2-\Jb^{-2}a)\bmFb^{-\trans}.
 \label{eq:bulk-invariant-gradient-2}
\end{align}
The incompressible NH plane-stress energy and in-plane stress are
\begin{equation}\label{eq:bulk-incompressible-NH}
 \psib=\frac{\mu}{2}(I_1-3),\qquad\bmPb=\mu\bulkInvGradient{1},
\end{equation}
where $\mu>0$ is the infinitesimal shear modulus of the three-dimensional material in the reference state.
Under uniaxial bulk stretch $\Fb$ with both transverse directions traction-free, incompressibility gives a transverse stretch of $\Fb^{-1/2}$. The one-dimensional energy and axial nominal stress are
\begin{equation}\label{eq:bulk-NH-uniaxial}
 \psib(\Fb)=\frac\mu2\left(\Fb^2+\frac2{\Fb}-3\right),
 \qquad P=\psib'(\Fb)=\mu(\Fb-\Fb^{-2}).
\end{equation}

\subsection{Generalized Neo--Hookean law}\label{app:generalized-NH}

{The generalized Neo--Hookean (GNH) law of \citet{knowles1977finite}, studied in plane-stress crack-tip analysis by \citet{long2011finite} and dynamic phase-field fracture by \citet{tian2025nonlinearity}, has energy density}
\begin{equation}\label{eq:bulk-GNH-energy}
 \psib=\frac{\mu}{2b}\bigl(\xi^p-1\bigr),\qquad
 \xi\coloneqq1+\frac{b}{p}(I_1-3),\qquad \mu>0,\ b>0,\ p>0,
\end{equation}
{where $\mu$ is the reference shear modulus, $b$ and $p$ are dimensionless, and $I_1$ follows Eq.~\eqref{eq:bulk-plane-stress-invariants} under incompressible plane stress. The in-plane stress is}
\begin{equation}\label{eq:bulk-GNH-stress}
 \bmPb=\mu\xi^{p-1}\bulkInvGradient{1}.
\end{equation}
Three-dimensional incompressibility ensures $I_1\geq3$ and thus $\xi\geq1$. For $p=1$, the model reduces to incompressible NH elasticity~\eqref{eq:bulk-incompressible-NH}.

\subsection{Mooney--Rivlin law}\label{app:Mooney-Rivlin}

The Mooney--Rivlin energy is a linear combination of the first two principal invariants of the right Cauchy--Green tensor. In two-dimensional incompressible plane stress,
\begin{equation}\label{eq:bulk-MR-energy}
 \psib=\frac{\mu}{2}\bigl[c_{10}(I_1-3)+c_{01}(I_2-3)\bigr],
 \qquad c_{10}+c_{01}=1,
\end{equation}
where $c_{10}$ and $c_{01}$ are dimensionless. This normalization makes $\mu$ the reference infinitesimal shear modulus. The conventional dimensional coefficients are $C_{10}=\mu c_{10}/2$ and $C_{01}=\mu c_{01}/2$. The in-plane stress is
\begin{equation}\label{eq:bulk-MR-stress}
 \bmPb=\mu(c_{10}\bulkInvGradient{1}+c_{01}\bulkInvGradient{2}).
\end{equation}
The choice $c_{10}=1$, $c_{01}=0$ recovers incompressible NH elasticity.
For traction-free transverse contraction under uniaxial loading, $I_1=\Fb^2+2/\Fb$ and $I_2=2\Fb+\Fb^{-2}$, giving
\begin{equation}\label{eq:bulk-MR-uniaxial}
\psib(\Fb)=\frac\mu2\left[c_{10}(\Fb^2+2/\Fb-3)
                         +c_{01}(2\Fb+\Fb^{-2}-3)\right].
\end{equation}

The axial nominal stress is $P=\mu[c_{10}(\Fb-\Fb^{-2})+c_{01}(1-\Fb^{-3})]$.

\subsection{Saint Venant--Kirchhoff law}\label{app:Saint-Venant-Kirchhoff}

{The Saint Venant--Kirchhoff model is quadratic in the bulk Green--Lagrange strain. With $\bmIref$ the reference identity, the strain and second Piola--Kirchhoff stress are}
\begin{equation}\label{eq:bulk-StVK-strain-stress}
 \bulkquantity{\bmE}\coloneqq\tfrac12(\bmFb^\trans\bmFb-\bmIref),\qquad
 \bulkquantity{\bmS}\coloneqq\lambda\operatorname{tr}(\bulkquantity{\bmE})\bmIref
                         +2\mu\bulkquantity{\bmE}.
\end{equation}
The energy and bulk stress are
\begin{equation}\label{eq:bulk-StVK-energy-stress}
 \psib=\frac{\lambda}{2}[\operatorname{tr}(\bulkquantity{\bmE})]^2
      +\mu\bulkquantity{\bmE}:\bulkquantity{\bmE},\qquad
 \bmPb=\bmFb\bulkquantity{\bmS}.
\end{equation}
Equation~\eqref{eq:bulk-StVK-energy-stress} applies to two- or three-dimensional gradients. With the same Lam\'e constants, its two-dimensional form is the plane-strain restriction of the three-dimensional model.
For the two-dimensional energy, the Lam\'e constants are parameterized by the two-dimensional Poisson ratio $\nu$, with $\lambda=2\mu\nu/(1-\nu)$. Under uniaxial loading with traction-free transverse boundaries, the transverse bulk stretch $\lambda_y$ satisfies $\lambda_y^2=1-\nu(\Fb^2-1)>0$. Eliminating it gives
\begin{equation}\label{eq:bulk-StVK-uniaxial}
 \psib(\Fb)=\frac{\mu(1+\nu)}4(\Fb^2-1)^2,
 \qquad P=\mu(1+\nu)\Fb(\Fb^2-1).
\end{equation}

\section{Geometric Piola stress, direction-conjugate force, and angular momentum balance}\label{app:geometric-Piola-stress}
\subsection{Variation of the constrained energy and geometric stress}
{To retain normal transport in the energy variation, set $\bmm=\bmF^{-\trans}\bmN$ at fixed $\omega>0$. Then}
\begin{equation}\label{eq:normal-full-variation}
 \delta\bmn
 =-\bmItan\bmF^{-\trans}(\delta\bmF)^\trans\bmn
  +\frac{\bmItan\bmF^{-\trans}\delta\bmN}{\lVert\bmm\rVert}.
\end{equation}
The unit reference direction satisfies $\bmN\cdot\delta\bmN=0$. Using the constrained density $\psilambda=\psi+\lambdaperp\wperp$ and generalized force $\bmq$ from the main text, $\delta\bmFb=\delta\bmF-\delta\bmw\otimes\bmN-\bmw\otimes\delta\bmN$ gives
\begin{equation}\label{eq:appendix-first-variation}
 \delta\psilambda={}\bmPb:\delta\bmF
 +(\bmKomega\bmw-\bmPb\bmN+\lambdaperp\bmn)\cdot\delta\bmw
 -(\bmPb^\trans\bmw)\cdot\delta\bmN
 +\bmq\cdot\delta\bmn+\wperp\delta\lambdaperp.
\end{equation}
At fixed $\bmN$, the work associated with changes in the normal is
\[
 \bmq\cdot\delta\bmn
 =-\left[\bmn\otimes(\bmF^{-1}\bmqparallel)\right]:\delta\bmF.
\]
The coefficients of $\delta\bmF$ and $\delta\bmw$ therefore give the geometric stress~\eqref{eq:geometric-Piola-stress} and local traction balance~\eqref{eq:w-equilibrium}, respectively.

\subsection{Direction-conjugate force}
{Collecting the $\delta\bmN$ terms in Eq.~\eqref{eq:appendix-first-variation} gives the direction-conjugate force}
\begin{equation}\label{eq:direction-conjugate-force}
 \bmQ=\partial_{\bmN}\psilambda
 =-\bmPb^\trans\bmw
  +\frac{\bmF^{-1}\bmqparallel}{\lVert\bmF^{-\trans}\bmN\rVert}.
\end{equation}
Stationarity on the unit sphere is $(\bmIref-\bmN\otimes\bmN)\bmQ=\mathbm0$ and involves only the tangential projection of the direction-conjugate force. On the closed branch, $\wperp=0$ gives $\bmqparallel=\lambdaperp\bmwparallel$. Thus contact under compression with tangential sliding can generate geometric stress even for $\kperp=\kparallel$.

\subsection{Angular momentum balance}\label{app:angular-momentum}
{For $\omega>0$ on a differentiable branch, objectivity under any rotation $\bmR\in\mathrm{SO}(d)$ gives}
\[
 \psilambda(\bmR\bmF,\bmR\bmw,\bmN,\omega;\lambdaperp)
 =\psilambda(\bmF,\bmw,\bmN,\omega;\lambdaperp).
\]
Differentiating along $\bmR(s)=\exp(s\mathbm W)$ gives, for every skew-symmetric tensor $\mathbm W$,
\begin{equation}\label{eq:angular-momentum-identity}
 \left[\bmP\bmF^\trans
  +\left(\frac{\partial\psilambda}{\partial\bmw}\right)\otimes\bmw\right]
 :\mathbm W=0.
\end{equation}
{Local equilibrium gives $\partial_{\bmw}\psilambda=\mathbm0$, proving symmetry of $\bmP\bmF^\trans$ and the full Cauchy stress~\eqref{eq:cauchy-symmetry}.}

\section{Principal-plane reduction of three-dimensional crack directions}
\label{app:crack-direction-reduction}
{To prove Proposition~\ref{prop:crack-direction-plane-reduction}, we replace any candidate normal by a principal-plane normal without increasing the energy. Fix $\bmF$ and $\omega>0$ under assumption~\eqref{eq:bulk-plane-reduction-assumption}, and define}
\begin{equation}\label{eq:direction-plane-invariants}
 s\coloneqq\lVert\bmF^{-\trans}\bmN\rVert,
 \qquad r\coloneqq\lVert\bmItan\bmF\bmN\rVert.
\end{equation}
Since $\bmF^{\trans}\bmn=\bmN/s$ gives $\bmn\cdot\bmF\bmN=1/s$, decompose the crack strain $\bmw$ as $\bmw=\wperp\bmn+\bmwparallel$ and write $z\coloneqq\lVert\bmwparallel\rVert$. Then
\begin{equation}\label{eq:direction-plane-bulk-invariants}
 \Jb=J(1-s\wperp),
 \qquad
 I_1=\bmF:\bmF-\frac{2\wperp}{s}+(\wperp)^2+z^2
       -2\bmwparallel\cdot\bmItan\bmF\bmN.
\end{equation}
The admissible domain is thus $0\leq\wperp<1/s$. At fixed $s$, $\wperp$, and $z$, the Cauchy--Schwarz inequality gives
$\bmwparallel\cdot\bmItan\bmF\bmN\leq zr$.
Since $\psibhat$ is nondecreasing in $I_1$, aligning $\bmwparallel$ with $\bmItan\bmF\bmN$ cannot increase the energy. If $r=0$, the tangential direction does not affect the energy. The reduced energy can therefore be written as
\begin{equation}\label{eq:direction-plane-scalar-energy}
 \psired(\bmF,\bmN,\omega)=\mathcal G(s,r),
\end{equation}
where
\begin{equation}\label{eq:direction-plane-scalar-minimization}
 \mathcal G(s,r)=
 \min_{\substack{0\leq\wperp<1/s\\z\geq0}}
 \psibhat\!\left(
 \bmF:\bmF-\frac{2\wperp}{s}+(\wperp)^2+z^2-2zr,
 J(1-s\wperp)\right)
 +\frac{\ell}{2\omega}
 \left[\kperp(\wperp)^2+\kparallel z^2\right].
\end{equation}
For each admissible pair $(\wperp,z)$, energy is nonincreasing in $r$ at fixed $s$; minimization preserves this monotonicity.

We now construct a direction in the plane of the largest and smallest principal stretches with the same $s$ and no smaller $r$. Let
$x_i\coloneqq(\bmN\cdot\bmN_i)^2$ and $a_i\coloneqq\lambda_i^{-2}$, so that $x_i\geq0$, $\sum_i x_i=1$, and
\begin{equation}\label{eq:direction-plane-spectral-invariants}
 s^2=\sum_{i=1}^3a_i x_i,
 \qquad
 r^2=\sum_{i=1}^3\frac{x_i}{a_i}-\frac1{s^2}.
\end{equation}
For $a_1<a_3$, let $\alpha\coloneqq(a_3-a_2)/(a_3-a_1)\in[0,1]$, giving $a_2=\alpha a_1+(1-\alpha)a_3$. Construct
\begin{equation}\label{eq:direction-endpoint-replacement}
 x'_1=x_1+\alpha x_2,
 \qquad x'_2=0,
 \qquad x'_3=x_3+(1-\alpha)x_2.
\end{equation}
The weights in Eq.~\eqref{eq:direction-endpoint-replacement} define a unit direction $\bmN'\in\operatorname{span}\{\bmN_1,\bmN_3\}$ with $s'=s$. Convexity of $a\mapsto1/a$ gives
\begin{equation}
 (r')^2-r^2
 =x_2\left[\frac\alpha{a_1}+\frac{1-\alpha}{a_3}-\frac1{a_2}\right]
 \geq0.
\end{equation}
Monotonicity of $\mathcal G$ then implies
\begin{equation}\label{eq:direction-plane-energy-dominance}
 \psired(\bmF,\bmN',\omega)
 =\mathcal G(s,r')\leq\mathcal G(s,r)
 =\psired(\bmF,\bmN,\omega).
\end{equation}
Applying the replacement to an optimal direction proves the principal-plane reduction. Dependence on squared principal components in Eq.~\eqref{eq:direction-plane-spectral-invariants} gives the interval~\eqref{eq:N-principal-plane}. If all principal stretches coincide, $s$ and $r$ are direction-independent; partially repeated stretches are covered by the same construction.

\section{Variational derivation of damage evolution and initiation}\label{app:damage-inequality}
\subsection{Weak form of irreversibility}
{At fixed $\bmchi$ and $\bmN$, with $\bmw$ condensed, irreversible minimization gives the variational inequality}
\begin{equation}\label{eq:damage-minimum-conditions}
 \firstvariation{\alpha}\energyred[\eta_\alpha-\alpha]\geq0,
 \qquad\alpha\in\{\omega,\phi\}.
\end{equation}
{Here only $\alpha$ varies; comparison fields obey $\eta_\omega\geq\omega_{\mathrm{prev}}$, $\phi_{\mathrm{prev}}\leq\eta_\phi\leq1$, the boundary data, and finite-energy requirements. For the phase-field,}
\begin{equation}\label{eq:phi-weak-inequality}
\firstvariation{\phi}\energyred[\eta-\phi]
 =\frac{\Gc}{\ell}\int_{\mathcal B}
 \left\{\left[\Omega(\phi)\Omega'(\phi)-2(1-\phi)\omega\right](\eta-\phi)
 +\ell^2\nabla\phi\cdot\nabla(\eta-\phi)\right\}\,\dd V\geq0.
\end{equation}
{For admissible $\eta\in H^1(\mathcal B)$ and smooth fields, integration by parts gives}
\begin{equation}\label{eq:phi-weak-boundary-form}
 \int_{\mathcal B}(\variationalderivative{\energyred}{\phi})(\eta-\phi)\,\dd V
 +\Gc\ell\int_{\partial\mathcal B}
 (\nabla\phi\cdot\bmnu)(\eta-\phi)\,\dd A\geq0.
\end{equation}
On a boundary portion with no prescribed phase-field value and $\phi<1$, local admissible variations give $\nabla\phi\cdot\bmnu\geq0$ and $(\nabla\phi\cdot\bmnu)(\phi-\phi_{\mathrm{prev}})=0$. Under smooth continuous evolution, these correspond to the rate conditions \citep{pham2011approximate} in Eq.~\eqref{eq:phi-boundary-KKT}.

The incremental condition for $\omega$ is $\firstvariation{\omega}\energyred[\eta-\omega]=\int_{\mathcal B}(\variationalderivative{\energyred}{\omega})(\eta-\omega)\,\dd V\geq0$, where $\eta\geq\omega_{\mathrm{prev}}$. For sufficiently regular derivative densities and loading paths, these inequalities yield the pointwise KKT conditions in the main text.

\subsection{Derivative of the reduced energy with respect to \texorpdfstring{$\omega$}{omega}}
{At fixed $\bmF$ and $\bmN$, a differentiable branch of minimizers $\bmw(\omega)$ for $\omega>0$ gives}
\begin{equation}\label{eq:reduced-energy-chain-rule}
\frac{\partial\psired}{\partial\omega}
 =\frac{\partial\psi}{\partial\omega}
  +\frac{\partial\psi}{\partial\bmw}\cdot\frac{\partial\bmw}{\partial\omega}
 =\frac{\partial\psic}{\partial\omega}
  -\lambdaperp\bmn\cdot\frac{\partial\bmw}{\partial\omega}
 =\frac{\partial\psic}{\partial\omega}.
\end{equation}
Local equilibrium gives $\partial_{\bmw}\psi=-\lambdaperp\bmn$. The contact term vanishes on both branches: $\lambdaperp=0$ under opening, while closure at fixed $\bmn$ gives $\bmn\cdot\partial_\omega\bmw=0$.

Traction balance and contact complementarity give $\bmn\cdot\bmKomega\bmw=\positivepart{(\Tperp)}$ and $\bmItan\bmKomega\bmw=\bmTparallel$. Hence
\begin{equation}\label{eq:direct-strength-derivative}
 -\frac{\partial\psired}{\partial\omega}
 =\frac{\ell}{2\omega^2}
   \left[\kperp(\wperp)^2+\kparallel\lVert\bmwparallel\rVert^2\right]
 =\frac{\Gc}{\ell}
   \left[\positivepartexpr{\frac{\Tperp}{\Tperpc}}^2
         +\left(\frac{\Tparallel}{\Tparallelc}\right)^2\right].
\end{equation}
{Adding the fracture-energy derivative yields Eq.~\eqref{eq:reduced-strength-relation}; the zero-damage limit follows below.}

\subsection{One-sided energy expansion and direction selection at initiation}\label{app:damage-initiation}
{The one-sided energy expansion determines direction selection at initiation. Fix $\det\bmF>0$ and $\psib\in C^2$ near $\bmF$. Let $\mathbm v_\omega$ minimize Eq.~\eqref{eq:w-reduced-energy}, with $\mathbm v_\omega\to\mathbm0$ as $\omega\to0^+$. For direction selection, assume this convergence is uniform in $\bmN\in\mathbb S^{d-1}$; $\partial_\omega^+$ denotes the right derivative at zero.}

Let $\bmT_0=\bmPb(\bmF)\bmN$, with normal component $T_0^{\normalmode}=\bmn\cdot\bmT_0$ and tangential magnitude $T_0^{\tangentialmode}=\lVert\bmItan\bmT_0\rVert$. The local bulk-energy expansion is
\[
 \psib(\bmF-\mathbm v\otimes\bmN)
 =\psib(\bmF)-\bmT_0\cdot\mathbm v+O(\lVert\mathbm v\rVert^2).
\]
Comparison with the admissible trial $\mathbm v=\mathbm0$, with $k_{\min}=\min(\kperp,\kparallel)>0$, gives $(\ell k_{\min}/(2\omega)-C)\lVert\mathbm v_\omega\rVert^2\leq\lVert\bmT_0\rVert\lVert\mathbm v_\omega\rVert$, where $C$ is independent of sufficiently small $\omega$. Thus $\mathbm v_\omega=O(\omega)$, the Taylor remainder is $O(\omega^2)$, and orientation preservation holds strictly in this neighborhood.

Minimizing the leading quadratic energy $-\bmT_0\cdot\mathbm v+\ell[\kperp(v^{\normalmode})^2+\kparallel\lVert\mathbm v^{\tangentialmode}\rVert^2]/(2\omega)$ over $v^{\normalmode}\geq0$ gives
\begin{equation}\label{eq:intact-w-asymptotics}
 v_\omega^{\normalmode}
 =\frac{\omega}{\ell\kperp}\positivepartexpr{T_0^{\normalmode}}+o(\omega),
 \qquad
 \mathbm v_\omega^{\tangentialmode}
 =\frac{\omega}{\ell\kparallel}\bmItan\bmT_0+o(\omega).
\end{equation}
Substituting into the reduced energy and using $(\Tperpc)^2=2\Gc\kperp$ and $(\Tparallelc)^2=2\Gc\kparallel$ gives
\begin{equation}\label{eq:intact-directional-energy-expansion}
\psired(\bmF,\bmN,\omega)
 =\psib(\bmF)-\frac{\omega}{2\ell}
   \left[\frac{\positivepartexpr{T_0^{\normalmode}}^2}{\kperp}
         +\frac{(T_0^{\tangentialmode})^2}{\kparallel}\right]+o(\omega).
\end{equation}
By definition of the strength function, the bracket equals $2\Gc[1+\reducedstrength(\bmF,\bmN,0,0)]$, so $-\partial_\omega^+\psired=\Gc[1+\reducedstrength(\bmF,\bmN,0,0)]/\ell$. Since $\psired$ is independent of $\phi$, adding $\Gc(1-\phi)^2/\ell$ for any given phase-field value gives Eq.~\eqref{eq:reduced-strength-relation} at $\omega=0$.

The assumed uniform convergence in $\bmN$ gives a uniform $O(\omega^2)$ remainder in Eq.~\eqref{eq:intact-directional-energy-expansion} by comparison with the leading quadratic minimizer. The leading directional term is therefore $-\Gc\omega\reducedstrength/\ell$. If $\bmN_\omega$ minimizes energy and $\bmNinit$ maximizes intact strength, minimality gives $\reducedstrength(\bmF,\bmN_\omega,0,0)\geq\reducedstrength(\bmF,\bmNinit,0,0)-o(1)$. Compactness and continuity then place every accumulation point of $\bmN_\omega$ among the most critical planes; higher-order terms can select among tied directions.

\section{One-dimensional phase-field profiles and damage-band concentration}\label{app:1d-phase-profile}
\subsection{First integral of the phase-field tails}
{On the loading tails outside $\damageband=(X_-,X_+)$, $\omega=0$ and $\dot\phi>0$. Following \citet{feng2025JMPS}, multiplication of Eq.~\eqref{eq:1d-phase-tail-ode} by $2\phi_{,X}$ gives}
\begin{equation}\label{eq:1d-tail-conserved-quantity}
 \partial_X\!\left[\Omega(\phi)^2-\ell^2\phi_{,X}^2\right]=0.
\end{equation}
{The intact-state conditions $\phi,\phi_{,X}\to0$ and $\Omega(0)=0$ set the constant on each tail to zero, giving}
\begin{equation}\label{eq:1d-tail-first-integral}
 \ell\lvert\phi_{,X}\rvert=\Omega(\phi),
 \qquad X\in\mathcal B_\phi\setminus\overline{\damageband}.
\end{equation}

Since $\Omega(\phi)>0$ for $\phi>0$, the gradient has a fixed sign: positive on the left tail and negative on the right. Thus
\begin{equation}\label{eq:1d-tail-signed-gradient}
 \phi_{,X}=
 \begin{cases}
  \Omega(\phi)/\ell, & X<X_-,\\[0.4em]
  -\Omega(\phi)/\ell, & X>X_+,
 \end{cases}
 \qquad X\in\mathcal B_\phi\setminus\overline{\damageband}.
\end{equation}
Separating variables with boundary value $\phi^*$ gives the distance integral~\eqref{eq:1d-phase-tail}. For example, $\Omega(\phi)=\phi$ yields exponential tails and, together with the constant phase-field inside the band,
\begin{equation}\label{eq:1d-exponential-phase-profile}
 \phi(X)=\phi^*\exp\!\left[-\frac{\operatorname{dist}(X,\damageband)}{\ell}\right].
\end{equation}

{To check irreversibility for general $\Omega$, differentiate Eq.~\eqref{eq:1d-phase-tail} at fixed band and $X$:}
\begin{equation}\label{eq:1d-tail-amplitude-monotonicity}
 \frac{\partial\phi(X;\phi^*)}{\partial\phi^*}
 =\frac{\Omega(\phi(X;\phi^*))}{\Omega(\phi^*)}>0,
 \qquad 0<\phi(X;\phi^*)<\phi^*.
\end{equation}
{Thus $\dot\phi^*>0$ increases $\phi$ pointwise. The distance $\ell\int_\eta^{\phi^*}\dd s/\Omega(s)$ of a fixed level $\eta\in(0,\phi^*)$ from the band also increases, so its enclosed width cannot decrease.}

\subsection{Compatibility at the band boundary and concentration}\label{app:1d-band-concentration}
{Following \citet{feng2025JMPS}, we test compatibility where the plateau meets the tails. Take $0<h<L$, $0<\phi^*<1$, and finite $\omega$, with phase-field loading through the band boundary. Uniform stress and compliance-damage loading give $\phi=\phi^*$ and $\phi_{,X}=0$ inside. Bounded local terms in the weak phase-field equation require}
\begin{equation}\label{eq:1d-band-flux-continuity}
 \jump{\Gc\ell\phi_{,X}}(X_-)=\jump{\Gc\ell\phi_{,X}}(X_+)=0.
\end{equation}
Here the jump is the right-hand limit minus the left-hand limit. Joining the tails~\eqref{eq:1d-tail-signed-gradient} to the zero-gradient plateau instead gives
\begin{equation}\label{eq:1d-band-gradient-incompatibility}
 \jump{\phi_{,X}}(X_-)=\jump{\phi_{,X}}(X_+)=-\frac{\Omega(\phi^*)}{\ell}<0,
\end{equation}
contradicting Eq.~\eqref{eq:1d-band-flux-continuity}: the gradient jump produces a Dirac term that bounded local terms cannot balance. Equilibrium on the localized branch therefore requires concentration, $h\to0^+$.

{At fixed $\Delta>0$, Eqs.~\eqref{eq:1d-concentrated-measures} and~\eqref{eq:1d-concentrated-gradient-measure} then give $\jump{u}(X^*)=\Delta$, while the phase-field retains continuous tails governed by $\ell$.}

\section{Common cohesive laws and their energy constructions}\label{app:cohesive-laws}
In the dimensionless variables of Section~\ref{sec:cohesive-law}, the following softening envelopes and characteristic functions $\Omega$ \citep{feng2021endowing} follow by inverting Eq.~\eqref{eq:mixed-mode-parametric-cohesive-law}.

\subsection{Linear softening}\label{app:cohesive-linear}
The linear cohesive law and its characteristic function are
\begin{equation}\label{eq:cohesive-linear-law}
 \normalizedeffectivetraction
 =\positivepartexpr{1-\normalizedeffectiveopening},
 \qquad \Omega(\phi)=\phi.
\end{equation}
Complete separation occurs at $\normalizedeffectiveopening=1$.

\subsection{Exponential softening}\label{app:cohesive-exponential}
The exponential cohesive law and its characteristic function are
\begin{equation}\label{eq:cohesive-exponential-law}
 \normalizedeffectivetraction
 =\exp(-2\normalizedeffectiveopening),
 \qquad \Omega(\phi)=-\frac12\ln(1-\phi).
\end{equation}
Traction tends to zero asymptotically as separation increases, and the total work of separation is $\Gc$.

\subsection{Bilinear softening}\label{app:cohesive-bilinear}
The bilinear law has two descending segments. Let $a\in(0,1)$ be the ratio of traction at the kink to peak traction, and $b\in(0,1)$ the ratio of separation at the kink to complete separation. Energy normalization gives complete separation at $\normalizedeffectiveopening=1/(a+b)$, hence
\begin{equation}\label{eq:cohesive-bilinear-law}
 \normalizedeffectivetraction=
 \begin{cases}
  1-\dfrac{(1-a)(a+b)}{b}\normalizedeffectiveopening,
    &0\leq\normalizedeffectiveopening\leq\dfrac{b}{a+b},\\[0.7em]
  \dfrac{a}{1-b}\left[1-(a+b)\normalizedeffectiveopening\right],
    &\dfrac{b}{a+b}<\normalizedeffectiveopening\leq\dfrac1{a+b},\\[0.7em]
  0,&\normalizedeffectiveopening>\dfrac1{a+b}
 \end{cases}.
\end{equation}
The corresponding characteristic function given by \citet{chen2025high} is
\begin{equation}\label{eq:cohesive-bilinear-Omega}
 \Omega(\phi)=
 \begin{cases}
  \dfrac{b\phi}{(1-a)(a+b)},&0\leq\phi\leq1-a,\\[0.7em]
  \dfrac{a+b-1+(1-b)\phi}{a(a+b)},&1-a<\phi\leq1
 \end{cases}.
\end{equation}
The condition $a+b\leq1$ ensures convexity of the softening envelope and $\Omega^2$. For $a+b=1$, the law reduces to linear softening. A piecewise-smooth representation is used at the kink.

\subsection{PPR polynomial softening}\label{app:cohesive-PPR}
In its extrinsic pure-mode-I form, with no initial elastic branch, the PPR model of \citet{park2009unified} gives a family of power-law softening curves. The envelope and characteristic function are
\begin{equation}\label{eq:cohesive-PPR-law}
 \normalizedeffectivetraction
 =\positivepartexpr{1-\frac{2\normalizedeffectiveopening}{\alpha}}^{\alpha-1},
 \qquad
 \Omega(\phi)=\frac\alpha2
 \left[1-(1-\phi)^{1/(\alpha-1)}\right],
\end{equation}
where $\alpha\geq2$ is the shape parameter; this range ensures convexity of $\Omega^2$. Complete separation occurs at $\normalizedeffectiveopening=\alpha/2$. The cases $\alpha=2$ and $\alpha=3$ give linear and quadratic softening, respectively. The PPR model supplies the pure-mode softening shape; the present model governs mixed-mode coupling.

\subsection{Fixed quadratic phase-field term and the \texorpdfstring{$p$}{p}-model}\label{app:cohesive-p-model}
{To calibrate softening with a fixed quadratic phase-field term, replace $(1-\phi)^2$ in Eq.~\eqref{eq:fracture-energy-density} by $\damagecoupling(\phi)^2$, with $\damagecoupling(0)=1$, $\damagecoupling(1)=0$, and $\damagecoupling'<0$ for $0\leq\phi<1$.} Following \citet{feng2025JMPS}, we obtain the concentrated-state equilibrium relations
\begin{equation}\label{eq:cohesive-alternative-parametrization}
 \normalizedeffectivetraction=\damagecoupling(\phi^*),
 \qquad
 \normalizedeffectiveopening=-\frac{\Omega(\phi^*)}{\damagecoupling'(\phi^*)}.
\end{equation}
One may therefore fix $\damagecoupling=1-\phi$ and construct $\Omega$, or fix $\Omega=\phi$ and construct $\damagecoupling$. For a target envelope $\normalizedeffectivetraction=Q(\normalizedeffectiveopening)$, the latter construction satisfies
\begin{equation}\label{eq:cohesive-fixed-Omega-calibration}
 \phi^2=2\int_{\damagecoupling(\phi)}^1 Q^{-1}(s)\,\dd s,
\end{equation}
where $Q^{-1}$ is the inverse of the strictly monotone softening branch, with endpoint values understood as limits. Both constructions give the same softening envelope and work of separation in the concentrated limit, but their phase-field profiles and finite-mesh responses can differ.

{The $p$-model of \citet{feng2025JMPS} takes $\Omega=\phi$ and $\damagecoupling=(1-\phi)^p$, $p\geq1$. With the same recoverable energy $\psi$, its fracture-energy density is}
\begin{equation}\label{eq:p-model-energy-density}
 \varphif_p(\phi,\nabla\phi,\omega)
 =\frac{\Gc}{2\ell}\left[\phi^2+\ell^2\lvert\nabla\phi\rvert^2\right]
 +\frac{\Gc}{\ell}(1-\phi)^{2p}\omega.
\end{equation}
The total internal-energy density is $\psi+\varphif_p$. Its cohesive law is
\begin{equation}\label{eq:p-model-cohesive-law}
 \normalizedeffectivetraction=(1-\phi^*)^p,
 \qquad
 \normalizedeffectiveopening=\frac{\phi^*}{p}(1-\phi^*)^{1-p}.
\end{equation}
The same law corresponds, in the $\Omega^2$ form used in the main text, to
\begin{equation}\label{eq:p-model-equivalent-Omega}
 \Omega_p(\phi)=\frac1p
 \left[1-(1-\phi)^{1/p}\right](1-\phi)^{(1-p)/p}.
\end{equation}
Each representation uses its own phase-field parameter $\phi^*$. Fixing $\Omega=\phi$ retains a quadratic local phase-field term and exponential tails, simplifying computation. The case $p=1$ gives linear softening; for $p>1$, traction vanishes asymptotically, unlike the finite complete-separation opening of the PPR law.

\section*{Declaration of generative AI and AI-assisted technologies in the manuscript preparation process}
During preparation of the manuscript, the author used ChatGPT and Codex (OpenAI) and Gemini (Google) for English language editing. The author reviewed and revised the text and takes full responsibility for its content.

\bibliographystyle{elsarticle-harv}
\bibliography{document}

@article{barenblatt1962mathematical,



  title={The mathematical theory of equilibrium cracks in brittle fracture},



  author={Barenblatt, G. I.},



  journal={Advances in Applied Mechanics},



  volume={7},



  number={1},



  pages={55--129},



  year={1962}



}

@book{simo2006computational,

  title={Computational Inelasticity},

  author={Simo, J.C. and Hughes, T.J.R.},

  isbn={9780387227634},

  lccn={97026427},

  series={Interdisciplinary Applied Mathematics},

  year={2006},

  publisher={Springer New York}

}

@article{camacho1996computational,



  title={Computational modelling of impact damage in brittle materials},



  author={Camacho, Godofredo T and Ortiz, Michael},



  journal={International Journal of solids and structures},



  volume={33},



  number={20-22},



  pages={2899--2938},



  year={1996},



  publisher={Elsevier}



}

@article{karma2005crack,
  title={Crack path prediction in anisotropic brittle materials},
  author={Hakim, Vincent and Karma, Alain},
  journal={Physical review letters},
  volume={95},
  number={23},
  pages={235501},
  year={2005},
  publisher={APS},
  doi={10.1103/PhysRevLett.95.235501}
}

@article{hakim2009laws,
  title={Laws of crack motion and phase-field models of fracture},
  author={Hakim, Vincent and Karma, Alain},
  journal={Journal of the Mechanics and Physics of Solids},
  volume={57},
  number={2},
  pages={342--368},
  year={2009},
  publisher={Elsevier},
  doi={10.1016/j.jmps.2008.10.012}
}

@article{wu2017unified,



  title={A unified phase-field theory for the mechanics of damage and quasi-brittle failure},



  author={Wu, Jian-Ying},



  journal={Journal of the Mechanics and Physics of Solids},



  volume={103},



  pages={72--99},



  year={2017},



  publisher={Elsevier}



}

@article{pham2011approximate,



author = {Kim Pham and Hanen Amor and Jean-Jacques Marigo and Corrado Maurini},



title ={Gradient Damage Models and Their Use to Approximate Brittle Fracture},



journal = {International Journal of Damage Mechanics},



volume = {20},



number = {4},



pages = {618-652},



year = {2011}



}

@article{bourdin2025variational,
  title={A variational approach to fracture incorporating any convex strength criterion},
  author={Bourdin, Blaise and Marigo, Jean-Jacques and Maurini, Corrado and Zolesi, Camilla},
  journal={arXiv preprint arXiv:2506.22558},
  year={2025}
}

@article{lorentz2011convergence,
  title={Convergence of a gradient damage model toward a cohesive zone model},
  author={Lorentz, Eric and Cuvilliez, Sam and Kazymyrenko, Kyrylo},
  journal={Comptes Rendus M\'ecanique},
  volume={339},
  number={1},
  pages={20--26},
  year={2011},
  doi={10.1016/j.crme.2010.10.010}
}

@article{lorentz2012modelling,



  title={Modelling large crack propagation: from gradient damage to cohesive zone models},
  doi={10.1007/s10704-012-9746-7},



  author={Lorentz, Eric and Cuvilliez, Sam and Kazymyrenko, Kyrylo},



  journal={International journal of fracture},



  volume={178},



  number={1-2},



  pages={85--95},



  year={2012},



  publisher={Springer}



}

@article{FRANCFORT19981319,



title = "Revisiting brittle fracture as an energy minimization problem",



journal = "Journal of the Mechanics and Physics of Solids",



volume = "46",



number = "8",



pages = "1319 - 1342",



year = "1998",



issn = "0022-5096",



author = "G.A. Francfort and J.-J. Marigo"



}

@article{feng_Li2026,
title = {Convergence of phase-field models with emergent discontinuities to {CZM}s in multidimensional space: A proof via cohesive laws and sharp-crack energy limits},
journal = {Journal of the Mechanics and Physics of Solids},
volume = {215},
pages = {106705},
year = {2026},
issn = {0022-5096},
doi = {10.1016/j.jmps.2026.106705},
author = {Ye Feng and Jie Li}
}

@article{feng20253d,
  title={{3D} phase-field cohesive fracture: Unifying energy, driving force, and stress criteria for crack nucleation and propagation direction}, 
  author={Feng, Ye and Hai, Lu},
   journal={Journal of the Mechanics and Physics of Solids},
   volume={196}, 
   pages={106036},
    year={2025},
     publisher={Elsevier} }

@article{bourdin2000numerical,



  title={Numerical experiments in revisited brittle fracture},



  author={Bourdin, Blaise and Francfort, Gilles A and Marigo, Jean-Jacques},



  journal={Journal of the Mechanics and Physics of Solids},



  volume={48},



  number={4},



  pages={797--826},



  year={2000},



  publisher={Elsevier}



}

@book{braides2002gamma,
  author = {Braides, Andrea},
  title = {{$\Gamma$}-Convergence for Beginners},
  year = {2002},
  publisher = {Oxford University Press},
  address = {Oxford},
  doi = {10.1093/acprof:oso/9780198507840.001.0001}
}

@article{rorentrop2026computationally,
  title={A computationally efficient monolithic arc-length method for phase-field models of fracture based on the Efendiev \& Mielke scheme},
  author={R{\"o}rentrop, Felix and Langenfeld, Kai and Mosler, J{\"o}rn},
  journal={Computer Methods in Applied Mechanics and Engineering},
  volume={458},
  pages={119005},
  year={2026},
  doi={10.1016/j.cma.2026.119005},
  publisher={Elsevier}
}

@article{rorentrop2024time,
  author={R{\"o}rentrop, Felix and Boddin, Samira and Knees, Dorothee and Mosler, J{\"o}rn},
  title={A time-adaptive finite element phase-field model suitable for rate-independent fracture mechanics},
  journal={Computer Methods in Applied Mechanics and Engineering},
  volume={431},
  pages={117240},
  year={2024},
  doi={10.1016/j.cma.2024.117240}
}

@article{storvik2021accelerated,
  title={An accelerated staggered scheme for variational phase-field models of brittle fracture},
  author={Storvik, Erlend and Both, Jakub Wiktor and Sargado, Juan Michael and Nordbotten, Jan Martin and Radu, Florin Adrian},
  journal={Computer Methods in Applied Mechanics and Engineering},
  volume={381},
  pages={113822},
  year={2021},
  publisher={Elsevier}
}

@article{lampron2021efficient,
  title={An efficient and robust monolithic approach to phase-field quasi-static brittle fracture using a modified Newton method},
  author={Lampron, Olivier and Therriault, Daniel and Levesque, Martin},
  journal={Computer Methods in Applied Mechanics and Engineering},
  volume={386},
  pages={114091},
  year={2021},
  publisher={Elsevier}
}

@article{farrell2017linear,
  title={Linear and nonlinear solvers for variational phase-field models of brittle fracture},
  author={Farrell, Patrick and Maurini, Corrado},
  journal={International Journal for Numerical Methods in Engineering},
  volume={109},
  number={5},
  pages={648--667},
  year={2017},
  publisher={Wiley Online Library}
}

@article{jiang2026schur,
  title={Schur-complement-enhanced iterative solvers for coupled problems and phase field fracture models},
  author={Jiang, Juhan and Zhao, Yonghui and Lyu, Bing and Shen, Yongxing},
  journal={Computer Methods in Applied Mechanics and Engineering},
  volume={449},
  pages={118509},
  year={2026},
  publisher={Elsevier},
  doi={10.1016/j.cma.2025.118509}
}

@article{Braides1995,
author = {Bouchitté, Guy and Braides, Andrea and Buttazzo, Giuseppe},
journal = {Journal für die reine und angewandte Mathematik},
pages = {1-18},
title = {Relaxation results for some free discontinuity problems},
doi = {10.1515/crll.1995.458.1},
url = {http://eudml.org/doc/153676},
volume = {458},
year = {1995},
}

@article{rodella2026sharp,
title = {Sharp-interface cohesive fracture models with consistent bulk energies: Numerical investigations},
journal = {Journal of the Mechanics and Physics of Solids},
volume = {211},
pages = {106543},
year = {2026},
issn = {0022-5096},
author = {Rodella, Andrea and Marigo, Jean-Jacques and Maurini, Corrado and Vidoli, Stefano},
doi = {10.1016/j.jmps.2026.106543}
}

@book{bertsekas1999nonlinear,
  title={Nonlinear Programming},
  author={Bertsekas, Dimitri P.},
  year={1999},
  edition={2nd},
  publisher={Athena Scientific},
  address={Belmont, MA},
  isbn={978-1-886529-00-5}
}

@article{FREDDI20101154,



title = "Regularized variational theories of fracture: A unified approach",



journal = "Journal of the Mechanics and Physics of Solids",



volume = "58",



number = "8",



pages = "1154 - 1174",



year = "2010",



issn = "0022-5096",



author = "Francesco Freddi and Gianni Royer-Carfagni",

doi = "10.1016/j.jmps.2010.02.010"



}

@article{verhoosel2013phase,



  title={A phase-field model for cohesive fracture},



  author={Verhoosel, Clemens V and de Borst, Ren{\'e}},



  journal={International Journal for Numerical Methods in Engineering},



  volume={96},



  number={1},



  pages={43--62},



  year={2013},



  publisher={Wiley Online Library}



}

@article{CONTI20161033,



title = "Phase field approximation of cohesive fracture models",



journal = "Annales de l'Institut Henri Poincaré C, Analyse non linéaire",



volume = "33",



number = "4",



pages = "1033 - 1067",



year = "2016",



issn = "0294-1449",



author = "S. Conti and M. Focardi and F. Iurlano"



}

@article{feng2021endowing,
  title={Endowing explicit cohesive laws to the phase-field fracture theory},
  author={Feng, Ye and Fan, Jiadi and Li, Jie},
  journal={Journal of the Mechanics and Physics of Solids},
  volume={152},
  pages={104464},
  year={2021},
  publisher={Elsevier}
}

@article{feng2023unified,



  title={A unified regularized variational cohesive fracture theory with directional energy decomposition},



  author={Feng, Ye and Li, Jie},



  journal={International Journal of Engineering Science},



  volume={182},



  pages={103773},



  year={2023},



  publisher={Elsevier}



}

@article{chen2025high,
  title={A high-accuracy phase field model for complex cohesive law},
  author={Chen, Pengcheng and Feng, Ye and Ma, Yu E},
  journal={Engineering Fracture Mechanics},
  volume={315},
  pages={110765},
  year={2025},
  publisher={Elsevier}
}

@article{zhang2005unified,
  title={Unified tensile fracture criterion},
  author={Zhang, ZF and Eckert, J},
  journal={Physical review letters},
  volume={94},
  number={9},
  pages={094301},
  year={2005},
  doi={10.1103/PhysRevLett.94.094301},
  publisher={APS}
}

@article{huber2024length,
  title={Length Scale Insensitive Phase-Field Fracture Methodology for Brittle and Ductile Materials},
  author={Huber, William and Zaeem, Mohsen Asle},
  journal={Theoretical and Applied Fracture Mechanics},
  volume={133},
  pages={104500},
  year={2024},
  doi={10.1016/j.tafmec.2024.104500},
  publisher={Elsevier}
}

@article{griffith1921vi,



  title={The phenomena of rupture and flow in solids},



  author={Griffith, Alan Arnold},



  journal={Philosophical transactions of the royal society of london. Series A, containing papers of a mathematical or physical character},



  volume={221},



  number={582-593},



  pages={163--198},



  year={1921},



  publisher={The royal society London}



}

@article{ren2024physically,
  title={Physically consistent nonlocal macro--meso-scale damage model for quasi-brittle materials: A unified multiscale perspective},
  author={Ren, Yudong and Lu, Guangda and Chen, Jianbing},
  journal={International Journal of Solids and Structures},
  volume={293},
  pages={112738},
  year={2024},
  publisher={Elsevier}
}

@article{schmidt2009eigenfracture,
  author={Schmidt, Bernd and Fraternali, Fernando and Ortiz, Michael},
  title={Eigenfracture: An eigendeformation approach to variational fracture},
  journal={Multiscale Modeling \& Simulation},
  volume={7},
  number={3},
  pages={1237--1266},
  year={2009},
  doi={10.1137/080712568}
}

@article{feng2025JMPS,
title = {Phase-field cohesive fracture models with strong displacement discontinuities},
journal = {Journal of the Mechanics and Physics of Solids},
volume = {208},
pages = {106479},
year = {2026},
doi = {10.1016/j.jmps.2025.106479},
issn = {0022-5096},
author = {Ye Feng and Lu Hai}
}

@article{park2009unified,
  title={A unified potential-based cohesive model of mixed-mode fracture},
  author={Park, Kyoungsoo and Paulino, Glaucio H. and Roesler, Jeffery R.},
  journal={Journal of the Mechanics and Physics of Solids},
  volume={57},
  number={6},
  pages={891--908},
  year={2009},
  doi={10.1016/j.jmps.2008.10.003}
}

@article{vicentini2025variational,
  title={Variational phase-field modeling of cohesive fracture with flexibly tunable strength surface},
  author={Vicentini, Francesco and Heinzmann, Jonas and Carrara, Pietro and De Lorenzis, Laura},
  journal={Journal of the Mechanics and Physics of Solids},
  pages={106424},
  year={2026},
  volume={207},
  doi={10.1016/j.jmps.2025.106424},
  publisher={Elsevier}
}

@article{armero1996analysis,
  title={An analysis of strong discontinuities in multiplicative finite strain plasticity and their relation with the numerical simulation of strain localization in solids},
  author={Armero, F. and Garikipati, K.},
  journal={International Journal of Solids and Structures},
  volume={33},
  number={20--22},
  pages={2863--2885},
  year={1996},
  doi={10.1016/0020-7683(95)00257-X},
  publisher={Elsevier}
}

@article{oliver2003strong,
  title={On the strong discontinuity approach in finite deformation settings},
  author={Oliver, J. and Huespe, A. E. and Pulido, M. D. G. and Samaniego, E.},
  journal={International Journal for Numerical Methods in Engineering},
  volume={56},
  number={7},
  pages={1051--1082},
  year={2003},
  doi={10.1002/nme.607},
  publisher={Wiley}
}

@article{ortiz1999finite,
  title={Finite-deformation irreversible cohesive elements for three-dimensional crack-propagation analysis},
  author={Ortiz, Michael and Pandolfi, Anna},
  journal={International Journal for Numerical Methods in Engineering},
  volume={44},
  number={9},
  pages={1267--1282},
  year={1999},
  doi={10.1002/(SICI)1097-0207(19990330)44:9<1267::AID-NME486>3.0.CO;2-7},
  publisher={Wiley}
}

@article{tian2025nonlinearity,
  author = {Tian, Fucheng and Gong, Jian Ping},
  title = {Nonlinearity tunes crack dynamics in soft materials},
  journal = {Journal of the Mechanics and Physics of Solids},
  volume = {202},
  pages = {106191},
  year = {2025},
  doi = {10.1016/j.jmps.2025.106191}
}

@article{knowles1977finite,
  author = {Knowles, James K.},
  title = {The finite anti-plane shear field near the tip of a crack for a class of incompressible elastic solids},
  journal = {International Journal of Fracture},
  volume = {13},
  number = {5},
  pages = {611--639},
  year = {1977},
  doi = {10.1007/BF00017296}
}

@article{long2011finite,
  author = {Long, Rong and Krishnan, Venkat R. and Hui, Chung-Yuen},
  title = {Finite strain analysis of crack tip fields in incompressible hyperelastic solids loaded in plane stress},
  journal = {Journal of the Mechanics and Physics of Solids},
  volume = {59},
  number = {3},
  pages = {672--695},
  year = {2011},
  doi = {10.1016/j.jmps.2010.12.005}
}

@article{lee2024size,
  author = {Lee, Jaehee and Lee, Jeongun and Yun, Seounghee and Kim, Sanha and Lee, Howon and Chester, Shawn A. and Cho, Hansohl},
  title = {Size-dependent fracture in elastomers: Experiments and continuum modeling},
  journal = {Physical Review Materials},
  volume = {8},
  pages = {115602},
  year = {2024},
  doi = {10.1103/PhysRevMaterials.8.115602}
}

@book{brent1973algorithms,
  author = {Brent, Richard P.},
  title = {Algorithms for Minimization without Derivatives},
  publisher = {Prentice-Hall},
  address = {Englewood Cliffs, New Jersey},
  year = {1973}
}

@article{feng2026multilevel,
  author = {Feng, Ye and Ren, Yu-Dong and Lu, Guang-Da},
  title = {Multi-level adaptive mesh refinement with non-local energy correction for sharp-diffusive phase-field fracture models},
  journal = {Computer Methods in Applied Mechanics and Engineering},
  volume = {462},
  pages = {119311},
  year = {2026},
  doi = {10.1016/j.cma.2026.119311}
}

@article{conti2024vectorial,
  author = {Conti, Sergio and Focardi, Matteo and Iurlano, Flaviana},
  title = {Phase-Field Approximation of a Vectorial, Geometrically Nonlinear Cohesive Fracture Energy},
  journal = {Archive for Rational Mechanics and Analysis},
  volume = {248},
  pages = {21},
  year = {2024},
  doi = {10.1007/s00205-024-01962-4}
}

@article{tian2022mixed,
  author = {Tian, Fucheng and Zeng, Jun and Zhang, Mengnan and Li, Liangbin},
  title = {Mixed displacement--pressure-phase field framework for finite strain fracture of nearly incompressible hyperelastic materials},
  journal = {Computer Methods in Applied Mechanics and Engineering},
  volume = {394},
  pages = {114933},
  year = {2022},
  doi = {10.1016/j.cma.2022.114933}
}

@article{tian2025large,
  author = {Tian, Fucheng and Ji, Jiale and Chen, Shuyu and Li, Liangbin},
  title = {Large deformation fracture in soft materials: Phase-field modeling and applications},
  journal = {Acta Polymerica Sinica},
  volume = {56},
  number = {2},
  pages = {179--199},
  year = {2025},
  doi = {10.11777/j.issn1000-3304.2024.24224}
}

@article{brighenti2011buckling,
  author = {Brighenti, Roberto and Carpinteri, Andrea},
  title = {Buckling and fracture behaviour of cracked thin plates under shear loading},
  journal = {Materials \& Design},
  volume = {32},
  number = {3},
  pages = {1347--1355},
  year = {2011},
  doi = {10.1016/j.matdes.2010.09.018}
}

@article{buehler2003hyperelasticity,
  author = {Buehler, Markus J. and Abraham, Farid F. and Gao, Huajian},
  title = {Hyperelasticity governs dynamic fracture at a critical length scale},
  journal = {Nature},
  volume = {426},
  pages = {141--146},
  year = {2003},
  doi = {10.1038/nature02096}
}

@article{wang2023tensile,
  author = {Wang, Meng and Shi, Songlin and Fineberg, Jay},
  title = {Tensile cracks can shatter classical speed limits},
  journal = {Science},
  volume = {381},
  number = {6656},
  pages = {415--419},
  year = {2023},
  doi = {10.1126/science.adg7693}
}

@article{simo1993analysis,
  author = {Simo, J. C. and Oliver, J. and Armero, F.},
  title = {An analysis of strong discontinuities induced by strain-softening in rate-independent inelastic solids},
  journal = {Computational Mechanics},
  volume = {12},
  number = {5},
  pages = {277--296},
  year = {1993},
  doi = {10.1007/BF00372173}
}

@article{alessi2026gamma,
  author = {Alessi, Roberto and Colasanto, Francesco and Focardi, Matteo},
  title = {Phase-Field Modeling of Cohesive Fracture. Part {I}: {$\Gamma$}-Convergence Results},
  journal = {SIAM Journal on Mathematical Analysis},
  volume = {58},
  number = {1},
  pages = {513--546},
  year = {2026},
  doi = {10.1137/25M1759781}
}

@article{bourdin2026strength,
  author = {Bourdin, Blaise and Maurini, Corrado},
  title = {Strength-degradation phase-field regularization of cohesive fracture: the antiplane case},
  journal = {arXiv preprint arXiv:2607.29157},
  year = {2026},
  doi = {10.48550/arXiv.2607.29157}
}

@article{zhao2025staggered,
  author = {Zhao, Yonghui and Jiang, Juhan and Lyu, Bing and Shen, Yongxing},
  title = {Staggered algorithms for coupled problems: Convergence analysis and application to phase field modeling of thermal cracking},
  journal = {Computer Methods in Applied Mechanics and Engineering},
  volume = {446},
  pages = {118238},
  year = {2025},
  doi = {10.1016/j.cma.2025.118238}
}

@article{cheng2025global,
  author = {Cheng, FengYu and Yu, Hao and Wang, Quan and Huang, HanWei and Xu, WenLong and Wu, HengAn},
  title = {Global-local adaptive meshing method for phase-field fracture modeling},
  journal = {Computer Methods in Applied Mechanics and Engineering},
  volume = {438},
  pages = {117846},
  year = {2025},
  doi = {10.1016/j.cma.2025.117846}
}

@article{cheng2026selfcorrecting,
  author = {Cheng, FengYu and Yu, Hao and Lyu, ChengSi and Li, Bo and Wang, Quan and Poh, Leong Hien and Wu, HengAn},
  title = {A Self-Correcting Adaptive Remeshing Strategy for Three-Dimensional Phase-Field Fracture Simulation},
  journal = {International Journal for Numerical Methods in Engineering},
  volume = {127},
  number = {16},
  pages = {e70402},
  year = {2026},
  doi = {10.1002/nme.70402}
}

@article{verhoosel2009dissipation,
  author = {Verhoosel, C. V. and Remmers, J. J. C. and Guti{\'e}rrez, M. A.},
  title = {A dissipation-based arc-length method for robust simulation of brittle and ductile failure},
  journal = {International Journal for Numerical Methods in Engineering},
  year = {2009},
  volume = {77},
  number = {9},
  pages = {1290--1321},
  doi = {10.1002/nme.2447}
}

@article{bharali2022robust,
  author = {Bharali, Ritukesh and Goswami, Somdatta and Anitescu, Cosmin and Rabczuk, Timon},
  title = {A robust monolithic solver for phase-field fracture integrated with fracture energy based arc-length method and under-relaxation},
  journal = {Computer Methods in Applied Mechanics and Engineering},
  year = {2022},
  volume = {394},
  pages = {114927},
  doi = {10.1016/j.cma.2022.114927}
}

@article{rice1968path,
  author = {Rice, James R.},
  title = {A Path Independent Integral and the Approximate Analysis of Strain Concentration by Notches and Cracks},
  journal = {Journal of Applied Mechanics},
  year = {1968},
  volume = {35},
  number = {2},
  pages = {379--386},
  doi = {10.1115/1.3601206}
}

@article{moran1987crack,
  author = {Moran, Brian and Shih, C. F.},
  title = {Crack tip and associated domain integrals from momentum and energy balance},
  journal = {Engineering Fracture Mechanics},
  year = {1987},
  volume = {27},
  number = {6},
  pages = {615--642},
  doi = {10.1016/0013-7944(87)90155-X}
}

@article{tang2025tough,
  author = {Tang, Jingyuan and Chen, Xi and Liu, Fengkai and Zeng, Liangsong and Suo, Zhigang and Tang, Jingda},
  title = {Why are soft collagenous tissues so tough?},
  journal = {Science Advances},
  year = {2025},
  volume = {11},
  number = {25},
  pages = {eadw0808},
  doi = {10.1126/sciadv.adw0808}
}

@article{chen2017instability,
  author = {Chen, Chih-Hung and Bouchbinder, Eran and Karma, Alain},
  title = {Instability in dynamic fracture and the failure of the classical theory of cracks},
  journal = {Nature Physics},
  volume = {13},
  pages = {1186--1190},
  year = {2017},
  doi = {10.1038/nphys4237}
}

@article{riccobelli2024elastic,
  author = {Riccobelli, D. and Ciarletta, P. and Vitale, G. and Maurini, C. and Truskinovsky, L.},
  title = {Elastic Instability behind Brittle Fracture},
  journal = {Physical Review Letters},
  volume = {132},
  pages = {248202},
  year = {2024},
  doi = {10.1103/PhysRevLett.132.248202}
}

@article{bouchbinder2014dynamics,
  author = {Bouchbinder, Eran and Goldman, Tamar and Fineberg, Jay},
  title = {The dynamics of rapid fracture: instabilities, nonlinearities and length scales},
  journal = {Reports on Progress in Physics},
  volume = {77},
  number = {4},
  pages = {046501},
  year = {2014},
  doi = {10.1088/0034-4885/77/4/046501}
}

@book{dalmaso1993introduction,
  author = {Dal Maso, Gianni},
  title = {An Introduction to {$\Gamma$}-Convergence},
  series = {Progress in Nonlinear Differential Equations and Their Applications},
  volume = {8},
  year = {1993},
  publisher = {Birkh{\"a}user},
  address = {Boston},
  doi = {10.1007/978-1-4612-0327-8}
}
\end{document}